\documentclass[11pt,a4paper]{article}
\usepackage{jheppub}
\usepackage{mathrsfs}
\usepackage{amsmath}
\usepackage{mathtools}
\usepackage{setspace}
\usepackage{amsfonts}
\usepackage{tikz-cd}
\usepackage{url}
\usepackage{amsthm}

\newtheorem{theorem}{Theorem}[section]
\newtheorem{corollary}{Corollary}[theorem]
\newtheorem{lemma}[theorem]{Lemma}
\newtheorem{prop}[theorem]{Proposition}

\theoremstyle{remark}
\newtheorem{defi}{Definition}[section]
\newtheorem{beispiel}{Example}[section]
\newtheorem{bem}{Remark}[section]

\newcommand{\vect}[1]{\ensuremath{\boldsymbol{#1}}}

\newcommand{\extder}{\ensuremath{\mathbf{d}}}

\newcommand{\liephase}{\ensuremath{\boldsymbol{\mathcal{L}}}}
\newcommand{\Sym}{\mathsf{Sym}}
\newcommand{\Gau}{\mathsf{Gau}}
\newcommand{\Asym}{\mathsf{Asym}}

\newcommand{\SU}{\mathrm{SU}}

\newcommand{\vp}{\vec{\phi}}
\newcommand{\va}{\vec{A}}
\newcommand{\vao}{\vec{A}^{(0)}}

\newcommand{\vpi}{\vec{\pi}}

\begin{document}

\abstract{We investigate the asymptotic
symmetry group of a $\SU(2)$-Yang-Mills
theory coupled to a Higgs field in the
Hamiltonian formulation. This extends
previous work on the asymptotic structure 
of pure electromagnetism by Henneaux and 
Troessaert, and on electromagnetism 
coupled to scalar fields and pure 
Yang-Mills fields by Tanzi and Giulini. 
We find that there are no obstructions 
to global electric and magnetic charges,
though that is rather subtle in the 
magnetic case. Again it is the 
Hamiltionian implementation of boost
symmetries that need a careful and 
technically subtle discussion of 
fall-off and parity conditions of all 
fields involved.
}
	
\title{\large Asymptotic Symmetries of $\SU(2)$ Yang-Mills-Higgs Theory in Hamiltonian 
Formulation}
\author[a]{Lena Janshen}\author[b,c]{and Domenico Giulini}
\affiliation[a]{Mathematisches Institut, University of Göttingen, 37073 Göttingen}
\affiliation[b]{Institute for Theoretical Physics, Leibniz University of Hannover, 30167 Hannover}
\affiliation[c]{Center of Applied Space Technology and Microgravity (ZARM), University of Bremen, 28359 Bremen}
\emailAdd{lena.janshen@mathematik.uni-goettingen.de}
\emailAdd{giulini@itp.uni-hannover.de}
\maketitle
\begin{singlespace}
		
\section{Introduction}
\label{sec:introduction}
The study of asymptotic field-structures and 
their associated symmetry groups and charges 
for long-ranging field configurations remains 
one of the most interesting and challenging 
problems in all kinds of geometric field 
theories. The Hamiltonian analysis of these 
structures has been pioneered by Henneaux and
Troessaert in their investigations of various 
field theories, including the pure 
electromagnetic case \cite{Henneaux-ED}. 
They showed how to simultaneously ensure the 
existence of a Hamiltonian phase space,
a proper symplectic two-form, and 
a Hamiltionian action of the Poincar\'e group. 
Their analysis made perfectly 
clear that one has to expect severe
fine-tuning conditions to be imposed 
on the fields and hence the analytical 
characterisation of one's phase space in order 
to meet these conditions. Subsequently 
it was shown by Tanzi and Giulini in 
\cite{Tanzi:2020} that such conditions prevent 
globally charged states to exist in pure 
$\SU(N)$-Yang-Mills theory for $N\geq 2$, which 
is a quite unexpected result. In \cite{Tanzi:2021}
these authors also showed that no such surprises 
exist  in scalar electromagnetism and the 
abelian Higgs model.  

The present paper can be seen as a direct 
continuation of the mentioned work on pure 
Yang-Mills as well as that on electromagnetism 
coupled to scalar fields. The latter led 
to the result that there are no obstructions
to globally charged states in case of 
massive fields or those with Higgs-type
potentials, whereas the massless case 
seems to contain irremovable obstructions. 
Facing this situation, it seems natural to ask 
what happens if non-abelian gauge-fields 
are coupled to scalar fields. This is what we 
shall focus on here in case of $\SU(2)$ with a 
Higgs fields in the adjoint representation. 
It turns out that this case cannot be understood 
by simple generalisations of the previous cases, 
which essentially means that all calculations 
need to be perfomed again from scratch in order  
not to miss out on any of the new terms appearing
in the couplings arising from the Higgs field 
(not present in \cite{Tanzi:2020}) and the 
non-abelian self-coupling (not present in 
\cite{Tanzi:2021}). In addition to previous 
studies, we also attempt to add some mathematical
rigour in the justification of boundary terms of
the relevant functions on phase space%
\footnote{As phase space is infinite dimensional, functions on that space are usually referred to 
as ``functionals'' in the physics literature.},
which are essential in a twofold way: they are 
necessary for the existence of a corresponding 
Hamiltonian flow, and they also influence the 
value of the function on those phase-space points 
corresponding to the long-ranging field 
configurations.

\section{Notation and conventions}
\label{sec:Notation}
For the Minkowski metric we use the $(-,+,+,+)$ 
signature convention. Greek indices $\mu,\nu,...$ 
denote spacetime components. Latin indices 
$i,j,...$ denote spatial components while 
$\bar{a},\bar{b},...$ denote angular components, 
e.g.\ $\bar{a}\in\{\theta,\varphi\}$. Bold 
symbols like $\extder$, $\liephase, \boldsymbol{\wedge}$ denote operators in phase 
space while the corresponding non-bold ones 
denote operators on space or spacetime in the 
ususal fashion. If a function $F$ on phase 
space vanishes identically after restriction 
to the constraint surface $\mathcal{C}$, 
i.e. if $F\big\vert_\mathcal{C}=0$, we will 
sometimes simply write $F\approx 0$ 
(see section\,\ref{Setting}) following 
Dirac's notation from \cite{Dirac-book}.
The notation $\frac{\delta}{\delta\phi}$ will 
be used in the following way:
\begin{align}
    \frac{\delta F}{\delta\phi}:=\frac{\delta^{\text{EL}}f}{\delta^{\text{EL}}\phi},
\end{align}
with the Euler-Lagrange derivative 
$\delta^{\text{EL}}$\footnote{This expression can be interpreted as a gradient if $\extder F$ 
does not involve any boundary terms.}. We also use the notation $\delta$ for other kinds of variations, e.g.\ $\delta_{\vec{\epsilon}}$ for infinitesimal gauge transformations.

We make use of the Bachmann-Landau $O$- and $\text{o}$-notation. For the characterisation
of asymptotic fall-off or growth behaviour we 
shall employ the Bachmann-Landau notation involving the upper case $O$ and lower 
case $\text{o}$. As we will encounter 
asymptotic expansions in the radial parameter 
$r$ with terms proportional to some $r^{-n}$ 
for some $n\in\mathbb{N}$, we can define an operator $\mathcal{O}_{r^{-n}}$ which just 
picks out the term proportional to $r^{-n}$
in that expansion. Likewise we write  $\text{o}(r^{-\mathbb{N}})$ and $\mathcal{O}_{r^{-\mathbb{N}}}$ for a 
function that falls off faster then $r^{-n}$ 
$\forall n\in\mathbb{N}$ and the projection on that fast fall-off part. Finally, we shall use 
the symbol $\rhd$ for the formal action of 
the gauge group on the phase space.

\section{Mathematical setting}
\label{Setting}
In this section we shall essentially follow 
standard notions of Hamiltonian field, 
as e.g. generally outlined in \cite{TanziPHD} and whose notation we shall follow.
From that we recall that the mathematical
structure of a \textit{Hamiltonian system} 
consists of a symplectic manifold, the \textit{phase space} $(\mathbb{P},\Omega)$, together with a class of functions $F\in\text{Obs}\subset C^{\infty}(\mathbb{P})$, the \textit{observables} of the system, with a distinguished observable, the \textit{Hamiltonian function} $H$ which determines the time evolution as the flow of the vector field $X^H$ which, 
in turn, is defined by 
\begin{align}
    \extder H=-i_{X^H}\Omega.\label{Hameq}
\end{align}
Generalizing the equation \eqref{Hameq} to an observable $F$, it defines the corresponding \textit{Hamiltonian vector field} $X^F$. Conversely, given a Hamiltonian vector field 
$X$, the corresponding observable is called \textit{generator}, which in non-gauge theory 
is determined up to an additive constant. In contrast, 
in gauge theory, the Hamiltonian is only defined on 
a submanifold $\mathcal{C}\subset\mathbb{P}$, called the  \textit{constraint surface}, which is defined
% Hier weiter
to be the set of unique-valued Hamiltonian with corresponding (non-unique) Hamiltonian vector fields being tangential to $\mathcal{C}$\footnote{If one finds a submanifold $\mathcal{C}$ that carries a uniquely defined Hamiltonian it can be that the corresponding Hamiltonian vector field is not tangential to $\mathcal{C}$. Then the system is not dynamically consistent and one has to shrink $\mathcal{C}$ until this is the case. It can be that $\mathcal{C}$ is then empty.}. The non-uniqueness of $X^H$ is given by the degeneracy directions of $i_{\mathcal{C}}^*\Omega$. The Hamiltonian vector fields that span the degeneracy directions are called infinitesimal \textit{proper gauge transformations}. The generators of those Hamiltonian vector fields are functions that vanish on $\mathcal{C}$. Physical states are defined to lie in $\mathcal{C}$ and two states that are connected by a gauge transformations are the same physical state.\\

We are considering a Hamiltonian description of a field theory which has infinite degrees of freedom. For the definition of the phase space we need the notion of a infinite dimensional symplectic manifold. A suitable notion is a smooth manifold modelled onto a locally convex vector space. For the class of locally convex vector spaces there still exists a notion of derivatives for functions between those spaces that fulfills usual properties that are necessary to build a smooth structure on the manifold. For details on the definition of smooth manifolds and the so called \textit{Bastiani calculus} I refer to \cite{Neeb}.

In the framework of \cite{Neeb} one can also define differential forms on such manifolds.
\begin{defi}
    A \textit{symplectic form} $\Omega$ on a smooth manifold $\mathbb{P}$ is a differential 2-form with $\extder\Omega=0$ which is \textit{weakly non-degenerate}, i.e.
    \begin{align}
        \forall\,p\in\mathbb{P}:\;\text{if}\;\Omega_p(v_1,v_2)=0\,\forall v_2\in T_p\mathbb{P}\;\Rightarrow\; v_1=0.
    \end{align}
\end{defi}
In distinction to the finite-dimensional case, $\Omega_p$ does in general not induce an isomorphism $\Omega^{\flat}_p:T_p\mathbb{P}\longrightarrow T_p^*\mathbb{P}$ as in general $T_p\mathbb{P}$ and $T_p^*\mathbb{P}$ are not isomorphic as topological vector spaces.
Hence for the existence of the Hamiltonian vector field corresponding to $F\in\text{Obs}$ we have the requirement $(\extder F)_p\in\text{Im}(\Omega^{\flat}_p)\;\forall p\in\mathbb{P}$.

The phase space $\mathbb{P}$ in Hamiltonian field theory will be some subset of the space of pairs of smooth sections in some vector bundle $E\rightarrow\Sigma$ over a 3 manifold $\Sigma$, i.e.\ $(\phi,\pi)\in\mathbb{P}\subset\Gamma(E)\times\Gamma(E)$. This subset will usually be characterized by fall-off and other boundary conditions. $\Gamma(E)$ is equipped with the $C^{\infty}$-topology \cite[Example II.1.4]{Neeb}, while $\mathbb{P}$ might be equipped with a $C^{\infty}$-topology induced from the $C^{\infty}$-topology of fields on a compactification $\bar{\Sigma}$ of $\Sigma$, which gives rise to certain fall-off conditions (see \ref{AppendixFalloff}).
Observables are \textit{local functionals}, i.e.\ of the form
\begin{align}\label{locfunc}
    F[\phi,\pi]=\int_{\Sigma}\text{d}^3x\,f(\phi(x),\pi(x),...,\nabla^{\alpha}\phi(x),\nabla^{\alpha}\pi(x),...),
\end{align}
where $\alpha$ is a multiindex with $|\alpha|\leq k$ for some $k\in\mathbb{N}$. The symplectic form in such a field theory has the formal gestalt\footnote{As we will see, it might be necessary to add a boundary term to the symplectic form. But also then the requirement for a Hamiltonian vector field for an observable to exist is managed by boundary terms for the observable.}
\begin{align}
    \Omega_{[\phi,\pi]}=\int_{\Sigma}\text{d}^3x\,\extder\pi\boldsymbol{\wedge}\cdot\extder\phi.
\end{align}
For the requirement $(\extder F)_p\in\text{Im}(\Omega^{\flat}_p)\;\forall\,p\in\mathbb{P}$ to be fulfilled for local functionals it might be necessary to add a suitable \textit{boundary term} to the integral expression (see also \cite[Chapter 4.1]{TanziPHD}).

The addition of boundary terms has important physical consequences. Take an observable $G$ with $G|_{\mathcal{C}}=0$. Such functions are called \textit{constraint functions}. The vanishing on $\mathcal{C}$ will sometimes to be denoted by $G\approx 0$. For it to have a corresponding $X^G$ we might have to add a boundary term $G^{\partial}$ with $G^{\partial}|_{\mathcal{C}}\neq 0$. In that case, $X^G$ is not an infinitesimal proper gauge transformation. We call it infinitesimal \textit{improper} gauge transformation. In practice, if the integrand of $G$ has a particular slow fall-off this \textit{long-reaching} gauge transformation might actually be improper. If additionally $X^G$ is a \textit{symmetry}, i.e.\ $X^H(G+G^{\partial})|_{\mathcal{C}}=0$ (as $X^H$ is tangent to $\mathcal{C}$, $X^H(G)|_{\mathcal{C}}$ is always true) it will be called an \textit{asymptotic} symmetry. 
\begin{defi}
    If all infinitesimal symmetries are fundamental symmetries of a group action of the \textit{symmetry group} $\Sym$ and all proper gauge transformation form a normal subgroup $\Gau\subset\Sym$, then the asymptotic symmetry group is
    \begin{align}
        \Asym=\Sym/\Gau.
    \end{align}
\end{defi}

\section{ $\SU(2)$ Yang-Mills-Higgs theory}
Let $(M,{}^4g)$ be Minkowki space.
The field in Yang-Mills theory is a connection on a $\SU (2)$ principal bundle on $M$:
\begin{align}
P=\SU (2)\times M \underset{pr_2}{\longrightarrow} M.
\end{align}
The bundle is trivial because every principal bundle over a contractible space is trivial\footnote{See Corollary 7.3 in \cite{Mitchell2006NotesOP}}. A connection on $P$ is given by a Lie algebra valued 1-form on $P$. The Lie algebra of $\SU(2)$ is $\mathfrak{su}(2)\cong \text{Skew}(\mathbb{R}^3)\cong (\mathbb{R}^3,\times)$ where $\times$ is the vector product on $\mathbb{R}^3$. On a trivial principal bundle the connection is completely determined by the Yang-Mills field $A\in C^{\infty}(M,(\mathbb{R}^3,\times))$, which is the pull back of the connection under a section $s\in\Gamma(P)$.\\

The Higgs field lives in the adjoint representation of $\SU(2)$, i.e. consider the associated vector bundle
\begin{align}
E=(\mathbb{R}^3,\times)\times M\underset{pr_2}{\longrightarrow} M
\end{align}
with the adjoint representation $\text{Ad}:\SU(2)\rightarrow \text{Aut}(\mathbb{R}^3,\times)$, $g\mapsto\text{Ad}_g=D(g)(\cdot)$, where $D:\SU(2)\rightarrow \mathrm{SO}(3)$ is the representation as a rotation matrix on $\mathbb{R}^3$.
The Higgs field is a section $\phi\in\Gamma(E)\cong C^{\infty}(M,(\mathbb{R}^3,\times))$. The connection defines a covariant derivative on $\Gamma(E)$ via the adjoint representation, i.e.\
\begin{align}
D\phi:=\text{d}\phi+e [(\text{Ad}_{*})_e(A)](\phi)=\text{d}\phi+e A\times\phi.
\end{align}
$\mathfrak{su}(2)$ carries a natural inner product, the Killing form, which is on $(\mathbb{R}^3,\times)$ given by the euclidean scalar product notated by a dot $\cdot\,$.

The curvature 2 form is the exterior derivative of the connection 1 form on the principal bundle. Pull back via a section $s$ onto the base manifold defines the Yang-Mills field strength:
\begin{align}
F=\text{d}A+e A(\wedge\otimes\times)A.
\end{align}
Now we are able to write down the Yang-Mills-Higgs action:
\begin{align}
S[A,\phi,\dot{A},\dot{\phi}]=\int_{M}\text{d}^4x\sqrt{-{}^4g}\bigg(-\frac{1}{4}\vec{F}^{\mu\nu}\cdot\vec{F}_{\mu\nu}-\frac{1}{2}D^{\mu}\vp\cdot D_{\mu}\vp-V(\vp)\bigg)+(\text{boundary}).
\end{align}
The Higgs field is characterized by the Mexican-hat potential
\begin{align}
V(\vp)=(\|\vp\|^2-a^2)^2,
\end{align}
where $a> 0$.
Note, one could also consider a family $V_{\lambda}(\vp):=\lambda(\|\vp\|^2-a^2)^2$. The investigation presented here only applies for the cases $\lambda>0$. In that regime, $\lambda$ is only a rescaling, so we can assume w.l.o.g.\ $\lambda=1$. The boundary term is yet to be chosen to make the Hamiltonian theory to exist. Let's assume its existence for a moment, but it will be dealt with later on.

\section{Hamiltonian of $\SU(2)$ Yang-Mills-Higgs theory}
In this section we review the Hamiltonian formulation of $\SU(2)$ Yang-Mills-Higgs theory on a non-dynamical Minkowski spacetime. We take a similar notation and line of argument as \cite{Tanzi:2020}, where the free Yang-Mills case is discussed. The Hamiltonian formulation of $\SU(2)$ Yang-Mills-Higgs theory is also discussed in \cite{Waida}.

Take the Hamiltonian formulation with respect to the (3+1)-split $M\cong\Sigma\times\mathbb{R}\cong\mathbb{R}^3\times\mathbb{R}$ with the corresponding split of the Minkowski metric
\begin{align}
{}^4g=\left(
\begin{array}{c|c}
-1 & 0\\
\hline
0 & g
\end{array}
\right),
\end{align}
where $g=\delta$ is the standard Euclidean metric. The action takes the form $S=\int\text{d}t\,L[A,\phi,\dot{A},\dot{\phi}]$, where
\begin{align}
L[A,\phi,\dot{A},\dot{\phi}]=\int\text{d}^3x\sqrt{g}\,\bigg(\frac{1}{2}\vec{F}^{0i}\cdot\vec{F}_{0i}-\frac{1}{4}\vec{F}^{ij}\cdot\vec{F}_{ij}+\frac{1}{2}D^{0}\vp\cdot D_0 \vp-\frac{1}{2} D^{i}\vp\cdot D_i \vp-V(\vp)\bigg)\nonumber\\+(\text{boundary})
\end{align}
is the Lagrangian function with respect to this space-time split.
Define canonical momenta via:
\begin{align}
\vpi^i:=\frac{\delta L}{\delta\dot{\va}_i}=\sqrt{g}\vec{F}^{i0}\; ,\;\vpi^0:=\frac{\delta L}{\delta \dot{\va}_0}\approx 0\;,\;
\vec{\Pi}:=\frac{\delta L}{\delta \dot{\vp}}=\sqrt{g}D^0\vp.
\end{align}
$\vpi^0\approx 0$ is a primary constraint\footnote{For the discussion of types of constraints, see \cite{HenneauxTeitelboim}.}. By a straightforward calculation, one obtains the Hamiltonian\footnote{A similar calculation is done in \cite[Example 2.2]{Troessaert} for electromagnetism.}:
\begin{align}
H_0[A,\pi,\phi,\Pi;\mu]&=\int\text{d}^3 x\,\bigg(\frac{\vpi^i\cdot\vpi_i}{2\sqrt{g}}+\frac{\vec{\Pi}\cdot\vec{\Pi}}{2\sqrt{g}}+\frac{\sqrt{g}}{4}\vec{F}_{ij}\cdot\vec{F}^{ij}+\frac{\sqrt{g}}{2}D^i\vp\cdot D_i\vp+V(\vp)\nonumber\\&+\va_0\cdot(D_i\vpi^i+\vp\times\vec{\Pi})+\vec{\mu}\cdot\vpi^{0}\bigg)+(\text{boundary}).
\end{align}
It has to be checked whether the primary constraint $\pi^0\approx 0$ is time dependent. This is indeed the case and this enforces a secondary constraint, called \textit{Gauss constraint}:
\begin{align}
\vec{\mathscr{G}}:=D_i\vpi^i+e\vp\times\vec{\Pi}\approx 0.
\end{align}
One easily checks that the Gauss constraint is preserved by time evolution and hence no further constraints a present in the theory. The extended Hamiltonian with respect to the constraint algebra is:
\begin{align}
H_{\text{ext}}[A,\pi,\phi,\Pi;\mu,\lambda]&=\int\text{d}^3x\,\bigg(\frac{\vpi^i\cdot\vpi_i}{2\sqrt{g}}+\frac{\vec{\Pi}\cdot\vec{\Pi}}{2\sqrt{g}}+\frac{\sqrt{g}}{4}\vec{F}_{ij}\cdot\vec{F}^{ij}+\frac{\sqrt{g}}{2}D^i\vp\cdot D_i\vp+V(\vp)\nonumber\\&+\vec{\lambda}\cdot\vec{\mathscr{G}}+\vec{\mu}\cdot\vpi^{0}\bigg)+(\text{boundary}),
\end{align}
where the Lagrange multiplier $\vec{\lambda}$ absorbed $\vec{A}_0$.
%Since $\{\vec{\mathscr{G}},\vpi^0\}=0$, together with $\vpi^0\approx 0$, $\dot{\vpi}^0=0$ and $\dot{\va}_0=\vec{\mu}$, we can completly disregard the components $\vec{\pi}^0$ and $\vec{A}_0$ because they do not contain any physical information.\\
The (formal) symplectic form on the not yet specified phase space is:
\begin{align}
\Omega[A,\pi,\phi,\Pi]=\int\text{d}^3x\,\extder\vpi^{\mu}\boldsymbol{\wedge}\cdot\extder\va_{\mu}+\int\text{d}^3x\,\extder\vec{\Pi}\boldsymbol{\wedge}\cdot\extder\vec{\phi},
\end{align}
where $\extder$ and $\wedge$ are, respectively, the exterior derivative and the wedge product in phase space.

\subsection{Constraint algebra and gauge transformations}
The constraints are first class, which means that the set of local constraint functions $\vpi^{(0)}$ and $\vec{\mathscr{G}}$ is a subalgebra of the pointwise Poisson algebra of local functionals. One calculates easily,
\begin{align}
\{\vec{\pi}^0(x)\otimes,\vec{\pi}^{0}(x')\}&=0,\\
\{\vec{\pi}^0(x)\otimes,\vec{\mathscr{G}}(x')\}&=0,\\
\{\vec{\mathscr{G}}(x)\otimes,\vec{\mathscr{G}}(x')\}&=(.\times .)\cdot\vec{\mathscr{G}}(x)\,\delta(x-x'),
\end{align}
where $\otimes$ is the tensor product for vectors in $\mathbb{R}^3$ and $(.\times .)$ is the structure constant of the Lie algebra $(\mathbb{R}^3,\times)$.

Gauge transformations are Hamiltonian vector fields generated by first-class constraints.
The Hamiltonian vector field generated by the constraint $\vpi^0$ is on the \textit{constraint surface} $\mathcal{C}$ everywhere vanishing, hence there is no degeneracy vector field of the pullback of $\Omega$ to $\mathcal{C}$. This enables us to state $\vpi^0=0$ as a global condition without loosing non-degeneracy of $\Omega$.
From now on the symplectic form is
\begin{align}
\Omega[A,\pi,\phi,\Pi]=\int\text{d}^3x\,\extder\vpi^{i}\boldsymbol{\wedge}\cdot\extder\va_{i}+\int\text{d}^3x\,\extder\vec{\Pi}\boldsymbol{\wedge}\cdot\extder\vec{\phi}.
\end{align}
Let's define
\begin{align}
G_{\vec{\lambda}}:=\int \text{d}^3 x\sqrt{g}\,\vec{\lambda}\cdot\vec{\mathscr{G}}.
\end{align}
For the Hamiltonian vector field corresponding to $G_{\vec{\lambda}}$ to exist, the expression might be corrected by a boundary term $B$ such that
\begin{align}
\bar{G}_{\vec{\lambda}}:=G_{\vec{\lambda}}+B\not\approx 0.
\end{align}
$\vec{\lambda}$ that have the property $\bar{G}_{\vec{\lambda}}\not\approx 0$ are called \textit{improper} gauge transformations, while $\vec{\lambda}$ with $\bar{G}_{\vec{\lambda}}\approx 0$ are called \textit{proper} gauge transformations. Proper gauge transformations are the redundancies in the description of the theory, while improper gauge transformations are symmetries that actually change the physical state.\\
The only possible boundary term that has to be added to $G_{\vec{\lambda}}$ to have a finite and functionally differentiable $\bar{G}_{\vec{\lambda}}$ is
\begin{align}
B=-\oint\text{d}^2 x\,\mathcal{O}_{1}(\vec{\lambda}\cdot\vec{\pi}^r).
\end{align}
Proper gauge transformations are generated by $\bar{G}_{\vec{\epsilon}}$, where $\vec{\epsilon}$ has a fast enough fall-off such that $\mathcal{O}_{1}(\vec{\epsilon}\cdot\vec{\pi}^r)=0$. Then:
\begin{align}\label{infgauge}
\delta_{\vec{\epsilon}}\va_i=\{\va_i,\bar{G}_{\vec{\epsilon}}\}=-D_i\vec{\epsilon}\; ,\; \delta_{\vec{\epsilon}}\pi^i=\{\vpi^i,\bar{G}_{\vec{\epsilon}}\}=e\vec{\epsilon}\times \vpi^i\; ,\nonumber\\
\delta_{\vec{\epsilon}}\vp=\{\vp,\bar{G}_{\vec{\epsilon}}\}=e\vec{\epsilon}\times\vp\; ,\;\delta_{\vec{\epsilon}}\vec{\Pi}=\{\vec{\Pi},\bar{G}_{\vec{\epsilon}}\}=e\vec{\epsilon}\times\vec{\Pi}.
\end{align}
Whether improper gauge transformations exist depends on the fall-off- and boundary conditions for the canonical variables.

\section{Fall-off and boundary conditions}

\subsection{Fall-off conditions for the Yang-Mills variables}

There are three reasons for demanding fall-off- and boundary conditions. For a Hamiltonian description of a relativistic field theory to exist, the Hamiltonian has to be finite, the symplectic form has to be finite and there has to be an action of the Poincar\'{e} algebra on phase space.

Let's start with the Hamiltonian
\begin{align}
H[A_i,\phi,\pi^i,\Pi;\lambda]&=\int\text{d}^3x\bigg(\frac{1}{2}\vec{\pi}^i\cdot\vec{\pi}_i+\frac{1}{2}\vec{\Pi}\cdot\vec{\Pi}+\frac{1}{4}\vec{F}^{ij}\cdot\vec{F}_{ij}+\frac{1}{2}D^i\vec{\phi}\cdot D_i \vec{\phi}+V(\vec{\phi})\bigg)\nonumber\\&+ \int\text{d}^3 x\vec{\lambda}\cdot\vec{\mathscr{G}}-\oint\text{d}^2 x\,\mathcal{O}_{1}(\vec{\lambda}\cdot\vec{\pi}^r)+(\text{boundary?}).	
\end{align}
At first we ensure that the non-interacting Yang-Mills part $\int\text{d}^3x\,(2^{-1}\vpi^i\cdot\vpi_i+4^{-1}\vec{F}^{ij}\cdot\vec{F}_{ij})$ is finite. For that we choose the following fall-off conditions:
\begin{align}
\va_i(r,x^{\bar{a}})&\sim\sum_{n=1}^{\infty}\frac{1}{r^n}\va_i^{(n-1)}(x^{\bar{a}}),
\label{afalloff}\\
\vpi^i(r,x^{\bar{a}})&\sim\sum_{n=1}^{\infty}\frac{1}{r^{n+1}}\vpi^i_{(n-1)}(x^{\bar{a}})\label{pifalloff},
\end{align}
where we used the notation $\sim$ for an \textit{asymptotic expansion}. These fall-off conditions can be justified by geometrical reasons (see appendix \ref{AppendixFalloff}).
\begin{bem}
    With these fall-off conditions the symplectic form is logarithmically divergent. This divergence will be handled by parity conditions (\ref{parity}).
\end{bem}

\subsection{Fall-off and boundary conditions for the Higgs variables}
Inspired by appendix \ref{AppendixFalloff} we choose the the fall-off
\begin{align}\label{Higgsfalloff}
\vp(r,x^{\bar{a}})\sim\vp_{\infty}(x^{\bar{a}})+\sum_{n=1}^{\infty}\frac{1}{r^n}\vp^{(n)}(x^{\bar{a}}).
\end{align}
This alone does not make the terms of the Hamiltonian that contain $\vp$ finite. First of all, let's consider the potential term  $\int\text{d}^3x\,V(\vp)$. Using the ansatz \eqref{Higgsfalloff} the integrand has to be $O(r^{-4})$.
\begin{prop}
	\label{higgsb}
	Under the assumption of the fall-off behaviour being \eqref{Higgsfalloff}, $V(\vp)\in O(r^{-4})$ $\Leftrightarrow$ $\vp_{\infty}\cdot\vp^{(1)}=0$ and $\|\vp_{\infty}(x)\|^2=a^2$ $\forall x\in S^2$\footnote{$\|.\|$ is the natural norm on $\mathfrak{su}(2)\cong\mathbb{R}^3$ induced by the scalar product.}.
\end{prop}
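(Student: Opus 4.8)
The plan is to substitute the asymptotic expansion \eqref{Higgsfalloff} into the Mexican-hat potential and to read off the coefficients of the resulting asymptotic series in powers of $1/r$, order by order. First I would expand the squared norm, keeping only the terms that can obstruct the $O(r^{-4})$ behaviour:
\[
\|\vp\|^2 = \vp\cdot\vp = \|\vp_{\infty}\|^2 + \frac{2}{r}\,\vp_{\infty}\cdot\vp^{(1)} + O(r^{-2}).
\]
Introducing $s := \|\vp\|^2 - a^2$, its expansion reads $s = s_0 + s_1/r + O(r^{-2})$ with $s_0 = \|\vp_{\infty}\|^2 - a^2$ and $s_1 = 2\,\vp_{\infty}\cdot\vp^{(1)}$.

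Squaring then produces the leading coefficients of the potential,
\[
V(\vp) = s^2 = s_0^2 + \frac{2\,s_0 s_1}{r} + O(r^{-2}),
\]
and the full statement follows by comparing orders. For the forward implication I would argue as follows: if $V(\vp)\in O(r^{-4})$, the coefficient of $r^0$, namely $s_0^2$, must vanish, which forces $s_0 = 0$, i.e.\ $\|\vp_{\infty}(x)\|^2 = a^2$ for all $x\in S^2$. Once $s_0=0$ one has $s = s_1/r + O(r^{-2})$, hence $V = s^2 = s_1^2/r^2 + O(r^{-3})$; demanding that the coefficient of $r^{-2}$ vanish then gives $s_1 = 0$, that is $\vp_{\infty}\cdot\vp^{(1)} = 0$. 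For the converse, the two stated conditions are precisely $s_0 = s_1 = 0$, so that $s = O(r^{-2})$ and therefore $V = s^2 = O(r^{-4})$ with no further conditions required.

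I do not expect a genuine obstacle here; the computation is elementary. The only points deserving care are the bookkeeping of orders when squaring the asymptotic series --- in particular the observation that a vanishing $r^0$ coefficient of $V$ is equivalent to $s_0=0$, and that the next genuine obstruction then sits at order $r^{-2}$ rather than $r^{-1}$, since the $r^{-1}$ coefficient $2 s_0 s_1$ vanishes automatically once $s_0=0$ --- together with the remark that, once $s=O(r^{-2})$, its square is automatically $O(r^{-4})$, so that the subleading Higgs coefficients $\vp^{(2)},\vp^{(3)},\dots$ remain entirely unconstrained by this finiteness requirement.
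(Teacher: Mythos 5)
Your proof is correct and takes essentially the same route as the paper's: substitute the fall-off ansatz \eqref{Higgsfalloff} into $V(\vp)$ and compare coefficients of the resulting expansion in $1/r$, using positivity of the squared coefficients to conclude at orders $r^0$ and $r^{-2}$. Your bookkeeping via $s=\|\vp\|^2-a^2$ is in fact tidier than the paper's fully expanded square (which drops factors of $2$ and garbles some cross terms), but the underlying argument is identical.
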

\begin{proof}With the ansatz
	$\vp=\vec{\phi}_{\infty}+\frac{1}{r}\vp^{(1)}+\frac{1}{r^2}\vec{\phi}^{(2)}+O(r^{-3})$,
	\begin{align}
	V(\vp)&=\bigg(\left\| \vp_{\infty}+\frac{1}{r}\vp^{(1)}+\frac{1}{r^2}\vp^{(2)}+O(r^{-3})\right\|^2-a^2\bigg)^2\nonumber \\ \nonumber
	&=\bigg(\|\vp_{\infty}\|^2-a^2+\frac{1}{r^2}\Vert\vp^{(1)}\Vert^2+\frac{1}{r^4}\Vert\vp^{(2)}\Vert^2+\frac{1}{r}\vp_{\infty}\cdot\vp^{(1)}+\frac{1}{r^2}\vp_{\infty}\cdot\vp^{(2)}+\frac{1}{r^3}\vp^{(1)}\cdot\vp^{(2)}+\mathcal{O}(r^{-3})\bigg)^2\\ 
	&=\frac{1}{r^2}(\vp_{\infty}\cdot\vp^{(1)})^2+\frac{1}{r^3}(\vp_{\infty}\cdot\vp^{(1)})(\vp_{\infty}\cdot\vp^{(2)})+\frac{1}{r^3}(\vp_{\infty}\cdot\vp^{(1)})\Vert\vp^{(1)}\Vert^2\nonumber\\&+(\|\vp_{\infty}\|^2-a^2)\bigg(\frac{1}{r^2}(\vp_{\infty}\cdot\vp^{(1)})^2+\frac{1}{r^3}(\vp_{\infty}\cdot\vp^{(1)})(\vp_{\infty}\cdot\vp^{(2)})+\frac{1}{r^3}(\vp_{\infty}\cdot\vp^{(1)})\Vert\vp^{(1)}\Vert^2\bigg)\nonumber\\&+(\|\vp_{\infty}\|^2-a^2)^2+O(r^{-4}).
	\label{pot}
	\end{align}
	Certainly $V(\vp)\in O(r^{-4})$ $\Leftrightarrow$ $\|\vp_{\infty}\|=a^2$ and $\vp_{\infty}\cdot\vp^{(1)}=0$.
\end{proof}
\begin{bem}
	In principle one might eliminate higher order terms in $V(\vp)$ by using parity conditions, such that after performing the radial integral, the spherical integral is zero. But as $(\vp_{\infty}\cdot\vp^{(1)})^2$ and $\|\vp_{\infty}\|^2-a^2$ are even functions on $S^2$, this is not possible.
\end{bem}
\begin{bem}
	The condition $\vp_{\infty}\cdot\vp^{(1)}=0$ is gauge-invariant. A gauge transformation is generated by a Lie algebra valued function $\vec{\epsilon}$ with the asymptotic behaviour $\vec{\epsilon}=\vec{\epsilon}_{(0)}+\frac{1}{r}\vec{\epsilon}_{(1)}+O(r^{-2})$ (see section \ref{gauge}). So
	\begin{align}
	\delta_{\vec{\epsilon}}(\vp_{\infty}\cdot\vp^{(1)})&=\mathcal{O}_{r^{-1}}(\delta_{\vec{\epsilon}}(\vp\cdot\vp))\nonumber\\&=2e((\vec{\epsilon}_{(0)}\times\vp_{\infty})\cdot\vp^{(1)}+(\vec{\epsilon}_{(0)}\times\vp^{(1)})\cdot\vp_{\infty}+(\vec{\epsilon}^{(1)}\times\vp_{\infty})\cdot\vp_{\infty}=0.
	\end{align}
	$\|\vp_{\infty}\|^2=a^2$ is obviously also gauge invariant.
\end{bem}
\begin{defi}
	$\vp_{\infty}$ is called the \textit{Higgs vacuum component} or simply the \textit{Higgs vacuum}, because $\vp_{\infty}$ has zero potential energy, i.e. $V(\vp_{\infty}(x))=0$ $\forall x\in\Sigma$ and has also zero kinetic energy (see \ref{timevac}).
\end{defi}
It remains to set a boundary condition for the canonical momentum of the Higgs field. In order to make the term $\int_{\Sigma}\text{d}^3x\frac{1}{2}\vec{\Pi}\cdot\vec{\Pi}$ of the Hamiltonian finite and to let the Poincar\'{e} transformations \ref{aspoin} leave the fall-off \eqref{Higgsfalloff} invariant we have to choose necessarily
\begin{align}\label{Higgsmomfalloff}
\vec{\Pi}(r,x^{\bar{a}})\sim\sum_{n=2}^{\infty}\frac{1}{r^n}\vec{\Pi}^{(n-2)}(x^{\bar{a}}).
\end{align}

\subsection{Further boundary conditions}
The fall-off conditions \eqref{afalloff} for $\va_i$ and \eqref{Higgsfalloff} for $\vp$ do not make the term $\int\text{d}^3 x\, \frac{1}{2}D^i\vec{\phi}\cdot D_i\vec{\phi}$ of the Hamiltonian finite. For that to be the case it has to be $D_i\vec{\phi}\in O(r^{-2})$, while in general
\begin{align}
D_i\vec{\phi}=\underbrace{\partial_i\vec{\phi}_{\infty}}_{\in O(r^{-1})}+\underbrace{\partial_i(r^{-1}\vec{\phi}^{(1)})}_{\in O(r^{-2})}+e \underbrace{r^{-1}\vec{A}^{(0)}_i\times\vec{\phi}_{\infty}}_{\in O(r^{-1})}+\underbrace{er^{-2}\vec{A}^{(0)}_i\times\vec{\phi}^{(1)}}_{\in O(r^{-2})}+\underbrace{er^{-2}\va_i^{(1)}\times\vp_{\infty}}_{\in O(r^{-2})}+O(r^{-3}).
\end{align}
The terms of order $r^{-1}$ have to vanish. Therefore $D_i^{(0)}\vp_{\infty}:=\partial_i\vec{\phi}_{\infty}+er^{-1}\vec{A}^{(0)}_i\times\vec{\phi}_{\infty}=0$, and by taking the cross product with $\vec{\phi}_{\infty}$ from the left, the equation can be solved for $\vec{A}^{(0)}_i$:
\begin{align}
\frac{1}{r}\vec{A}^{(0)}_i=-\frac{1}{ea^2}\vec{\phi}_{\infty}\times\partial_i\vec{\phi}_{\infty}+\frac{1}{r}\vec{\phi}_{\infty}\bar{A}_i,\;\bar{A}_i:=\frac{1}{a^2}\vec{\phi}_{\infty}\cdot\vec{A}^{(0)}_i.\label{ai}
\end{align}
In spherical coordinates $r$ and $\bar{x}^{\bar{a}}\in\{\theta,\varphi\}$, $\bar{a}\in\{1,2\}$ it follows (see also \ref{spherical}):
\begin{align}
\vao_r=\vp_{\infty}\bar{A}_r\;\text{and}\;\vao_{\bar{a}}=-\frac{1}{ea^2}\vp_{\infty}\times\partial_{\bar{a}}\vp_{\infty}+\vp_{\infty}\bar{A}_{\bar{a}}.\label{barrbara}
\end{align}
Until this point we made sure that the part of the Hamiltonian that does not generate improper or proper gauge transformations is finite, i.e.
\begin{align}
\int\text{d}^3x\,\bigg(\frac{1}{2}\vpi^i\cdot\vpi_i+\frac{1}{2}\vec{\Pi}\cdot\vec{\Pi}+\frac{1}{4}\vec{F}^{ij}\cdot\vec{F}_{ij}+\frac{1}{2}D^i\vp\cdot D_i\vp\bigg)<\infty.
\end{align}
Let's consider $\Gamma[\vec{\epsilon}]=\int\text{d}^3x\,\vec{\epsilon}\cdot\vec{\mathscr{G}}-\oint\text{d}^2x\,\mathcal{O}_{1}(\vec{\epsilon}\cdot\vpi^r)$. There are no boundary terms of infinite values, because $\vpi^i$ fulfills the fall-off conditions \eqref{pifalloff} and $\vec{\epsilon}$ will fulfill fall-off conditions that prevent this scenario. Infinitesimal gauge transformations $\vec{\epsilon}$ have to preserve the fall-off conditions \eqref{afalloff}, \eqref{pifalloff}, \eqref{Higgsfalloff} and \eqref{Higgsmomfalloff}. By the transformation rules \eqref{infgauge},
\begin{align}\label{epsilonfalloff}
\vec{\epsilon}(r,x^{\bar{a}})\sim\sum_{n=0}^{\infty}\frac{1}{r^n}\vec{\epsilon}^{(n)}(x^{\bar{a}}).
\end{align}
For $\int\text{d}^3x\,\vec{\epsilon}\cdot\vec{\mathscr{G}}<\infty$, it has to be $\vec{\mathscr{G}}\in O(r^{-4})$. $\vec{\mathscr{G}}=D_i\vpi^i+e\vp\times\vec{\Pi}$, therefore
\begin{align}
&\mathcal{O}_{r^{-2}}(\vec{\mathscr{G}})=e\vp_{\infty}\times\vec{\Pi}^{(0)}=0\;\Rightarrow\; \vec{\Pi}^{(0)}=\bar{\Pi}\vp_{\infty},\label{gr2}\\
&\mathcal{O}_{r^{-3}}(\vec{\mathscr{G}})=D_i^{(0)}\vpi^i_{(0)}+e\vp^{(1)}\times\vec{\Pi}^{(0)}+e\vp_{\infty}\times\vec{\Pi}^{(1)}=0\label{gr3}.
\end{align}
While the first equation gives a concrete boundary condition at this point, the second equation will be solved later on when the action of the Poincar\'{e} group forces futher boundary conditions on the fields (section \ref{aspoin}).
\begin{bem}Reminder: The conditions \eqref{gr2} and \eqref{gr3} are actual boundary conditions and not constraints.
	%The physical states in phase space lie on the constraint surface $\mathcal{C}$ which is determined by the constraint $\vec{\mathscr{G}}=0$. In that sense one could apply the conditions \eqref{gr2} and \eqref{gr3} as constraints. But then the long reaching proper gauge transformations that fall off slower then $\mathcal{O}(r^{-2})$ have to be excluded, because for a long reaching $\vec{\epsilon}$, $\Gamma[\vec{\epsilon}]=\infty$ everywhere outside of $\mathcal{C}$. The Hamiltonian vector field corresponding to $\Gamma[\vec{\epsilon}]$ is determined by $\extder\Gamma[\vec{\epsilon}]=-i_{X^{\Gamma[\vec{\epsilon}]}}\Omega$. So, in particular $\extder\Gamma[\vec{\epsilon}]|_{\mathcal{C}}$ has to be finite, which can only be the case if $\Gamma[\vec{\epsilon}]$ is finite in a neighbourhood of $\mathcal{C}$. The easiest way to realize this is to just apply \eqref{gr2} and \eqref{gr3} as boundary conditions for determining the phase space.
\end{bem}
All of the fall-off conditions and boundary conditions together guarantee a finite Hamiltonian. To generate a well defined Hamiltonian vector field the Hamiltonian has to be differentiable (see section \ref{Setting}). Indeed we do not have to introduce further boundary terms to $H$, apart from $\oint\text{d}^2x\,\mathcal{O}_1(\vec{\epsilon}\cdot\vpi^r)=\oint\text{d}^2x\,\vec{\epsilon}^{(0)}\cdot\vpi^r_{(0)}$. The two terms of $H$ that could in principle also need additional boundary terms are $\int\text{d}^3x\,2^{-1}D^i\vp\cdot D_i\vp$ and $\int\text{d}^3x\,4^{-1}\vec{F}^{ij}\cdot\vec{F}_{ij}$. The latter one does not lead to a boundary term, because this term would come from the integral $\int\text{d}^3x\,4^{-1}\partial_i(\vec{F}^{ij}\cdot \va_j)$. By the fall-off \eqref{afalloff}, $\partial_i(\vec{F}^{ij}\cdot\va_j)\in O(r^{-4})$, which gives a zero boundary term. For $\int\text{d}^3x\,2^{-1}D^i\vp\cdot D_i\vp$ it is a little bit more involved, because the finiteness of this integral is not achieved by a fall-off behaviour. Indeed, $\partial_i((D^i\vp)\cdot\vp)\in O(r^{-3})$ and
\begin{align}\label{noboundaryinthehiggs}
\int\text{d}^3x\,\partial_i((D^i\vp)\cdot\vp)&=\oint\text{d}^2x\,\sqrt{\bar{\gamma}}r^2\mathcal{O}_{r^{-2}}((D^r\vp)\cdot\vp)\nonumber\\&=\oint\text{d}^2x\,\sqrt{\bar{\gamma}}(-\vp^{(1)}\cdot\vp_{\infty}+(\va_r^{(0)}\times\vp^{(1)})\cdot\vp_{\infty}+(\va_r^{(1)}\times\vp_{\infty})\cdot\vp_{\infty})\nonumber\\&=0,
\end{align}
where we used $D^r_{(0)}\vp_{\infty}=0$ which is equivalent to $\va^{(0)}_r=\bar{A}_r\vp_{\infty}$ (\ref{barrbara}) and $\vp_{\infty}\cdot\vp^{(1)}=0$ (\ref{higgsb}). Having that, the fall-off conditions \eqref{afalloff}, \eqref{pifalloff}, \eqref{Higgsfalloff} and \eqref{Higgsmomfalloff} together with the boundary conditions
\begin{align}\label{boundary}
\|\vp_{\infty}\|^2=a^2,\;\;\vp_{\infty}\cdot\vp^{(1)}=0,\;\;\vec{\Pi}^{(0)}=\bar{\Pi}\vp_{\infty},\;\;\mathcal{O}_{r^{-3}}(\vec{\mathscr{G}})=0,\nonumber\\
\va_r^{(0)}=\bar{A}_r\vp_{\infty},\;\;\va_{\bar{a}}^{(0)}=\bar{A}_{\bar{a}}\vp_{\infty}-(ea^2)^{-1}\vp_{\infty}\times\partial_{\bar{a}}\vp_{\infty}
\end{align}
lead to a set of canonical variables with a differentiable and finite Hamiltonian function.

If the symplectic form would be finite at this point one could determine the Hamiltonian vector field. But we can do one intresting remark about the time development at this point. The Higgs part of the symplectic form is $\Omega^{\text{Higgs}}=\int\text{d}^3x\,\extder{\vec{\Pi}}\cdot\boldsymbol{\wedge}\extder\vp$. Later we will see that $\vp$ and $\vec{\Pi}$ fulfill the right boundary conditions such that $\Omega^{\text{Higgs}}$ is finite. Using that without further characterization, we can calculate the equations of motion for the Higgs vacuum $\vp_{\infty}$.
\begin{prop}\label{timevac}
	The time development of the Higgs vacuum $\vp_{\infty}$ is a gauge transformation $\partial_t\vp_{\infty}=e\vec{\epsilon}^{(0)}\times\vp_{\infty}$.
\end{prop}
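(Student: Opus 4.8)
The plan is to extract the time evolution of $\vp$ directly from Hamilton's equation and then project the result onto its leading asymptotic order. Since $\vec{\Pi}$ is the momentum conjugate to $\vp$, we have $\partial_t\vp=\{\vp,H\}=\delta H/\delta\vec{\Pi}$, so the first task is to collect every term of $H$ carrying a $\vec{\Pi}$-dependence. Working in the Cartesian frame where $\sqrt{g}=1$ (as in the Hamiltonian written above), these are the kinetic term $\tfrac12\vec{\Pi}\cdot\vec{\Pi}$ and the Gauss term $\int\text{d}^3x\,\vec{\lambda}\cdot\vec{\mathscr{G}}$, the latter through the piece $e\,\vec{\lambda}\cdot(\vp\times\vec{\Pi})$. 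Crucially, the only boundary term present, $-\oint\text{d}^2x\,\mathcal{O}_1(\vec{\lambda}\cdot\vpi^r)$, involves the Yang-Mills momentum $\vpi^r$ and not $\vec{\Pi}$, so it contributes nothing to $\delta H/\delta\vec{\Pi}$ and may be discarded.

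Differentiating, the kinetic term yields $\vec{\Pi}$. For the Gauss term I would not recompute the bracket from scratch but invoke the calculation already recorded in \eqref{infgauge}, where $\{\vp,\bar{G}_{\vec{\epsilon}}\}=e\,\vec{\epsilon}\times\vp$ was established (the boundary correction $B$ entering $\bar{G}$ depends only on $\vpi^r$ and is irrelevant here); the identical bulk computation gives $\{\vp,\int\text{d}^3x\,\vec{\lambda}\cdot\vec{\mathscr{G}}\}=e\,\vec{\lambda}\times\vp$. Hence
\begin{align}
\partial_t\vp=\vec{\Pi}+e\,\vec{\lambda}\times\vp.
\end{align}
The next step is to insert the asymptotic expansions \eqref{Higgsfalloff} and \eqref{Higgsmomfalloff} together with the gauge-parameter fall-off \eqref{epsilonfalloff} for the multiplier $\vec{\lambda}$, and to keep only the $r^0$ coefficient on both sides.

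By \eqref{Higgsmomfalloff} one has $\vec{\Pi}\in O(r^{-2})$, so the kinetic contribution drops out entirely at leading order and only the gauge piece survives, giving $\partial_t\vp_{\infty}=e\,\vec{\lambda}^{(0)}\times\vp_{\infty}$. Since $\vec{\lambda}$ has absorbed $\va_0$ and obeys the gauge-parameter fall-off \eqref{epsilonfalloff}, its leading coefficient is identified with the leading gauge parameter, $\vec{\lambda}^{(0)}=\vec{\epsilon}^{(0)}$, so that $\partial_t\vp_{\infty}=e\,\vec{\epsilon}^{(0)}\times\vp_{\infty}$. Comparison with the infinitesimal gauge action $\delta_{\vec{\epsilon}}\vp=e\,\vec{\epsilon}\times\vp$ in \eqref{infgauge} then shows that this is precisely a gauge transformation generated by $\vec{\epsilon}^{(0)}$, which is the claim.

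The main subtlety I expect is not the algebra but the justification for using the Hamiltonian flow at this stage, since the symplectic form has not yet been shown to be finite and $\{\vp,H\}$ is therefore still formal. The argument is nonetheless self-consistent: the evolution of $\vp$ is governed solely by the bulk functional derivative $\delta H/\delta\vec{\Pi}$, which is insensitive to any boundary correction of $\Omega$ (such a correction modifies the pairing of variations, not this derivative in the $\vec{\Pi}$-direction), and the one boundary term already present in $H$ is $\vec{\Pi}$-independent. A secondary point to verify is that the projection onto the $r^0$ coefficient receives no feedback from slower-falling cross terms; this follows immediately from the stated fall-offs once $\vec{\lambda}^{(0)}$ is taken to be $r$-independent, so that the leading order of $e\,\vec{\lambda}\times\vp$ is exactly $e\,\vec{\epsilon}^{(0)}\times\vp_{\infty}$.
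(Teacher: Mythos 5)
Your proposal is correct and follows essentially the same route as the paper: it applies Hamilton's equation $\partial_t\vp=\delta H/\delta\vec{\Pi}$, projects onto the $\mathcal{O}_1$ coefficient, uses the fall-off \eqref{Higgsmomfalloff} to kill the kinetic contribution $\mathcal{O}_1(\vec{\Pi})=0$, and identifies the surviving piece $e\,\vec{\epsilon}^{(0)}\times\vp_{\infty}$ as the leading gauge transformation. The extra care you take with the $\vec{\Pi}$-independence of the boundary term and the identification $\vec{\lambda}^{(0)}=\vec{\epsilon}^{(0)}$ merely makes explicit what the paper's one-line computation leaves implicit.
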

\begin{proof}
	By the Hamiltonian equation coming from the symplectic form,
	\begin{align}
	\partial_t\vp_{\infty}=\mathcal{O}_1\bigg(\frac{\delta H}{\delta\vec{\Pi}}\bigg)=\underbrace{\mathcal{O}_1(\vec{\Pi})}_{=0}+e\mathcal{O}_1(\vec{\epsilon}\times\vp)=e\vec{\epsilon}^{(0)}\times\vp_{\infty}.
	\end{align}
\end{proof}
\begin{bem}
	At this point it is not clear whether $\partial_t\vp_{\infty}$ is a proper or improper gauge transformation. This depends on the boundary term $\oint\text{d}^2x\,\vec{\epsilon}^{(0)}\cdot\vpi^r_{(0)}$. As it turns out in \ref{aspoin}, we have to choose the boundary condition $\vpi^r_{(0)}=\bar{\pi}^r\vp_{\infty}$ which makes only the $\vec{\epsilon}^{(0)}$ that are parallel to $\vp_{\infty}$ to improper gauge transformations. Then $\partial_t\vp_{\infty}$ is obviously a proper gauge transformation.
\end{bem}

\subsection{Fall-off in spherical coordinates}\label{spherical}
For a lot of the discussions in this work it is convenient to use spherical coordinates on $\Sigma\cong\mathbb{R}^3$. Spherical coordinates are defined via 
\begin{align}
x(r,\theta,\varphi)=r\begin{pmatrix} \sin(\theta)\cos(\varphi)\\ \sin(\theta)\sin(\varphi)\\ \cos(\theta)\end{pmatrix}. 
\end{align}
The fall-off of the Yang-Mills field expressed in spherical coordinates is
%As we will state fall-off conditions in $r$ it is important to see how this fall-off changes when the variables are expressed in spherical coordinates. The components of the Yang-Mills field in spherical coordinates are (assuming a fall-off behaviour we will state later on)
\begin{align}
\va_r&=\va_i\frac{\partial x^i}{\partial r}=\bigg(\frac{1}{r}\va_i^{(0)}+\frac{1}{r^2}\va_i^{(1)}+...)\bigg)\underbrace{(e_r)^i}_{\in O(1)}=\frac{1}{r}\va_r^{(0)}+\frac{1}{r^2}\va^{(1)}_r+...,\\
\va_{\theta}&=\va_i\frac{\partial x^i}{\partial \theta}=\bigg(\frac{1}{r}\va_i^{(0)}+\frac{1}{r^2}\va_i^{(1)}+...\bigg)\underbrace{r(e_{\theta})^i}_{\in O(r)}=\va^{(0)}_{\theta}+\frac{1}{r}\va^{(1)}_{\theta}+...,\\
\va_{\varphi}&=\va_i\frac{\partial x^i}{\partial \varphi}=\bigg(\frac{1}{r}\va_i^{(0)}+\frac{1}{r^2}\va_i^{(1)}+...\bigg)\underbrace{r(e_{\varphi})^i}_{\in O(r)}=\va^{(0)}_{\varphi}+\frac{1}{r}\va^{(1)}_{\varphi}+...\,.
\end{align}
For the transformation of the canonical momenta it has to be recalled that these are 1-densities, i.e. $\vpi^i=-\sqrt{g}\vec{E}^i$, where $\vec{E}^i$ is a vector field and $g$ is the flat Riemannian metric on $\Sigma$. In spherical coordinates this metric has the form:
\begin{align}
g=\left(
\begin{array}{c|c}
1 & 0\\
\hline
0 & r^2\bar{\gamma}
\end{array}
\right),
\end{align}
where $\bar{\gamma}$ is the round metric on $S^2$. $\Rightarrow$ $\sqrt{g}=r^2\sqrt{\bar{\gamma}}$. It follows (assuming the fall-off behaviour $\vec{E}^i=r^{-2}\vec{E}^i_{(0)}+\mathcal{O}(r^{-3})$):
\begin{align}
\vpi^r&=-r^2\sqrt{\bar{\gamma}}\underbrace{\frac{\partial r}{\partial x^i}}_{\in\ O(1)}\vec{E}^i=\vpi_{(0)}^r+\frac{1}{r}\vpi^r_{(1)}+...\; ,\\
\vpi^{\bar{a}}&=-r^2\sqrt{\bar{\gamma}}\underbrace{\frac{\partial x^{\bar{a}}}{\partial x^i}}_{\in O(r^{-1})}\vec{E}^i=\frac{1}{r}\vpi^{\bar{a}}_{(0)}+\frac{1}{r^2}\vpi^{\bar{a}}_{(1)}+...\,,
\end{align}
where $x^{\bar{a}}\in\{\theta,\varphi\}$.

\subsection{Asymptotic electromagnetic structure with magnetic charge}
\label{gauge}
With the boundary conditions \eqref{ai} for the asymptotic Yang-Mills field there are distinguished special kinds of gauge transformations, i.e.\ gauge transformations that stabilize the Higgs vacuum. They give rise to infinitesimal $\mathrm{U}(1)$ gauge transformations of the free components $\bar{A}_i$ of the asymptotic Yang-Mills field: i.e.\ $\mathcal{O}_1(\delta_{\vec{\lambda}}\vp)=\vec{\lambda}^{(0)}\times\vp_{\infty}\stackrel{!}{=}0$ $\Rightarrow\,\vec{\epsilon}^{(0)}=\bar{\epsilon}\vp_{\infty}$ and therefore:
\begin{align}\label{gaugipara}
\delta_{\vec{\epsilon}^{(0)}}\va_i^{(0)}:=\mathcal{O}_{r^{-1}}(\delta_{\vec{\epsilon}}\va_i)&=-\partial_i\vec{\epsilon}^{(0)}+e\vec{\epsilon}^{(0)}\times\vao_i\nonumber\\&=-(\partial_i\bar{\epsilon})\vp_{\infty}-\bar{\epsilon}\partial_i\vp_{\infty}-\frac{\bar{\epsilon}}{a^2}\vp_{\infty}\times(\vp_{\infty}\times\partial_i\vp_{\infty})=-(\partial_i\bar{\epsilon})\vp_{\infty}.
\end{align}
\begin{lemma}\label{derivative}
	Let $\lambda\in C^{\infty}(S^2)$. Then, under the assumption $\va_{i}^{(0)}=\bar{A}_{i}\vp_{\infty}-(ea^2)^{-1}\vp_{\infty}\times\partial_{i}\vp_{\infty}$, it is $D_{i}^{(0)}(\bar{\lambda}\vp_{\infty}):=\partial_i(\bar{\lambda}\vp_{\infty})+e\va_i^{(0)}\times\bar{\lambda}\vp_{\infty}=(\partial_{i}\lambda)\vp_{\infty}$.
\end{lemma}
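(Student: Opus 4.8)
The plan is to verify the identity by a direct computation, the only nontrivial inputs being the vector triple-product identity on $(\real^3,\times)$ together with the constraint $\|\vp_{\infty}\|^2=a^2$ established in Proposition~\ref{higgsb}. First I would expand the ordinary-derivative piece by the Leibniz rule, writing $\partial_i(\bar\lambda\vp_{\infty})=(\partial_i\bar\lambda)\vp_{\infty}+\bar\lambda\,\partial_i\vp_{\infty}$. The first summand $(\partial_i\bar\lambda)\vp_{\infty}$ already reproduces the desired right-hand side, so the whole lemma reduces to showing that the remaining terms cancel, namely that $\bar\lambda\,\partial_i\vp_{\infty}+e\,\va_i^{(0)}\times(\bar\lambda\vp_{\infty})=0$.

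Next I would substitute the assumed form $\va_i^{(0)}=\bar A_i\vp_{\infty}-(ea^2)^{-1}\vp_{\infty}\times\partial_i\vp_{\infty}$ into the cross-product term. The contribution of the $\bar A_i\vp_{\infty}$ piece vanishes immediately since $\vp_{\infty}\times\vp_{\infty}=0$, leaving the single term $-a^{-2}\bar\lambda\,(\vp_{\infty}\times\partial_i\vp_{\infty})\times\vp_{\infty}$. Applying the triple-product identity $(\vec a\times\vec b)\times\vec c=(\vec a\cdot\vec c)\,\vec b-(\vec b\cdot\vec c)\,\vec a$ with $\vec a=\vec c=\vp_{\infty}$ and $\vec b=\partial_i\vp_{\infty}$ turns this into
\begin{align}
-a^{-2}\bar\lambda\big[(\vp_{\infty}\cdot\vp_{\infty})\,\partial_i\vp_{\infty}-(\partial_i\vp_{\infty}\cdot\vp_{\infty})\,\vp_{\infty}\big].
\end{align}

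Here is where Proposition~\ref{higgsb} enters twice: from $\|\vp_{\infty}\|^2=a^2$ one has $\vp_{\infty}\cdot\vp_{\infty}=a^2$, and differentiating this constant relation gives $\partial_i\vp_{\infty}\cdot\vp_{\infty}=\tfrac12\partial_i(\vp_{\infty}\cdot\vp_{\infty})=0$. Hence the bracket collapses to $a^2\,\partial_i\vp_{\infty}$, and the prefactor $-a^{-2}\bar\lambda$ produces exactly $-\bar\lambda\,\partial_i\vp_{\infty}$, which cancels the leftover Leibniz term $\bar\lambda\,\partial_i\vp_{\infty}$. Assembling the pieces yields $D_i^{(0)}(\bar\lambda\vp_{\infty})=(\partial_i\bar\lambda)\vp_{\infty}$, as claimed.

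I do not expect a genuine obstacle here; the computation is elementary once the two algebraic facts are in place. The only point deserving care is that the cancellation hinges entirely on $\vp_{\infty}$ having \emph{constant} norm on $S^2$: it is the simultaneous use of $\vp_{\infty}\cdot\vp_{\infty}=a^2$ and its consequence $\vp_{\infty}\cdot\partial_i\vp_{\infty}=0$ that makes the radial ($\vp_{\infty}$-parallel) part of the triple product drop out and the tangential part match the derivative term. I would therefore state explicitly that the lemma is valid on the locus where the boundary condition of Proposition~\ref{higgsb} holds, and note that the same identity applies verbatim with $\partial_i$ replaced by the angular derivatives $\partial_{\bar a}$, which is the form actually used in \eqref{gaugipara} and the surrounding discussion.
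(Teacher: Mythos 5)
Your proof is correct and is essentially the paper's own argument: the paper's proof simply refers to equation \eqref{gaugipara}, which carries out exactly this computation (Leibniz expansion, substitution of the boundary condition for $\va_i^{(0)}$, the triple-product identity, and the two facts $\|\vp_{\infty}\|^2=a^2$ and $\vp_{\infty}\cdot\partial_i\vp_{\infty}=0$), just with $\bar\lambda$ playing the role of $\bar\epsilon$ and an overall sign from $-D_i\vec{\epsilon}$. You have merely written out in full what the paper cites implicitly, including the correct observation that the cancellation hinges on the constancy of $\|\vp_{\infty}\|$.
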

\begin{proof}
	See equation \eqref{gaugipara}.
\end{proof}
As the asymptotic componenent of the Yang-Mills momentum has to be $\vpi^i_{(0)}=\bar{\pi}^i\vp_{\infty}$, which will be shown in the section \ref{aspoin}, $\delta_{\bar{\epsilon}}\bar{\pi}^i=0$, like in electromagnetism.

The spatial Yang-Mills curvature is getting the form of a slightly modified spatial field-strenght tensor of of electromagnetism:
\begin{align}\label{asymptoticF}
\vec{F}_{ij}^{(0)}=&\,\partial_i\va^{(0)}_j-\partial_j\va_i^{(0)}+e\va^{(0)}_i\times\va^{(0)}_j\nonumber\\
=&\,\partial_i(\bar{A}_j\vp_{\infty})-\partial_j(\bar{A}_i\vp_{\infty})-(ea^2)^{-1}(\partial_i(\vp_{\infty}\times\partial_j\vp_{\infty})-\partial_j(\vp_{\infty}\times\partial_i\vp_{\infty})\nonumber\\
&-e\bar{A}_i\vp_{\infty}\times(ea^2)^{-1}(\vp_{\infty}\times\partial_j\vp_{\infty})+e\bar{A}_j\vp_{\infty}\times(ea^2)^{-1}(\vp_{\infty}\times\partial_i\vp_{\infty})\nonumber\\&+e(ea^2)^{-2}(\vp_{\infty}\times\partial_i\vp_{\infty})\times(\vp_{\infty}\times\partial_j\vp_{\infty})\nonumber\\=&\,(\partial_i\bar{A}_j-\partial_j\bar{A}_i)\vp_{\infty}+(ea^4)^{-1}(\vp_{\infty}\cdot(\partial_i\vp_{\infty}\times\partial_j\vp_{\infty}))\vp_{\infty}=:\bar{F}_{ij}\vp_{\infty}.
\end{align}
The first term of $\bar{F}_{ij}$ matches with the spatial field strenght tensor of electromagnetism. The second term is actually the asymptotic field of a \textit{magnetic monopole}.

In a theory of electromagnetism where the electric field is dual to the magnetic field, the coupling of the magnetic charge current to the electromagnetic field is descibed by the equation $\text{d}F=j^{\text{mag}}$, with the 4 dimensional electromagnetic field strenght tensor $F$ \cite[Equation 2.9]{Goddard:1978}. The magnetic charge in the Cauchy hypersurface $\Sigma$ is
\begin{align}
m=\frac{1}{4\pi}\int_{\Sigma}j^{\text{mag}}=\frac{1}{4\pi}\int_{\Sigma}\text{d}F=\frac{1}{4\pi}\int_{\partial\bar{\Sigma}}F\,.
\end{align}
Taking asymptotically $F_{ij}=\bar{F}_{ij}$, the magnetic charge in the $\SU(2)$ Yang-Mills-Higgs theory with adapted units is
\begin{align}
m=\frac{ea^4}{4\pi}\oint\text{d}^2x\,\frac{1}{2}(ea^4)^{-1}\epsilon_{\bar{a}\bar{b}}\,\vp_{\infty}\cdot(\bar{\nabla}^{\bar{a}}\vp_{\infty}\times\bar{\nabla}^{\bar{b}}\vp_{\infty}),\label{them}
\end{align}
where $\epsilon_{\bar{a}\bar{b}}$ is the volume two form on $S^2$ and $\bar{\nabla}$ the Levi civita connection of the round metric $\gamma_{\bar{a}\bar{b}}$ (see \ref{spherical}). In the appendix \ref{appendix1} it will be shown that $m$ is actually a \textit{topological charge}, i.e.\ it is the \textit{winding number} of the vacuum Higgs field $\vp_{\infty}:S^2\longrightarrow S^2$. In the section \ref{globalstructure} it will be shown that $m$ labels the connection components of the phase space.\\\\
As in electromagnetism, states can also have a electric charge, which is the generator of asymptotically constant $\mathrm{U}(1)$ gauge transformations, i.e. with adapted units
\begin{align}
q=\frac{1}{4\pi a^2}\oint\text{d}^2x\,\bar{\epsilon}\,\vp_{\infty}\cdot\vpi^r_{(0)}\;\text{and}\;\bar{\epsilon}=1.
\end{align}
With the result $\vpi^r_{(0)}=\bar{\pi}^r\vp_{\infty}$ of section \ref{aspoin} we get the electric charge of electromagnetism, i.e. $q=(4\pi)^{-1}\oint\text{d}^2x\,\bar{\pi}^r$.

Depending on further boundary conditions (section \ref{parity}) we will investigate whether a non-zero electric charge is actually allowed in phase space. We will see that this is the case.

In the Hamiltonian study of electromagnetism in the Henneaux/Troessaert paper \cite{Henneaux-ED} it is also found an infinite number of non-zero \textit{soft charges}, which are
\begin{align}
q[\bar{\epsilon}]=\frac{1}{4\pi}\oint\text{d}^2x\,\bar{\epsilon}\,\bar{\pi}^r
\end{align}
for $\bar{\epsilon}\in C^{\infty}(S^2)$ being an even function under parity $x\mapsto-x$. Soft charges corresponding to odd $\bar{\epsilon}$ are also allowed, but they are calculated by a different integral expression (see also section \ref{asym}).

\subsection{Example: Dyon-solution}\label{Dyon}
The fall-off- and boundary conditions we stated should allow for known solutions of $\SU(2)$ Yang-Mills-Higgs theory. The most important solution is the Julia-Zee Dyon, discovered by Bernard Julia and Anthony Zee in 1975 (\cite{Julia.Zee:1975}).

The Dyon has a certain asymptotic behaviour. Using the notation of \cite{Giulini:1995}, the equations (2.5) and \cite[Section III.]{Julia.Zee:1975}, as well as \cite[equation 4.26]{Goddard:1978}, one finds:
\begin{align}
\va_0(x)&\sim \vec{n}\,(1+\frac{b_1}{r}+...)\label{Dyon1}\\
\va_i(x)&\sim \frac{\vec{e}_i\times\vec{n}}{r}+O(e^{-\alpha r})\label{Dyon2}\\
\vp(x)&\sim\vec{n}+O(e^{-\beta r}),
\end{align}
where $\alpha$, $\beta$ and $b_1$ are some constants. The vector space of the coordinate chart $x$ for $\Sigma\cong\mathbb{R}^3$ is identified with $\mathfrak{su}(2)\cong\mathbb{R}^3$ by chosing an orthonormal basis for $\mathfrak{su}(2)$. In this identification is $\vec{n}=\frac{\vec{x}}{r}$ and $\vec{e}_i$ the i-th basisvector. These boundary condition fulfill the fall-off and boundary conditions stated above.  The boundary condition $\va_i^{(0)}=\bar{A}_i\vp_{\infty}-(ea^2)^{-1}\vp_{\infty}\times\partial_i\vp_{\infty}$ follows by $\partial_i\vec{n}=r^{-1}\vec{e}_i-r^{-1}n_i\vec{n}$ $\Rightarrow$ $r^{-1}\vec{e}_i\times\vec{n}=-\vec{n}\times\partial_i{n}$, which fulfills the boundary condition \eqref{ai} with $\bar{A}_i=0$.

The Dyon solution has the magnetic charge $m=1$ and the electric charge $q=b_1$. $m=1$ is easily calculated with
the equation \eqref{them}.
 %the method of appendix \ref{appendix1} or after realizing that $\vec{n}\cdot(\bar{\nabla}^{\bar{a}}\vec{n}\times\bar{\nabla}^{\bar{b}}\vec{n})$ is the standart volume form on $S^2$ (see also section \ref{parity}).
  For the calculation of the electric charge we need the radial asymptotic electric field $\bar{\pi}^r$. While using \eqref{Dyon1} and \eqref{Dyon2} one finds $\vec{\pi}_{(0)}^r=\sqrt{\bar{\gamma}}r^2\vec{F}_{(0)}^{0r}=\va_0^{(1)}0=b_1\vec{n}$. Then $q=\frac{1}{4\pi a^2}\oint\text{d}^2x\,\vec{n}\cdot\vpi^r_{(0)}=b_1$.

  \section{Poincar\'{e} transformations}		
  
  As $\SU(2)$ Yang-Mills-Higgs theory is a relativistic field theory, the phase space $\mathbb{P}$ has to carry an canonical action of the Poincar\'{e} group (which is then automatically Hamiltonian as the Poincar\'{e} algebra is perfect, see \cite[Chapter 3.2, 3.3]{Woodhouse:GQ}. A necessary condition for such an action is, that there is a canonical action of the Poincar\'{e} algebra, which is a Lie algebra homomorphism $X:\mathfrak{poin}\longrightarrow \Gamma(T\mathbb{P})$, such that $\liephase_{X(\mathfrak{poin})}\Omega=0$ and $i_{X(p)}\Omega=\extder P^{p}$, where $P^p$ is a differentiable local functional for every $p\in\mathfrak{poin}$. Whether the action of the Poincar\'{e} algebra is globally integrable to an action of the Poincar\'{e} group is in this infinite dimensional setting a rather technical question, which will not be answered in this paper.\\
  
  In this chapter we check whether the Poincar\'{e} transformations preserve the fall-off- and boundary conditions from the previous chapter and state further boundary conditions that are invariant under Poincar\'{e} transformations.
  In the first section the infinitesimal Poincar\'{e} transformations of the canonical variables are calculated.
  \subsection{Action of the Poincar\'{e} algebra}
  The action of the Poincar\'{e} algebra on Minkowski-space is represented by the vector-fields
  \begin{align}\label{poingens}
  (\xi^{\perp},\vect{\xi}):=\xi\frac{\partial}{\partial x^0}+\xi^i\frac{\partial}{\partial x^i}\;\text{with}\;\xi:=b_ix^i+a^{\perp}\;\text{and}\;\xi^i:=b^{i}_j x^j+a^i
  \end{align}
  where $b_i,b_{ij}=-b_{ji},a^{\perp},a^i\in\mathbb{R}$. $b_i$ parametrize Lorentz boosts, $b_{ij}$ parametrize spatial rotations, $(a^{\perp},a^i)$ are space-time translations (see also \cite{Henneaux-ED}). These vector fields define certain Cauchy-hypersurface deformations. Hypersurface deformations are in general locally parametrized by a lapse function $N$ and a shift function $\vect{N}$. With an analogous calculation like in \cite[Chapter 2]{Tanzi:2021}, the phase-space generator of the action of the deformations on the canonical variables is determined to be
  \begin{align}
  P_{(N,\vect{N})}=\int\text{d}^3x\,(N\mathcal{P}_0+N^i\mathcal{P}_i)+B_{(N,\vect{N})},\label{thegenerator}
  \end{align}
  where
  \begin{align}
  \mathcal{P}_0&=\mathcal{H}=\frac{1}{2}\vpi^i\cdot\vpi_i+\frac{1}{2}\vec{\Pi}\cdot\vec{\Pi}+\frac{1}{4}\vec{F}^{ij}\cdot\vec{F}_{ij}+\frac{1}{2}D^i\vp\cdot D_i\vp+V(\vp)+\vec{\lambda}\cdot\vec{\mathscr{G}},\\
  \mathcal{P}_i&=\vpi^j\cdot\partial_i\va_j-\partial_j(\vpi^j\cdot\va_i)+\vec{\Pi}\cdot\partial_i\vp
  \end{align}
  and $B_{(N,\vect{N})}$ is a boundary term to make the expression functionally differentiable. $B$ is to be determined after the complete specification of the fall-off and boundary conditions.
  
  \begin{bem}\label{Poingauge}
  The gauge parameter associated to the Poincar\'{e} transformations will be denoted by $\vec{\zeta}=\xi^{\perp}\vec{\lambda}$. The fact that the gauge transformation goes along only with the boosts and the time translations is an artefact of the geometrical derivation of \eqref{thegenerator}. Indeed we could combine any Poincar\'{e} transformation (of spatial- or non-spatial type) with a gauge transformation and the result is a different but also appropriate action of the Poincar\'{e} algebra, as a gauge transformation does not change the physical state. But caution: Depending on the asymptotic behaviour, this gauge transformation can actually be improper, hence resulting in a different physical transformation. When those improper gauge transformations are present in the theory the Poincar\'{e} algebra will not be a symmetry algebra anymore, rather the symmetry algebra is the asymptotic symmetry algebra $\mathfrak{asym}$ with an embedding $\mathfrak{poin}\hookrightarrow\mathfrak{asym}$, which is not unique.
  \end{bem}
 Let's determine the local Poincar\'{e} transformations of the canonical variables, assuming that we have found the right boundary terms $B_{(\xi^{\perp},\vect{\xi})}$, with the formula
 \begin{align}
 \delta_{(\xi^{\perp},\vect{\xi})}(\,\cdot\,)=\{(\,\cdot\,),P_{(\xi^{\perp},\vect{\xi})}\}.
 \end{align}
 This calculation results in
 \begin{flalign}\label{Poinis}
 \delta_{(\xi^{\perp},\vect{\xi})}\va_i&=\frac{\xi^{\perp}\vpi_i}{\sqrt{g}}+\xi^j\partial_j \va_i+(\partial_i\xi^j)\va_j-D_i\vec{\zeta}, \nonumber\\
 \delta_{(\xi^{\perp},\vect{\xi})}\vpi^i&=D_j(\sqrt{g}\xi^{\perp} \vec{F}^{ji})-e\sqrt{g}\xi^{\perp}\vp\times D^i\vp+\partial_j(\xi^j\vpi^i)-\vpi_j\partial^j\xi^i+e\vec{\zeta}\times\vpi^i,\nonumber\\
 \delta_{(\xi^{\perp},\vect{\xi})}\vp&=\frac{1}{\sqrt{g}}\xi^{\perp}\vec{\Pi}+\xi^i\partial_i\vp+e\vec{\zeta}\times\vp,\nonumber\\
 \delta_{(\xi^{\perp},\vect{\xi})}\vec{\Pi}&=D_l(\sqrt{g}\xi^{\perp}D^l\vp)-\sqrt{g}\xi^{\perp}\frac{\partial V}{\partial\vp}+\xi^i\partial_i\vec{\Pi}+e\vec{\zeta}\times\vec{\Pi}.
 \end{flalign}
 
 \subsection{Asymptotic Poincar\'{e} transformations\label{aspoin}}
 The aim of this section is to check whether the Poincar\'{e} transformations leave the fall-off conditons \eqref{afalloff}, \eqref{pifalloff}, \eqref{Higgsfalloff}, \eqref{Higgsmomfalloff}, \eqref{epsilonfalloff} and the boundary conditions \eqref{boundary} invariant. If this is not the case we have to strengthen them to guarantee an action of the Poincar\'{e} algebra on phase space. We will see indeed that this has to be done.
 Further on we will see that we can state certain additional parity conditions on the leading orders of asymptotic expansions of the fields without breaking the Poincar\'{e} invariance.
 This conditions will be necessary to make the symplectic form finite (section \ref{parity}).\\
 
 For the asymptotic analysis it is useful to switch to the spherical coordinates $(r,\bar{x}^{\bar{a}})$ on $\Sigma$.
 The transformation generators $(\xi^{\perp},\vect{\xi})$ expressed in spherical coordinates are (see also \cite[Chapter 2.2]{Henneaux-ED})
 \begin{align}
 \xi^{\perp}=rb+T\;,\; \xi^r=W\;,\; \xi^{\bar{a}}=Y^{\bar{a}}+\frac{1}{r}\bar{\gamma}^{\bar{a}\bar{b}}\partial_{\bar{b}} W,
 \end{align}
 where $T,\, b,\, W$ and $Y^{\bar{a}}$ are the following functions of the angles $(\theta,\varphi)$:
 \begin{flalign}
 T(\theta,\varphi)&=a^{\perp}\\
 b(\theta,\varphi)&=\vec{b}\cdot\vec{e}_r(\theta,\varphi)=b_1\text{sin}(\theta)\text{cos}(\varphi)+b_2\text{sin}(\theta)\text{sin}(\varphi)+b_3\text{cos}(\theta)\\
 W(\theta,\varphi)&=\vec{a}\cdot\vec{e}_r(\theta,\varphi)=a_1\text{sin}(\theta)\text{cos}(\varphi)+a_2\text{sin}(\theta)\text{sin}(\varphi)+a_3\text{cos}(\theta)\\
 Y(\theta,\varphi)&=m_1\bigg(-\text{sin}(\varphi)\frac{\partial}{\partial\theta}-\frac{\text{cos}(\theta)}{\text{sin}(\theta)}\text{cos}(\varphi)\frac{\partial}{\partial\varphi}\bigg)\nonumber\\&+m_2\bigg(\text{cos}(\varphi)\frac{\partial}{\partial\theta}-\frac{\text{cos}(\theta)}{\sin(\theta)}\sin(\varphi)\frac{\partial}{\partial\varphi}\bigg)+m_3\frac{\partial}{\partial\varphi},
 \end{flalign}
 where $m^i=-\frac{1}{2}\epsilon^{ijk}b_{jk}$, $\vec{b}_i=b_i$ and $(\vec{a})_i=a_i$ with the $b_{jk}$, $b_i$, $a_i$ and $a^{\perp}$ from the equation \eqref{poingens}.
 
 Additionally, let's introduce some notation. In this section it is useful to make an asymptotic expansion of the covariant derivative in orders of $r^{-n}$, i.e.
 \begin{align}
 D_r&\sim\sum_{n=0}^{\infty}\frac{1}{r^{n+1}}D_r^{(n)}:=\partial_r+\frac{e}{r}(\va_r^{(0)}\times\;)+\sum_{n=1}^{\infty}\frac{e}{r^{n+1}}(\va_r^{(n)}\times\;),\nonumber\\
 D_{\bar{a}}&\sim\sum_{n=0}^{\infty}\frac{1}{r^n}D_{\bar{a}}^{(n)}:=\partial_{\bar{a}}+e(\va_{\bar{a}}^{(0)}\times\;)+\sum_{n=1}^{\infty}\frac{e}{r^n}(\va_{\bar{a}}^{(n)}\times\;).
 \end{align}
 One can pull the index of $D^{(n)}_{\bar{a}}$ up by $\bar{\gamma}^{\bar{a}\bar{b}}$.\\
 
 First of all, the structure of the fall-off behaviour of the fields, being an asymptotic series in $r^{-n}$, is preserved by the transformation rules \eqref{Poinis}, because $\xi^{\perp}=rb+T,\; \xi^r=W,\; \xi^{\bar{a}}=Y^{\bar{a}}+\frac{1}{r}\bar{\gamma}^{\bar{a}\bar{b}}\partial_{\bar{b}} W$ while \eqref{Poinis} only involves the fields, derivatives (which are $\sim r^{-1}$ for fields having an asymptotic series in $r^{-n}$), $\xi^{\perp}$ and $\vect{\xi}$.\\
 
 The next step is to examine whether any Poincar\'{e} transformation of a variable might have higher order terms as allowed by the fall-off conditions. In one strike the invariance of the boundary conditions \eqref{boundary} is checked.
 
 By \eqref{boundary} the asymptotic components of $\va_i$ depend on $\vp_{\infty}$, so it is natural to investigate the behaviour of $\delta_{(\xi^{\perp},\vect{\xi})}\vp_{\infty}$ first. Using \eqref{Poinis} and the fall-off behaviour,
 \begin{flalign}\label{Higgsboost}
 \delta_{(\xi^{\perp},\vect{\xi})}\vp_{\infty}&=\mathcal{O}_1\bigg(\frac{1}{\sqrt{g}}\xi^{\perp}\vec{\Pi}+\xi^i\partial_i\vec{\phi}+e\xi^{\perp}\vec{\zeta}\times\vec{\phi}\bigg)\nonumber\\
 &=Y^{\bar{a}}\partial_{\bar{a}}\vp_{\infty}+e\vec{\zeta}^{(0)}\times\vp_{\infty}.
 \end{flalign}
 The fall-off conditions \eqref{Higgsfalloff}, \eqref{Higgsmomfalloff} and \eqref{epsilonfalloff} prohibit us from possible higher order terms in $\delta_{(\xi^{\perp},\vect{\xi})}\vp$.
 
 The Poincar\'{e} transformation of the leading component of $\vec{A}_r$ is:
 \begin{flalign}\label{arpoin}
 \delta_{(\xi^{\perp},\vect{\xi})}\frac{1}{r}\va_r^{(0)}&=\mathcal{O}_{r^{-1}}\bigg(\frac{\xi^{\perp}\vec{\pi}_r}{\sqrt{g}}+\xi^j\partial_j\va_r+(\partial_r\xi^j)\va_j-D_r\vec{\zeta}\bigg)\nonumber\\
 &=\frac{1}{r}\frac{b}{\sqrt{\bar{\gamma}}}\vpi_r^{(0)}+\frac{1}{r}\vp_{\infty}Y^{\bar{a}}\partial_{\bar{a}}\bar{A}_r+\frac{1}{r}\bar{A}_r Y^{\bar{a}}\partial_{\bar{a}}\vp_{\infty}+\frac{e}{r}\bar{A}_r\vec{\zeta}^{(0)}\times\vp_{\infty}.
 \end{flalign}
 $\va_r^{(0)}$ fulfills the boundary condition $\va_r^{(0)}=\bar{A}_r\vp_{\infty}$. By the identity $\delta_{(\xi^{\perp},\vect{\xi})}\va_r^{(0)}=\delta_{(\xi^{\perp},\vect{\xi})}(\bar{A}_r\vp_{\infty})=(\delta_{(\xi^{\perp},\vect{\xi})}\bar{A}_r)\vp_{\infty}+\bar{A}_r(\delta_{(\xi^{\perp},\vect{\xi})}\vp_{\infty})$ and $\delta_{(\xi^{\perp},\vect{\xi})}\vp_{\infty}=Y^{\bar{a}}\partial_{\bar{a}}\vp_{\infty}+e\vec{\zeta}^{(0)}\times\vp_{\infty}$, we have to propose the boundary condition:
 \begin{align}
 \vpi_r^{(0)}=\bar{\pi}_r\vp_{\infty}.\label{pir}
 \end{align}
 
Let's calculate the asymptotic Poincar\'{e}-transformations of the angular components:
\begin{align}\label{a0poin}
\delta_{(\xi^{\perp},\vect{\xi})}\va^{(0)}_{\bar{a}}&=\mathcal{O}_{1}\bigg(\frac{\xi^{\perp}\vpi_{\bar{a}}}{\sqrt{g}}+\xi^j\partial_j\va_{\bar{a}}+\va_j\partial_{\bar{a}}\xi^j+D_{\bar{a}}\vec{\zeta}\bigg)\nonumber\\
&=\frac{b\vpi^{(0)}_{\bar{a}}}{\sqrt{\bar{\gamma}}}+Y^{\bar{b}}\partial_{\bar{b}}\va^{(0)}_{\bar{a}}+\va^{(0)}_{\bar{b}}\partial_{\bar{a}} Y^{\bar{b}}-D^{(0)}_{\bar{a}}\vec{\zeta}^{(0)}.
\end{align}
This transformation has to preserve the boundary condition $\va^{(0)}_{\bar{a}}=\bar{A}_{\bar{a}}\vp_{\infty}-\frac{1}{ea^2}\vp_{\infty}\times\partial_{\bar{a}}\vp_{\infty}$. This is the case if and only if
\begin{align}\label{a0boundarycondition}
\delta_{(\xi^{\perp},\vect{\xi})}\va_{\bar{a}}^{(0)}=(\delta_{(\xi^{\perp},\vect{\xi})}\bar{A}_{\bar{a}})\vp_{\infty}+\bar{A}_{\bar{a}}\delta_{(\xi^{\perp},\vect{\xi})}\vp_{\infty}-(ea^2)^{-1}\delta_{(\xi^{\perp},\vect{\xi})}(\vp_{\infty}\times\partial_{\bar{a}}\vp_{\infty}).
\end{align}
As calculated before, $\delta_{(\xi^{\perp},\vect{\xi})}\vp_{\infty}=Y^{\bar{a}}\partial_{\bar{a}}\vp_{\infty}+e\vec{\zeta}^{(0)}\times\vp_{\infty}$. Therefore:
\begin{align}
\delta_{(0,\vect{\xi})}(\vp_{\infty}\times\partial_{\bar{a}}\vp_{\infty})=Y^{\bar{b}}\partial_{\bar{b}}(\vp_{\infty}\times\partial_{\bar{a}}\vp_{\infty})+(\partial_{\bar{a}}Y^{\bar{b}})\vp_{\infty}\times\partial_{\bar{b}}\vp_{\infty},
\end{align}
which together with $\delta_{(0,\vect{\xi})}\vp_{\infty}=Y^{\bar{a}}\partial_{\bar{a}}\vp_{\infty}$ and \eqref{a0poin} shows that the boundary condition is preserved, while $\delta_{(0,\vect{\xi})}\bar{A}_{\bar{a}}=Y^{\bar{b}}\partial_{\bar{b}}\bar{A}_{\bar{a}}$.
Concerning the boosts,
\begin{align}
\delta_{(\xi^{\perp},0)}(\vp_{\infty}\times\partial_{\bar{a}}\vp_{\infty})=e(\vec{\zeta}^{(0)}\times\vp_{\infty})\times\partial_{\bar{a}}\vp_{\infty}+e\vp_{\infty}\times\partial_{\bar{a}}(\vec{\zeta}^{(0)}\times\vp_{\infty})\nonumber\\=e\vp_{\infty}(\vec{\zeta}^{(0)}\cdot\partial_{\bar{a}}\vp_{\infty})+ea^2\partial_{\bar{a}}\vec{\zeta}^{(0)}-e\vp_{\infty}(\partial_{\bar{a}}\vec{\zeta}^{(0)}\cdot\vp_{\infty})-e\partial_{\bar{a}}\vp_{\infty}(\vec{\zeta}^{(0)}\cdot\vp_{\infty})=(*).
\end{align}
With the decomposition $\vec{\zeta}^{(0)}=\bar{\zeta}\vp_{\infty}+\vec{\zeta}^{(0)\perp}$,
\begin{align}\label{jo1}
(*)=e\vp_{\infty}(\vec{\zeta}^{(0)\perp}\cdot\partial_{\bar{a}}\vp_{\infty})+ea^2(\partial_{\bar{a}}\vec{\zeta}^{(0)\perp})^{\perp}=-e\vp_{\infty}(\partial_{\bar{a}}\vec{\zeta}^{(0)\perp}\cdot\vp_{\infty})+ea^2(\partial_{\bar{a}}\vec{\zeta}^{(0)\perp})^{\perp}
\end{align}
and
\begin{align}\label{jo2}
-D^{(0)}_{\bar{a}}\vec{\zeta}^{(0)}=-(\partial_{\bar{a}}\bar{\zeta})\vp_{\infty}-(\partial_{\bar{a}}\vec{\zeta}^{(0)\perp})^{\perp}-e\bar{A}_{\bar{a}}\vp_{\infty}\times\vec{\zeta}^{(0)}.
\end{align}
With \eqref{jo1} and \eqref{jo2}, \eqref{a0boundarycondition} is fulfilled if and only if
\begin{align}\label{baratrafo1}
(\delta_{(\xi^{\perp},0)}\bar{A}_{\bar{a}})\vp_{\infty}=\frac{b\vpi_{\bar{a}}^{(0)}}{\sqrt{\bar{\gamma}}}-(\partial_{\bar{a}}\bar{\zeta})\vp_{\infty}-a^{-2}\vp_{\infty}(\partial_{\bar{a}}\vec{\zeta}^{(0)\perp}\cdot\vp_{\infty}).
\end{align}
This can only be the case, if there is a further boundary condition
\begin{align}
\vpi^{\bar{a}}_{(0)}=\bar{\pi}^{\bar{a}}\vp_{\infty}.
\end{align}

\begin{bem}
	Note that the first two terms of \eqref{baratrafo1} are the transformation behaviour from electromagnetism, while the third term is cancelled by a term of the transformation of $-(ea^2)^{-1}\vp_{\infty}\times\partial_{\bar{a}}\vp_{\infty}$. A redefinition of the boost transformation rule, being $\delta^{\text{EM}}_{(\xi^{\perp},0)}\bar{A}_{\bar{a}}:=b\sqrt{\bar{\gamma}}^{-1}\bar{\pi}_{\bar{a}}-\partial_{\bar{a}}\bar{\zeta}$ and $\delta^{\text{EM}}_{(\xi^{\perp},0)}(\vp_{\infty}\times\partial_{\bar{a}}\vp_{\infty}):=ea^2(\partial_{\bar{a}}\vec{\zeta}^{(0)\perp})^{\perp}$ would mirror this fact.
\end{bem}

Let's consider the Poincar\'{e} transformation of the canonical momenta of the Yang-Mills field. These have to preserve the newly found boundary  conditions $\vpi^i=\bar{\pi}^i\vp_{\infty}$ and the fall-off conditions. This leads to additional conditions for the next to leading order of $\va_i$, which have to be invariant under Poincar\'{e} transformations too. Subsequently this leads to additional conditions for the next to leading order of $\vpi^i$, resulting in a bootstrap. The result of this bootstrap performed on the constraint surface is presented in the following proposition.

\begin{prop}\label{mass}
	With the assumption that the local action \eqref{Poinis} of the Poincar\'{e} algebra leaves the boundary and fall-off conditions invariant, it follows:
	\begin{align}
	\mathcal{O}_{r^{-k}}(D_i\vp)=0\;\, \forall k\in\mathbb{N}, \forall\{((A,\phi),(\pi,\Pi)\}\in\mathcal{C},
	\end{align}
	 where $\mathcal{C}$ is the set of all canonical variables fulfilling all previous stated boundary conditions and the constraints.	
\end{prop}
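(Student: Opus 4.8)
The plan is to argue by strong induction on $k$, proving that every coefficient $\mathcal{O}_{r^{-k}}(D_i\vp)$ in the asymptotic expansion of the spatial covariant derivative vanishes on $\mathcal{C}$. Since $\partial_i\vp_\infty$ and $e\va_i\times\vp$ are both $O(r^{-1})$, the expansion of $D_i\vp$ starts at order $r^{-1}$, so the only content is for $k\geq 1$. The base case $k=1$ is already in hand: $\mathcal{O}_{r^{-1}}(D_i\vp)=D_i^{(0)}\vp_\infty=0$ is precisely the boundary condition \eqref{boundary} on $\va_i^{(0)}$ (cf.\ Lemma~\ref{derivative}). For the inductive step I would assume $\mathcal{O}_{r^{-j}}(D_i\vp)=0$ for all $j<k$, together with all boundary conditions forced at those lower orders, and decompose the vector $\mathcal{O}_{r^{-k}}(D_i\vp)$ into a part parallel to $\vp_\infty$ and a part orthogonal to it. Because $\|\vp_\infty\|^2=a^2\neq 0$ by Proposition~\ref{higgsb}, it suffices to show that both components vanish.

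For the orthogonal component I would use boost invariance of the momentum fall-off, working throughout in Cartesian coordinates so that $\sqrt{g}=1$. For a pure boost one has $\xi^i=0$ and $\xi^\perp=b_jx^j\sim rb$, so the rule \eqref{Poinis} for $\delta\vpi^i$ collapses to $D_j(\sqrt{g}\,\xi^\perp\vec{F}^{ji})-e\sqrt{g}\,\xi^\perp\vp\times D^i\vp+e\vec{\zeta}\times\vpi^i$. The decisive point is that the explicit factor $\xi^\perp\sim rb$ promotes the middle term by one power of $r$: if $\mathcal{O}_{r^{-k}}(D_i\vp)$ is the leading surviving coefficient, this term enters $\delta\vpi^i$ at order $r^{-(k-1)}$ with leading piece $-eb\,\vp_\infty\times\mathcal{O}_{r^{-k}}(D_i\vp)$, which lies in the plane orthogonal to $\vp_\infty$. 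Requiring the boost to respect the fall-off \eqref{pifalloff} and the momentum boundary conditions of \eqref{pir} that keep the leading coefficients of $\vpi^i$ parallel to $\vp_\infty$ (together with their higher-order analogues supplied by the running induction) then forces $\vp_\infty\times\mathcal{O}_{r^{-k}}(D_i\vp)=0$.

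For the parallel component I would use the algebraic identity $\vp\cdot D_i\vp=\tfrac12\partial_i\|\vp\|^2$, which holds because $\vp\cdot(\va_i\times\vp)=0$. Under the inductive hypothesis $\vp\cdot\mathcal{O}_{r^{-k}}(D_i\vp)$ reduces to $\vp_\infty\cdot\mathcal{O}_{r^{-k}}(D_i\vp)$, so the parallel component is read off from the coefficients of $\partial_i\|\vp\|^2$. These are killed by boost invariance of the Higgs-momentum fall-off \eqref{Higgsmomfalloff}: in the rule \eqref{Poinis} for $\delta\vec{\Pi}$ the potential term $-\sqrt{g}\,\xi^\perp\tfrac{\partial V}{\partial\vp}=-4\sqrt{g}\,\xi^\perp(\|\vp\|^2-a^2)\vp$ is again promoted one order by $\xi^\perp\sim rb$, so demanding $\delta\vec{\Pi}\in O(r^{-2})$ inductively forces $\|\vp\|^2-a^2$ to vanish to all orders. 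Hence $\partial_i\|\vp\|^2$ vanishes to all orders, the parallel component of $\mathcal{O}_{r^{-k}}(D_i\vp)$ is zero, and together with the orthogonal result and $\vp_\infty\neq0$ the whole coefficient vanishes, closing the induction.

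The main obstacle is the bookkeeping that makes this interleaved bootstrap actually close. At each order one must expand the surviving terms of $\delta\vpi^i$ and $\delta\vec{\Pi}$ and separate them into $\vp_\infty$-parallel and $\vp_\infty$-orthogonal pieces, checking that the contributions of $D_j(\sqrt{g}\,\xi^\perp\vec{F}^{ji})$, $e\vec{\zeta}\times\vpi^i$, $D_l(\sqrt{g}\,\xi^\perp D^l\vp)$, $e\vec{\zeta}\times\vec{\Pi}$ and the derivatives of $\vp_\infty$ cannot conspire to cancel the decisive terms $-eb\,\vp_\infty\times\mathcal{O}_{r^{-k}}(D_i\vp)$ and $-4b(\|\vp\|^2-a^2)\vp_\infty$; handling the background covariant derivative here leans repeatedly on Lemma~\ref{derivative} and on $D_i^{(0)}\vp_\infty=0$. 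One must also verify that each newly produced condition on $\vpi^i$ and $\vec{\Pi}$ is itself boost-invariant, so that the tower of conditions is generated in a controlled order-by-order fashion rather than feeding back inconsistently. This order-by-order disentangling, not any single conceptual step, is the technical heart of the proof.
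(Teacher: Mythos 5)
Your overall strategy (induction on the order, splitting $\mathcal{O}_{r^{-k}}(D_i\vp)$ into components parallel and orthogonal to $\vp_{\infty}$, and exploiting the promotion of terms by $\xi^{\perp}\sim rb$) points in the right direction, but the induction as you set it up does not close. The first problem is in your orthogonal-component argument: it needs the coefficients of $\vpi^i$ to be parallel to $\vp_{\infty}$ up to order $r^{-(k-1)}$, so that the term $-eb\,\vp_{\infty}\times\mathcal{O}_{r^{-k}}(D_i\vp)$ appearing in $\delta_b\vpi^i$ at that order is forbidden. Among the previously stated boundary conditions, however, only the leading coefficient is so constrained (equation \eqref{pir} and its angular analogue), so your argument works for $k\leq 3$ and then stalls: the ``higher-order analogues supplied by the running induction'' are never established, because your inductive step proves only $\mathcal{O}_{r^{-k}}(D_i\vp)=0$ and nothing re-supplies the strengthened hypothesis at the next order. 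In the paper this tower of conditions ($\vpi_i=\pi_i\vp+O(r^{-(n+2)})$, then $O(r^{-(n+3)})$) is derived, not assumed: it comes from boost invariance of the derived form of $\va_i$ (the $\xi^{\perp}\vpi_i$ term in $\delta_b\va_i$, equation \eqref{aiboost}) --- which requires first knowing that $\delta_b\vp$ is pure gauge to the relevant order, i.e.\ that the $\vec{\Pi}^{(k')}$ vanish --- and from invariance under time evolution, using $\vec{F}\times\vp=(D\circ D)\vp$ to show that $D_m\vec{F}^{mi}$ is parallel to $\vp$ to the required order.

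The second problem is that the claim ``demanding $\delta_b\vec{\Pi}\in O(r^{-2})$ inductively forces $\|\vp\|^2-a^2$ to vanish to all orders'' is false beyond the first step. The promoted potential term $-4rb(\|\vp\|^2-a^2)\vp$ violates the fall-off \eqref{Higgsmomfalloff} only at its first nontrivial order: once $\|\vp\|^2-a^2=O(r^{-3})$, this term lands at order $r^{-2}$ and beyond and is \emph{parallel} to $\vp_{\infty}$, so it violates neither the fall-off nor the stated condition $\vec{\Pi}^{(0)}=\bar{\Pi}\vp_{\infty}$. To kill higher orders one must first know that $\vec{\Pi}^{(k')}$ itself vanishes (not merely is parallel) order by order, so that preservation of those zeros constrains the potential term. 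That is exactly the paper's ``intermediate proposition'', and it cannot be proven without the Gauss constraint: one needs $\mathcal{O}_{r^{-(n+1)}}(\vec{\mathscr{G}})=\mathcal{O}_{r^{-(n+1)}}(D_i\vpi^i)+\vp_{\infty}\times\vec{\Pi}^{(n-1)}=0$ to force $\vec{\Pi}^{(n-1)}\parallel\vp_{\infty}$, and then boost invariance of the relation $\vp_{\infty}\cdot\vp^{(n)}=-\vp^{(1)}\cdot\vp^{(n-1)}-\dots$ to force $\Pi^{(n-1)}=0$. The Gauss constraint --- the very reason the statement is restricted to $\mathcal{C}$ --- never appears in your proposal; without it the interleaved bootstrap (parallelness of $\vpi^i$ $\leftrightarrow$ vanishing of $\vec{\Pi}$ $\leftrightarrow$ vanishing of $\|\vp\|^2-a^2$) cannot be set in motion, and each of the two mechanisms you correctly identify terminates after finitely many orders.
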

\begin{proof}
	We show this statement per induction. The base case is the boundary condition \eqref{ai} which is equivalent to
	\begin{align}
	\mathcal{O}_{r^{-1}}(D_i\vp)=D_i^{(0)}\vp_{\infty}=0.
	\end{align}
	Induction step: Assume 
	\begin{align}
	\mathcal{O}_{r^{-k}}(D_i\vp)=0\;\forall k\leq n+1.
	\end{align}
	We have to show
	\begin{align}
	\mathcal{O}_{r^{-(n+2)}}(D_i\vp)=0.
	\end{align}
	The assumption is equivalent to $\partial_i\vp+e\va_i\times\vp=O(r^{-(n+2)})$. Taking the cross product with $\vp$ and using $\vp\times O(r^{-(n+2)})=O(r^{-(n+2)})$:
	\begin{align}\label{aiai}
	&\vp\times\partial_i\vp+e\vp\times(\va_i\times\vp)=O(r^{-(n+2)})\nonumber\\\Leftrightarrow\;&\vp\times\partial_i\vp+e\|\vp\|^2\va_i-e\vp(\vp\cdot\va_i)=O(r^{-(n+2)})\nonumber\\
	\Leftrightarrow\;&\va_i=A_i\vp-\frac{1}{e\|\vp\|^2}\vp\times\partial_i\vp+O(r^{-(n+2)}),
	\end{align}
	where $A_i:=\|\vp\|^{-2}\va_i\cdot\vp$. This condition has to be invariant under Lorentz boosts
	\begin{align}\label{aiboost}
	\delta_{(\xi^{\perp},0)}\va_i&=\xi^{\perp}\vpi_i-D_i\vec{\zeta},\;
	\text{i.e.}\nonumber\\\xi^{\perp}\vpi_i-D_i\vec{\zeta}&=(\delta_{(\xi^{\perp},0)}A_i)\vp+A_i(\delta_{(\xi^{\perp},0)}\vp)-\delta_{(\xi^{\perp},0)}\bigg(\frac{1}{e\|\vp\|^2}\vp\times\partial_i\vp\bigg)+O(r^{-(n+2)}).
	\end{align}
	Before conclude a condition for $\vpi_i$ we have to show an intermediate proposition:\\
	
	If $\{((A,\phi),(\pi,\Pi)\}\in\mathcal{C}$ and $\mathcal{O}_{r^{-k}}(D_i\vp)=0$ $\forall k\leq n+1$, then $\vec{\Pi}^{(k')}=0$ $\forall k'\leq n-1$.\\
	
	Proof: We use induction again, until we reach the order $k=n+1$ in $\mathcal{O}_{r^{-k}}(D_i\vp)$. The base case follows from the fact that the boundary condition $\vp^{(1)}\cdot\vp_{\infty}=0$ has to be invariant under boosts, i.e.\
	\begin{align}
	\label{basecase}
	&\;\;\,0=\delta_b(\vp^{(1)}\cdot\vp_{\infty})=(\delta_b \vp^{(1)})\cdot\vp_{\infty}+\vp^{(1)}\cdot\vp_{\infty}\nonumber\\ &\underbrace{=}_{\text{using \ref{boundary}}}\frac{b}{\sqrt{\bar{\gamma}}}\bar{\Pi}a^2+e(\vec{\zeta}^{(0)}\times\vp^{(1)})\cdot\vp_{\infty}+e(\vec{\zeta}^{(1)}\times\vp_{\infty})\cdot\vp_{\infty}+e\vp^{(1)}\cdot(\vec{\zeta}^{(0)}\times\vp_{\infty})\nonumber\\&\;\;\;\,=\frac{b}{\sqrt{\bar{\gamma}}}\bar{\Pi}a^2\;\Leftrightarrow\;\bar{\Pi}=0.
	\end{align}
	Induction step: Let $\vec{\Pi}^{(k')}=0$ $\forall k'\leq n-2$ and  $\mathcal{O}_{r^{-k}}(D_i\vp))=0$ $\forall k\leq n+1$. Using $\vec{\Pi}^{(k')}=0$ $\forall k'\leq n-2$ it follows
	\begin{align}
	\delta_{(\xi^{\perp},0)}r^{-(k'+1)}\vp^{(k'+1)}=r^{-(k'+2)}\xi^{\perp}\vec{\Pi}^{(k')}+e\,\mathcal{O}_{r^{-(k'+1)}}(\vec{\zeta}\times\vp)=e\,\mathcal{O}_{r^{-(k'+1)}}(\vec{\zeta}\times\vp)\,\;\forall k'\leq n-2.
	\end{align}
	Deploy this in the order $\sim r^{-(k'+2)}$ of the equation \eqref{aiboost} which gives
	\begin{align}
	\xi^{\perp}\vpi_i-D_i\vec{\zeta}&=(\delta_{(\xi^{\perp},0)}A_i)\vp+A_i(\delta_{\vec{\zeta}}\vp)\nonumber\\&-\delta_{\vec{\zeta}}\bigg(\frac{1}{e\|\vp\|^2}\vp\times\partial_i\vp\bigg)+O(r^{-(n+1)}).
	\end{align}
	The only term on the right hand side that is not purely a gauge transformation is the first one. Hence
	\begin{align}\label{pipi}
	(\delta^{\vec{\zeta}=0}_{(\xi^{\perp},0)}A_i)\vp=\xi^{\perp}\vpi_i+O(r^{-(n+1)})\;\Rightarrow \;\vpi_i=\pi_i\vp+O(r^{-(n+2)}),
	\end{align}
	where we used that the term $\xi^{\perp}\vpi_i$ in $\delta_{\xi^{\perp}}\va_i$ is independent of the choice of Poincar\'{e} action parameterized by the gauge parameter $\vec{\zeta}$.
	
	Let's study the different orders of the constraint equation using $\vec{\Pi}^{(k')}=0$ $\forall k'\leq n-2$:
	\begin{align}\label{constraint}
	\mathcal{O}_{r^{-(n+1)}}(\vec{\mathscr{G}})=\mathcal{O}_{r^{-(n+1)}}(D_i\vpi^i)+\vp_{\infty}\times\vec{\Pi}^{(n-1)}=0.
	\end{align}
	Using \eqref{pipi} and \eqref{aiai},
	\begin{align}
	D_i\vpi^i&=D_i(\pi^i\vp)+O(r^{-(n+3)})\nonumber\\
	&=\partial_i(\pi^i\vp)-\frac{\pi^i}{\|\vp\|^2}(\vp\times\partial_i\vp)\times\vp+O(r^{-(n+3)})\nonumber\\
	&=(\partial_i\pi^i)\vp+\frac{\pi^i}{\|\vp\|^2}(\partial_i\vp\cdot\vp)\vp+O(r^{-(n+3)}).\\
	\Rightarrow \mathcal{O}_{r^{-(n+1)}}(D_i\vpi^i)&=\mathcal{O}_{r^{-(n+1)}}(\partial_i\pi^i+\|\vp\|^{-2}\pi^i(\partial_i\vp\cdot\vp))\vp_{\infty}\nonumber\\&+\mathcal{O}_{r^{-n}}(\partial_i\pi^i+\|\vp\|^{-2}\pi^i(\partial_i\vp\cdot\vp))\frac{1}{r}\vp^{(1)}+...\; .
	\end{align}
	As $\mathcal{O}_{r^{-3}}(\vec{\mathscr{G}})=\mathcal{O}_{r^{-3}}(D_i\vpi^i)=\mathcal{O}_{r^{-3}}(\partial_i\pi^i+\|\vp\|^{-2}\pi^i(\partial_i\vp\cdot\vp))\vp_{\infty}=(\partial_i\bar{\pi}^i)\vp_{\infty}=0$,
	\begin{align} \mathcal{O}_{r^{-4}}(\vec{\mathscr{G}})=\mathcal{O}_{r^{-4}}(D_i\vpi^i)&=\mathcal{O}_{r^{-4}}(\partial_i\pi^i+\|\vp\|^{-2}\pi^i(\partial_i\vp\cdot\vp))\vp_{\infty}\nonumber\\++\mathcal{O}_{r^{-3}}(\partial_i\pi^i+\|\vp\|^{-2}\pi^i(\partial_i\vp\cdot\vp))r^{-1}\vp^{(1)}\nonumber\\&=\mathcal{O}_{r^{-4}}(\partial_i\pi^i+\|\vp\|^{-2}\pi^i(\partial_i\vp\cdot\vp))\vp_{\infty}=0
	\end{align}
	By bootstrapping this up to the order $\sim r^{-(n+1)}$ while using at every step $\vec{\Pi}^{(k')}=0$ $\forall k'\leq n-2$, we get
	\begin{align}
	\mathcal{O}_{r^{-(n+1)}}(D_i\vpi^i)&=\mathcal{O}_{r^{-(n+1)}}(\partial_i\pi^i+\|\vp\|^{-2}\pi^i(\partial_i\vp\cdot\vp))\vp_{\infty}.
	\end{align}
	Applying the constraint at this order (\ref{constraint}) we can conclude
	\begin{align}
	\vec{\Pi}^{(n-1)}=\Pi^{(n-1)}\vp_{\infty}.
	\end{align}
	Going further, this canonical momentum gives rise to a boost transformation with
	\begin{align}\label{equation}
	\delta_{(\xi^{\perp},0)}^{\vec{\zeta}=0}\vp^{(n)}=b\vec{\Pi}^{(n-1)}=b\Pi^{(n-1)}\vp_{\infty}.
	\end{align}
	Additionally, since $\vec{\Pi}^{(n-2)}=0$ it has to be $\partial_t\vec{\Pi}^{(n-2)}=0$, where
	\begin{align}\label{boooti}
	\partial_t\vec{\Pi}^{(n-2)}&=\mathcal{O}_{r^{-n}}\bigg(D_mD^m\vp-\frac{\partial V}{\partial\vp}+e\vec{\zeta}\times\vec{\Pi}\bigg)\nonumber\\&=-\mathcal{O}_{r^{-n}}\bigg(\frac{\partial V}{\partial\vp}\bigg)=-\mathcal{O}_{r^{-n}}(4(\|\vp\|^2-a^2)\vp)\nonumber\\
	&=-4\mathcal{O}_{r^{-n}}(\|\vp\|^2-a^2)\vp_{\infty}-4\mathcal{O}_{r^{-(n-1)}}(\|\vp\|^2-a^2)r^{-1}\vp^{(1)}-...=0.
	\end{align}
	Via a bootstrap starting from $\partial_t\vec{\Pi}^{(0)}=0$ and going to $\partial_t\vec{\Pi}^{(n-3)}$ one concludes that all expressions $\mathcal{O}_{r^{-l}}(\|\vp\|^2-a^2)$ for $l\leq n-1$ are already zero.
	\begin{align}
	\Rightarrow\;\mathcal{O}_{r^{-n}}(\|\vp\|^2-a^2)&=\frac{1}{r^n}(2\vp_{\infty}\cdot\vp^{(n)}+2\vp^{(1)}\cdot\vp^{(n-1)}+...)=0\nonumber\\
	\Leftrightarrow\;\vp_{\infty}\cdot\vp^{(n)}&=-\vp^{(1)}\cdot\vp^{(n-1)}-...\;.
	\end{align}
	Using $\vec{\Pi}^{(k')}=0$ $\forall k'\leq n-2$,
	\begin{align}
	\delta_{(\xi^{\perp},0)}^{\vec{\zeta}=0}(\vp_{\infty}\cdot\vp^{(n)})&=-\delta_{(\xi^{\perp},0)}^{\vec{\zeta}=0}(\vp^{(1)}\cdot\vp^{(n-1)})-...\nonumber\\
	&=-(\delta^{\vec{\zeta}=0}_{(\xi^{\perp},0)}\vp^{(1)})\cdot\vp^{(n-1)}-\vp^{(1)}\cdot(\delta_{(\xi^{\perp},0)}^{\vec{\zeta}=0}\vp^{(n-1)})-...\nonumber\\&=0.
	\end{align}
	Together with the equation \eqref{equation},
	\begin{align}
	0=\delta_{(\xi^{\perp},0)}^{\vec{\zeta}=0}(\vp_{\infty}\cdot\vp^{(n)})&=\vp_{\infty}\cdot\delta^{\vec{\zeta}=0}_{(\xi^{\perp},0)}\vp^{(n)}=a^2b\Pi^{(n-1)}.\nonumber\\
	\Rightarrow\;\Pi^{(n-1)}&=0.
	\end{align}
	With that, the intermediate proposition is proven. //\\
	
	Let's go back to the equation \eqref{aiboost}. With the intermediate proposition we have
	\begin{align}
	\delta_{(\xi^{\perp},0)}r^{-(k'+1)}\vp^{(k'+1)}=e\,\mathcal{O}_{r^{-(k'+1)}}(\vec{\zeta}\times\vp)\;\forall k'\leq n-1
	\end{align}
	and therefore equation \eqref{aiboost} gets the shape
	\begin{align}
	\xi^{\perp}\vpi_i-D_i\vec{\zeta}&=(\delta_{(\xi^{\perp},0)}A_i)\vp+A_i(\delta_{\vec{\zeta}}\vp)\nonumber\\&-\delta_{\vec{\zeta}}\bigg(\frac{1}{e\|\vp\|^2}\vp\times\partial_i\vp\bigg)+O(r^{-(n+2)}).\\
	\Rightarrow \vpi_i&=\pi_i\vp+O(r^{-(n+3)}).\label{pipipi}
	\end{align}
	This condition has to be invariant under time evolution. In general
	\begin{align}
	\partial_t\vpi^i=D_m\vec{F}^{mi}-\vp\times D^i\vp+e\vec{\zeta}\times\vpi^i.
	\end{align}
	The general form of the Yang-Mills curvature $\vec{F}$ (by the definition 2.1 in \cite{Roeser}) is
	\begin{align}
	\vec{F}\times\vp=(D\circ D)\vp.
	\end{align}
	Using \eqref{aiai},
	\begin{align}
	\vec{F}\times\vp&= D(D\vp)=O(r^{-(n+3)}).\\
	\Rightarrow\;\vec{F}&=F\vp+O(r^{-(n+3)}).\\
	\Rightarrow\;D_m\vec{F}^{mi}&=D_m(F^{mi}\vp)+O(r^{-(n+4)})\nonumber\\&=\partial_m(F^{mi}\vp)+e\va_m\times F^{mi}\vp+O(r^{-(n+4)})\nonumber\\
	&=(\partial_m F^{mi})\vp+F^{mi}\partial_m\vp-\frac{F^{mi}}{\|\vp\|^2}(\vp\times\partial_m\vp)\times\vp+O(r^{-(n+4)})\nonumber\\
	&=(\partial_mF^{mi}+(\partial_m\vp\cdot\vp)F^{mi})\vp+O(r^{-(n+4)}).
	\end{align}
	Back to the invariance of \eqref{pipipi} under the time evolution, i.e. it has to be
	\begin{align}
	\partial_t\vpi^i=(\partial_t\pi^i)\vp+\pi^i\partial_t\vp+O(r^{-(n+3)}).
	\end{align}
	As $D_m\vec{F}^{mi}$ is already parallel to $\vp$ up to the desired order and $\partial_t\vp^{(k)}=\vec{\Pi}^{(k-2)}+\delta_{\vec{\zeta}}\vp=\delta_{\vec{\zeta}}\vp$ $\forall k\leq n$ such that $e\vec{\zeta}\times\vpi^i$ includes the time evolution of $\vp$, it has to be
	\begin{align}
	&\vp\times D^i\vp=O(r^{-(n+3)}).\\
	\Leftrightarrow\;&\vp\times\partial_i\vp+e\vp\times(\va_i\times\vp)=O(r^{-(n+3)})\nonumber\\
	\Leftrightarrow\;&\va_i=A_i\vp-\frac{1}{e\|\vp\|^2}\vp\times\partial_i\vp+O(r^{-(n+3)}).
	\end{align}
	\begin{align}
	\Rightarrow\;D_i\vp&=\partial_i\vp-\frac{1}{\|\vp\|^2}(\vp\times\partial_i\vp)\times\vp+O(r^{-(n+3)})\nonumber\\
	&=\frac{1}{\|\vp\|^2}(\partial_i\vp\cdot\vp)\vp+O(r^{-(n+3)})\nonumber\\
	&=\frac{1}{2\|\vp\|^2}(\partial_i\|\vp\|^2)\vp+O(r^{-(n+3)}).\\ \Leftrightarrow\;\|\vp\|^2 2D_i\vp&=\partial_i(\|\vp\|^2)\vp+O(r^{-(n+3)}).
	\end{align}
	As we have seen before as a result of the intermediate proposition, $\partial_t\vec{\Pi}^{(k')}=0$ $\forall k'\leq n-1$ on the constraint surface. Using additionaly $\mathcal{O}_{r^{-k}}(D_i\vp)=0$ $\forall k\leq n+1$,
	\begin{align}
	0=\partial_t\vec{\Pi}^{(n-1)}=-\mathcal{O}_{r^{-(n+1)}}\bigg(\frac{\partial V}{\partial\vp}\bigg)=-4\mathcal{O}_{r^{-(n+1)}}((\|\vp\|^2-a^2)\vp).
	\end{align}
	With the same bootstrap argument as it was used at equation \eqref{boooti} we can conclude
	\begin{align}
	\mathcal{O}_{r^{-(n+1)}}(\|\vp\|^2-a^2)&=0.\\
	\Rightarrow\;\mathcal{O}_{r^{-(n+2)}}\partial_i(\|\vp\|^2-a^2)=\mathcal{O}_{r^{-(n+2)}}\partial_i(\|\vp\|^2)&=0\label{3constraint}
	\end{align}
	Clearly by considering also the $\partial_t\vec{\Pi}^{(k')}=0$ for any $k'< n-1$,
	\begin{align}
	\|\vp\|^2D_i\vp&=O(r^{-(n+3)}).
	\end{align}
	\begin{align}
	\Rightarrow\;\mathcal{O}_{r^{-(n+2)}}(\|\vp\|^2D_i\vp)&=0\nonumber\\
	\Leftrightarrow\;\mathcal{O}_1(\|\vp\|^2)\mathcal{O}_{r^{-(n+2)}}(D_i\vp)+\mathcal{O}_{r^{-1}}(\|\vp\|^2)\mathcal{O}_{r^{-(n+1)}}(D_i\vp)+...&=0.
	\end{align}
	As $\mathcal{O}_{r^{-k}}(D_i\vp)=0$ $\forall k\leq n+1$ and $\mathcal{O}_1(\|\vp\|^2)=a^2\neq 0$,
	\begin{align}
	\mathcal{O}_{r^{-(n+2)}}(D_i\vp)=0.
	\end{align}
\end{proof}
\begin{corollary}\label{pimass}
	If $\{((A,\phi),(\pi,\Pi))\}\in\mathcal{C}$ and the Poincar\'{e} action leaves the boundary and fall-off conditions invariant, then $\vec{\Pi}^{(n)}=0$ $\forall n\in\mathbb{N}$.
\end{corollary}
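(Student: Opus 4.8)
The plan is to obtain this corollary as an immediate consequence of Proposition \ref{mass} together with the intermediate proposition established inside its proof, so that no genuinely new computation is needed. Recall that Proposition \ref{mass} asserts, precisely under the hypothesis of the present corollary (that the local Poincar\'e action \eqref{Poinis} preserves the boundary and fall-off conditions), that $\mathcal{O}_{r^{-k}}(D_i\vp)=0$ holds for \emph{every} $k\in\mathbb{N}$ on the constraint surface $\mathcal{C}$. This is the key input, and it upgrades the finite-order statements used during the induction to an all-orders statement.

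The crucial observation is that the intermediate proposition proved within Proposition \ref{mass} is a standalone conditional: whenever $\{((A,\phi),(\pi,\Pi))\}\in\mathcal{C}$ and $\mathcal{O}_{r^{-k}}(D_i\vp)=0$ for all $k\leq m+1$, one has $\vec{\Pi}^{(k')}=0$ for all $k'\leq m-1$. Since Proposition \ref{mass} now supplies the hypothesis $\mathcal{O}_{r^{-k}}(D_i\vp)=0$ at \emph{all} orders $k$, the premise of this intermediate proposition is automatically satisfied for arbitrarily large $m$, and I would simply feed that in.

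Concretely, I would fix an arbitrary $n\in\mathbb{N}$ and invoke the intermediate proposition with index parameter $m=n+1$: its premise then requires $\mathcal{O}_{r^{-k}}(D_i\vp)=0$ for all $k\leq n+2$, which is guaranteed by Proposition \ref{mass}, and its conclusion yields $\vec{\Pi}^{(k')}=0$ for all $k'\leq n$, in particular $\vec{\Pi}^{(n)}=0$. Letting $n$ range over all of $\mathbb{N}$ gives $\vec{\Pi}^{(n)}=0$ for every $n$, which is the claim.

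I do not anticipate any real obstacle: the analytic content---the nested boost and time-evolution bootstrap, the decomposition $\vec{\zeta}^{(0)}=\bar{\zeta}\vp_{\infty}+\vec{\zeta}^{(0)\perp}$, and the repeated use of the Gauss constraint---has already been carried out in the proof of Proposition \ref{mass}. The only point deserving explicit care is the index bookkeeping, namely choosing the parameter in the intermediate proposition large enough that its conclusion actually reaches the targeted order $n$; stating this choice ($m=n+1$) explicitly is what I would do to rule out an off-by-one error.
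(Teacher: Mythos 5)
Your proposal is correct and is essentially identical to the paper's own proof, which likewise deduces the corollary by feeding the all-orders conclusion of Proposition \ref{mass} into the intermediate proposition established inside its proof. Your explicit index choice ($m=n+1$) just spells out the bookkeeping that the paper leaves implicit.
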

\begin{proof}
	This result is shown with the intermediate proposition in the proof of \ref{mass} with the additional assumption $\mathcal{O}_{r^{-n}}(D_i\vp)=0$ $\forall n\in\mathbb{N}$.
\end{proof}

\begin{corollary}\label{apimass}
	If $\{((A,\phi),(\pi,\Pi))\}\in\mathcal{C}$ and the Poincar\'{e} action leaves the boundary and fall-off conditions invariant, then
	\begin{align}
	\va_i&=A_i\vp-\frac{1}{e\|\vp\|^2}\vp\times\partial_i\vp+o(r^{-\mathbb{N}}),\\
	\vpi^i&=\pi^i\vp+\text{o}(r^{-\mathbb{N}}).
	\end{align}
\end{corollary}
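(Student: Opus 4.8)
I would read off both identities as the all-order distillation of Proposition~\ref{mass} and Corollary~\ref{pimass}: the first equation is pure pointwise vector algebra once one knows that $D_i\vp$ beats every power of $r$, while the second is the promotion of the boost-invariance estimate \eqref{pipipi} from a single fixed order to all orders. The genuine analytic work has already been carried out in the bootstrap of Proposition~\ref{mass}; this corollary is essentially the statement of its limit.

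For the first identity I would start from Proposition~\ref{mass}, which gives $\mathcal{O}_{r^{-k}}(D_i\vp)=0$ for every $k\in\mathbb{N}$, i.e.\ $D_i\vp\in\smallo(r^{-\mathbb{N}})$. Taking the cross product with $\vp$ and using the double-cross-product identity $\vp\times(\va_i\times\vp)=\|\vp\|^2\va_i-(\vp\cdot\va_i)\vp$ yields, exactly as in \eqref{aiai},
\begin{align}
\vp\times D_i\vp=\vp\times\partial_i\vp+e\|\vp\|^2\va_i-e(\vp\cdot\va_i)\vp.
\end{align}
Since $\vp$ stays bounded (it tends to $\vp_{\infty}$) the left-hand side is again $\smallo(r^{-\mathbb{N}})$, so solving for the part of $\va_i$ perpendicular to $\vp$ and collecting the parallel part into $A_i:=\|\vp\|^{-2}\va_i\cdot\vp$ gives precisely $\va_i=A_i\vp-(e\|\vp\|^2)^{-1}\vp\times\partial_i\vp+\smallo(r^{-\mathbb{N}})$.

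For the momentum identity I would reuse the mechanism behind \eqref{pipi}--\eqref{pipipi}. The form of $\va_i$ just established must be preserved by Lorentz boosts, whose action is $\delta_{(\xi^{\perp},0)}\va_i=\xi^{\perp}\vpi_i-D_i\vec{\zeta}$. By Corollary~\ref{pimass} we have $\vec{\Pi}^{(n)}=0$ for all $n$, so every $\delta_{(\xi^{\perp},0)}\vp^{(k)}$ entering the variation of the right-hand side reduces to the gauge piece $e\vec{\zeta}\times\vp$ at all orders. The only term of $\delta_{(\xi^{\perp},0)}\va_i$ that is not a gauge transformation is therefore $\xi^{\perp}\vpi_i$, and matching it against the gauge-covariant structure of $A_i\vp-(e\|\vp\|^2)^{-1}\vp\times\partial_i\vp$ forces $\vpi_i$ to be parallel to $\vp$ up to $\smallo(r^{-\mathbb{N}})$; writing $\pi^i:=\|\vp\|^{-2}\vpi^i\cdot\vp$ then gives $\vpi^i=\pi^i\vp+\smallo(r^{-\mathbb{N}})$.

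The routine part is the first identity; the delicate point is the second, where I must check that the single-order estimate \eqref{pipipi} genuinely iterates to every order rather than stopping at a fixed $O(r^{-(n+3)})$. This is exactly what is licensed by Proposition~\ref{mass} together with Corollary~\ref{pimass}: because $\mathcal{O}_{r^{-k}}(D_i\vp)=0$ and $\vec{\Pi}^{(k)}=0$ hold for \emph{every} $k$, the inductive input of the boost argument is available at each order simultaneously, so the perpendicular remainder of $\vpi^i$ is smaller than every power of $r$, which is the content of $\smallo(r^{-\mathbb{N}})$.
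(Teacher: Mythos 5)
Your proposal is correct and follows essentially the same route as the paper: the first identity is obtained directly from Proposition \ref{mass} via the cross-product algebra of \eqref{aiai}, and the second by demanding boost invariance of that identity together with Corollary \ref{pimass}, exactly as the paper's (much terser) proof does. Your elaboration of why the single-order boost argument of \eqref{pipi}--\eqref{pipipi} upgrades to all orders is precisely the content the paper leaves implicit.
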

\begin{proof}
	The first equation follows directly from $\mathcal{O}_{r^{-\mathbb{N}}}(D_i\vp)=0$ $\forall\{((A,\phi),(\pi,\Pi))\}\in\mathcal{C}$. The second equation follows by demanding Poincar\'{e} invariance of the first equation and using corollary \ref{pimass}.
\end{proof}

Before investigating the consequences of the proposition \ref{mass} and the corollaries \ref{pimass} and \ref{apimass} we finish this section by examining the Poincar\'{e} transformations of the leading components of the fields to find useful properties.\\

Let's introduce $\bar{D}_{\bar{a}}^{(0)}=\bar{\nabla}_{\bar{a}}+e\va^{(0)}_{\bar{a}}\times$ with $\bar{\nabla}$ being the Levi-Civita connection of the round metric $\bar{\gamma}$.
A well known identity from Riemannian geometry is: For a antisymmetric $(2,0)$ tensor $F^{ik}$, a Riemannian metric (or Lorentzian metric) $g$ and the corresponding Levi-Civita connection $\nabla$ the following identity holds\footnote{One can show this using the equations (3.1.9), (3.1.14) and (3.4.9) in \cite{WaldGR}.}:
\begin{align}
\nabla_k F^{ik}=\frac{1}{\sqrt{g}}\partial_k(F^{ik}\sqrt{g}).
\end{align}
In our case we have $\partial_{\bar{a}}(\sqrt{\bar{\gamma}}b\vec{F}_{(0)}^{\bar{a}r})=\sqrt{\bar{\gamma}}\bar{\nabla}_{\bar{a}}(b\vec{F}_{(0)}^{\bar{a}r})=\bar{\nabla}_a(\sqrt{\bar{\gamma}}b\vec{F}_{(0)}^{\bar{a}r})$. Additionally we have $\vec{F}^{\bar{a}r}=D_{(0)}^{\bar{a}}\va_{(0)}^r=\bar{D}^{\bar{a}}_{(0)}\va^{r}_{(0)}$, because $\va^{r}_{(0)}$ is a $\mathbb{R}^3$ valued function on $S^2$. We use this for the Poincar\'{e} transformations of the leading order Yang-Mills momenta. The leading order Poincar\'{e} transformations are:
\begin{align}
\delta_{(\xi^{\perp},\vect{\xi})}\va^{(0)}_r&=\frac{b}{\sqrt{\bar{\gamma}}}\bar{\pi}_r\vp_{\infty}+\vp_{\infty}Y^{\bar{a}}\partial_{\bar{a}}\bar{A}_r+\bar{A}_rY^{\bar{a}}\partial_{\bar{a}}\vp_{\infty}+e\bar{A}_r\vec{\zeta}^{(0)}\times\vp_{\infty}\label{poinar}\\
\delta_{(\xi^{\perp},\vect{\xi})}\va^{(0)}_{\bar{a}}&=\frac{b}{\sqrt{\bar{\gamma}}}\bar{\pi}_{\bar{a}}\vp_{\infty}+Y^{\bar{b}}\partial_{\bar{b}} \va^{(0)}_{\bar{a}}+\va^{(0)}_{\bar{b}}\partial_{\bar{a}} Y^{\bar{b}}-(\partial_{\bar{a}}\bar{\zeta})\vp_{\infty}-(\partial_{\bar{a}}\vec{\zeta}^{(0)\perp})^{\perp}+e\bar{A}_{\bar{a}}\vec{\zeta}^{(0)\perp}\times\vp_{\infty}\label{poinaa}\\
\delta_{(\xi^{\perp},\vect{\xi})}\vpi_{(0)}^r&=\bar{D}^{(0)}_{\bar{a}}(b\sqrt{\bar{\gamma}}\bar{D}_{(0)}^{\bar{a}}\va_r^{(0)})+\partial_{\bar{a}}(Y^{\bar{a}}\vpi^r_{(0)})+e\bar{\pi}^r\vec{\zeta}^{(0)}\times\vp_{\infty}\label{poinpir}\\
\delta_{(\xi^{\perp},\vect{\xi})}\vpi_{(0)}^{\bar{a}}&=\bar{D}^{(0)}_{\bar{b}}(b\sqrt{\bar{\gamma}}\vec{F}_{(0)}^{\bar{b}\bar{a}})+\partial_{\bar{b}}(Y^{\bar{b}}\vpi^{\bar{a}}_{(0)})-\vpi^{\bar{b}}\partial_{\bar{b}} Y^{\bar{a}}+e\bar{\pi}^{\bar{a}}\vec{\zeta}^{(0)}\times\vp_{\infty}\label{poinpia}\\
\delta_{(\xi^{\perp},\vect{\xi})}\vp_{\infty}&=Y^{\bar{a}}\partial_{\bar{a}}\vp_{\infty}+e\vec{\zeta}^{(0)}\times\vp_{\infty}.
\end{align}
We have seen before that these transformations are consistent with the boundary conditions $\va_i^{(0)}=\bar{A}_i\vp_{\infty}-(ea^2)^{-1}\vp_{\infty}\times\partial_i\vp_{\infty}$ and $\vpi^{(0)}_i=\bar{\pi}^i\vp_{\infty}$.\\

Let's look for important properties of the Poincar\'{e} transformations:\\

(i) It is easy to see that the radial and angular sector of $\bar{A}_i$ and $\bar{\pi}_i$ do not mix under Poincar\'{e} transformations.\\

(ii) The Poincar\'{e} transformations leave certain \textit{parity conditions} invariant. A parity transformation is given by the map
\begin{align}
a:\mathbb{R}^3\longrightarrow\mathbb{R}^3,\;\,x\mapsto -x.
\end{align}
This map is completely determined by a map
\begin{align}
a:S^2\longrightarrow S^2, (\theta,\varphi)\mapsto (\pi-\theta,\varphi+\pi).
\end{align}
\begin{defi}
	A function $f:S^2\longrightarrow\mathbb{R}$ is called \textit{even} under parity $\Leftrightarrow$ $a^*f=f$ $\Leftrightarrow$ $f(a(x))=f(x)$. Respectively $f$ is called \textit{odd} under parity $\Leftrightarrow$ $a^*f=-f$.
	
	A vector field $X\in\Gamma(TS^2)$ is called even under parity $\Leftrightarrow$ $Da(X)=X$ and odd under parity $\Leftrightarrow$ $Da(X)=-X$.
	A one form $\alpha\in\Omega^1(S^2)$ is called even under parity $\Leftrightarrow$ $a^*\alpha=\alpha$ and odd under parity $\Leftrightarrow$ $a^*\alpha=-\alpha$.
	In this way one generalizes this easily on all tensor fields.
\end{defi}
Let's make the following observation. In carthesian coordinates $\{x^i\}$ on $\mathbb{R}^3$ it is $a^*(\text{d}x^i)=-\text{d}x^i$, hence the components $\alpha_i$ of a one form $\alpha=\alpha_i\text{d}x^i$ are odd (even) $\Leftrightarrow$ $\alpha$ is even (odd). The same is true for the components of a vector field. The components $\omega_{ij}$ of a two form $\omega$ are odd (even) $\Leftrightarrow$ $\omega$ is odd (even). In these coordinates the derivative operator $\partial_i$ always changes the parity, because additionaly to $a^*(\text{d}x^i)=-\text{d}x^i$, $a^*(\text{d}\lambda)=\text{d}(a^*\lambda)$ for some form $\lambda$. Inspired by this observation we make the definition:
\begin{defi}\label{angularparity}
Call $\alpha_{\bar{a}}$, $X^{\bar{a}}$, $\omega_{\bar{a}\bar{b}}$ parity odd (even) $\Leftrightarrow$ $\alpha_i$, $X^i$, $\omega_{ij}$ are parity odd (even).\footnote{As $a^*\text{d}\theta=-\text{d}\theta$ and $a^*\text{d}\varphi=\text{d}\varphi$, $\alpha_{\bar{a}}$ being odd (even) means: $\alpha_{\theta}$ is even (odd) and $\alpha_{\varphi}$ is odd (even). For 2-forms, $\omega_{\bar{a}\bar{b}}$ being odd (even) means $\omega_{\theta\varphi}$ is even (odd).}
\end{defi}
This is a well posed definition, because the actual objects are the tensor fields with their parity conditions. This definition is just a different code for the same notion. In this code the derivative operator $\partial_{\bar{a}}$ always changes the parity.\\

It is easy to see that $b$ and $Y^{\bar{a}}$ are odd under parity.

Let's assume the asymptotic fields $\{\bar{A},\bar{\pi}\}$ have a definite parity, such that the parities of $\bar{A}_r$ and $\bar{\pi}^r$, as well as $\bar{A}_{\bar{a}}$ and $\bar{\pi}^{\bar{a}}$, are opposite.Take a look at equation \eqref{poinar}. Since $Y^{\bar{a}}\partial_{\bar{a}}$ is an parity-preserving operator and $\delta\va_r^{(0)}=(\delta\bar{A}_r)\vp_{\infty}+\bar{A}_r(\delta \vp_{\infty})$ you can take  $\vp_{\infty}$ out of the equation and with the property that $b$ and $Y^{\bar{a}}$ are odd, it is clear that this transformation preserves the parity of $\bar{A}_r$.

Take a look at equation \eqref{poinaa}. The same argument as before makes sure, that the first three terms preserve the parity. The last term might be a improper gauge transformation. If so, $\bar{\zeta}$ has to be of opposite parity then $\bar{A}_{\bar{a}}$. Alternatively we can allow a degree of freedom of opposite parity that has the form of a improper gauge transformation. We will expand on that in the next chapter.

One can write the transformation of the radial momentum as
\begin{align}
\delta_{(\xi^{\perp},\vect{\xi})}\vpi^r_{(0)}=\partial_{\bar{a}}(b\sqrt{\bar{\gamma}}\partial^{\bar{a}}\bar{A}^r)+\partial_{\bar{a}}(Y^{\bar{a}}\vpi_{(0)}^r)+e\bar{\pi}^r\vec{\zeta}^{(0)}\times\vp_{\infty},
\end{align}
where $\sqrt{\bar{\gamma}}$ is clearly even under parity.
This transformation preserves the parity condition of $\bar{\pi}^r$.

Take a look at equation \eqref{poinpia}. The term $\bar{D}_{\bar{b}}^{(0)}(b\sqrt{\bar{\gamma}}\vec{F}_{(0)}^{\bar{b}\bar{a}})$ might be problematic, because the covariant derivative $\bar{D}_{\bar{b}}^{(0)}$ mixes in general different parities.\footnote{In the paper \cite{Tanzi:2020}, concerning Yang-Mills theory without the Higgs field, this fact excludes the choice of parity conditions that were natural in electromagnetism (\cite{Henneaux-ED}).}
Here, using \eqref{asymptoticF}:
\begin{align}
\bar{D}_{\bar{b}}^{(0)}(b\sqrt{\bar{\gamma}}\vec{F}^{\bar{b}\bar{a}}_{(0)})=\partial_{\bar{b}}(b\sqrt{\bar{\gamma}}\bar{F}^{\bar{b}\bar{a}})\vp_{\infty}=\partial_{\bar{b}}(b\sqrt{\bar{\gamma}}((\partial^{\bar{b}}\bar{A}^{\bar{a}}-\partial^{\bar{a}}\bar{A}^{\bar{b}})+(ea^4)^{-1}\vp_{\infty}\cdot(\partial^{\bar{b}}\vp_{\infty}\times\partial^{\bar{a}}\vp_{\infty}))).
\end{align}
The first term preserves the parity conditions. The second term is more delicate to handle and will be treated in the section \ref{parity}.

\section{The Higgs mechanism}
Let's study the consequences of the proposition \ref{mass} and the corollaries \ref{pimass} and \ref{apimass}. The statements of the corollaries expressed differently are:

(i) \textit{The dynamics of physical states of the Higgs field in $\SU(2)$ Yang-Mills-Higgs theory in 4 dimensions falls off faster then any inverse monomial in the radius $r$.}

(ii) \textit{The physical non-abelian degrees of freedom of the Yang-Mills field in $\SU(2)$ Yang-Mills-Higgs theory in 4 dimensions fall off faster then any inverse monomial in the radius $r$.}

Both of these statements are the consequence of demanding Poincar\'{e} invariance of the theory. \textit{Physical} means that these statements are only true for states on the constraint surface.

As we actually only care about physical states and the results of the corollaries have the right shape (to not induce degeneracies of the symplectic form), we demand
\begin{align}
\vec{\Pi}=\text{o}(r^{-\mathbb{N}}),\;\,\vec{A}_i=A_i\vp-\frac{1}{e\|\vp\|^2}\vp\times\partial_i\vp+\va_i^{\perp},\;\vpi^i=\pi^i\vp+\vpi^{\perp i},\,\;\va_i^{\perp},\vpi^{\perp i}\in \text{o}(r^{-\mathbb{N}})
\end{align}
to be boundary conditions, where the $\perp$ is defined to be the projection perpendicular to $\vp$. The conditions for $\vpi^i$ and $\va_i$ are already Poincar\'{e} invariant, while the condition for $\vec{\Pi}$ has to be expanded by yet another condition, which is
\begin{align}
\|\vp\|^2=a^2+\text{o}(r^{-\mathbb{N}}),
\end{align}
because by proposition \ref{mass},
\begin{align}
\delta_{(\xi^{\perp},0)}\vec{\Pi}=D_i(\xi^{\perp}D^i\vp)-\xi^{\perp}\frac{\partial V}{\partial\vp}+e\vec{\zeta}\times\vec{\Pi}=\text{o}(r^{-\mathbb{N}})\nonumber\\
\Leftrightarrow\;\frac{\partial V}{\partial\vp}=4(\|\vp\|^2-a^2)\vp=\text{o}(r^{-\mathbb{N}})\;\Leftrightarrow \|\vp\|^2=a^2+\text{o}(r^{-\mathbb{N}}).
\end{align}
It is possible to demand even stronger boundary conditions, because we find out in the following that any order $\vp^{(n)}$ of the asymptotic expansion of $\vp$ for $n\geq 1$ is a proper gauge degree of freedom.
\begin{prop}\label{goldstone}
Any $\vp$ is a result of a proper gauge transformation of a $\vp'=\vp_{\infty}+o(r^{-\mathbb{N}})$.
\end{prop}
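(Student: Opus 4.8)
The plan is to construct an explicit proper gauge transformation that rotates the direction of $\vp$ onto that of its vacuum $\vp_{\infty}$, leaving only a fast-falling correction in the radial (norm) direction; physically, this exhibits the transverse Goldstone fluctuations of $\vp$ as pure gauge. The two inputs I would start from are already available: the boundary condition $\|\vp\|^2=a^2+\text{o}(r^{-\mathbb{N}})$ established just before this proposition (a consequence of Proposition \ref{mass} and Corollary \ref{apimass}), and the fall-off \eqref{Higgsfalloff} with leading term $\vp_{\infty}$, $\|\vp_{\infty}\|=a$, together with $\vp^{(1)}\cdot\vp_{\infty}=0$. After normalisation, $\hat{\vp}:=\vp/\|\vp\|$ and $\hat{\vp}_{\infty}:=\vp_{\infty}/a$ are unit vectors whose difference is purely transverse at leading order, $\hat{\vp}=\hat{\vp}_{\infty}+(ar)^{-1}\vp^{(1)}+\dots=\hat{\vp}_{\infty}+O(r^{-1})$, with $\|\vp\|\to a>0$ ensuring $\hat{\vp}$ is well defined for large $r$.

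Next I would define, pointwise on $\Sigma$, the rotation $R(r,x^{\bar{a}})\in\mathrm{SO}(3)$ carrying $\hat{\vp}_{\infty}$ to $\hat{\vp}$ by the minimal rotation in the plane they span (axis $\propto\vp_{\infty}\times\vp$, angle $\arccos(\hat{\vp}_{\infty}\cdot\hat{\vp})$), which is smooth and single-valued wherever the two unit vectors are not antipodal, hence for all $r$ beyond some $r_{0}$; in the interior I would glue $R$ to the identity with a radial cutoff $\chi(r)$, which is harmless since the claim $\vp'=\vp_{\infty}+\text{o}(r^{-\mathbb{N}})$ is purely asymptotic. Because $\Sigma\cong\mathbb{R}^3$ is simply connected, $R$ lifts through the double cover $D:\SU(2)\to\mathrm{SO}(3)$ to a smooth $g:\Sigma\to\SU(2)$. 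Setting $\vp':=D(g^{-1})\vp=R^{-1}\vp=\|\vp\|\,\hat{\vp}_{\infty}=\tfrac{\|\vp\|}{a}\vp_{\infty}$ and inserting $\|\vp\|=a+\text{o}(r^{-\mathbb{N}})$ yields $\vp'=\vp_{\infty}+\text{o}(r^{-\mathbb{N}})$, so that $\vp=D(g)\vp'$ as required; note the entire polynomial, direction-changing part of $\vp$ has been absorbed into $R$, while the surviving norm deviation is fast.

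It then remains to verify that $g$ is \emph{proper}. Since $\hat{\vp}-\hat{\vp}_{\infty}=O(r^{-1})$ one has $R=\mathrm{id}+O(r^{-1})$, so the generator $\vec{\epsilon}$ (defined along the path $R_{t}=\exp(t\,e\,\vec{\epsilon}\times)$, $t\in[0,1]$) is $O(r^{-1})$ and hence has vanishing leading coefficient $\vec{\epsilon}^{(0)}=0$ in the expansion \eqref{epsilonfalloff}. Consequently the only boundary term that could obstruct properness, $\oint\text{d}^2x\,\mathcal{O}_{1}(\vec{\epsilon}\cdot\vpi^r)=\oint\text{d}^2x\,\vec{\epsilon}^{(0)}\cdot\vpi^r_{(0)}$, vanishes, so $\bar{G}_{\vec{\epsilon}}\approx 0$ and each infinitesimal step --- and therefore $g$ --- is proper. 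I expect the main obstacle to be precisely this last point: aligning $\vp$ with $\vp_{\infty}$ must never demand a rotation whose axis carries a component along $\vp_{\infty}$ at order $r^{0}$, since that would be an improper, charge-carrying transformation; this is guaranteed here only because the transverse deviation is $O(r^{-1})$ while the norm deviation is already $\text{o}(r^{-\mathbb{N}})$. A secondary technical point is the smooth global existence of the lift $g$ and the (proper-gauge) independence of the construction from the choice of cutoff.
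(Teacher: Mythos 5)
Your proposal is correct, but it takes a genuinely different route from the paper's proof. The paper argues infinitesimally and order by order in the asymptotic expansion: starting from $\vp'=\vp_{\infty}+\text{o}(r^{-\mathbb{N}})$ it applies successive infinitesimal proper gauge transformations with generators beginning at order $r^{-1}$, and observes that (i) at each order $r^{-k}$ the transverse components $\vp^{(k)\perp}$ sweep out all of $\Gamma(\vp_{\infty}^*(TS^2))$, since $\vec{\epsilon}^{\,\prime(k)}\mapsto\vec{\epsilon}^{\,\prime(k)}\times\vp_{\infty}$ is onto that space, while (ii) the longitudinal components $\vp^{(k)}\cdot\vp_{\infty}$ are not free parameters at all but are fixed by lower orders through the gauge-invariant condition $\|\vp\|^2=a^2+\text{o}(r^{-\mathbb{N}})$; hence every admissible $\vp$ is reached from $\vp'$. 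You instead build a single finite gauge transformation in closed form: the minimal rotation carrying $\hat{\vp}_{\infty}$ to $\hat{\vp}$, cut off in the interior, lifted through $D:\SU(2)\to\mathrm{SO}(3)$, whose generator is $O(r^{-1})$ because the misalignment $\hat{\vp}-\hat{\vp}_{\infty}$ is $O(r^{-1})$, so that $\vec{\epsilon}^{(0)}=0$ in the expansion \eqref{epsilonfalloff} and the boundary term $\oint\text{d}^2x\,\mathcal{O}_1(\vec{\epsilon}\cdot\vpi^r)$ vanishes, making each step proper — the same properness criterion the paper invokes. Both arguments rest on the same two pillars (the norm condition $\|\vp\|^2=a^2+\text{o}(r^{-\mathbb{N}})$, which is gauge invariant and handles the longitudinal direction automatically, and the properness of generators with vanishing leading term), but the trade-offs differ: the paper's order-by-order argument meshes directly with Corollary \ref{theasymptoticgaugespace} (each $\vp^{(n)}$ separately is identified as a gauge degree of freedom), yet it implicitly composes an infinite sequence of infinitesimal transformations without addressing whether that composition converges to a finite group element; your construction produces that finite group element in one stroke and makes properness manifest, at the cost of differential-geometric bookkeeping — smoothness of the axis-angle formula near alignment, the cutoff gluing, and the existence of the lift — which you flag but do not fully carry out, and which is where the remaining (minor, fillable) work lies.
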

\begin{proof}
	Let's perform a sequence of infinitesimal proper gauge transformations starting at $\vp':=\vp_{\infty}+\text{o}(r^{-\mathbb{N}})$ which fixes $\vp_{\infty}$. At first order,
	\begin{align}
	\delta_{\vec{\epsilon}^{\,'}}\vp'=e\vec{\epsilon}^{\,'}\times\vp_{\infty}+\text{o}(r^{-\mathbb{N}})=r^{-1}e\vec{\epsilon}^{\,'(1)}\times\vp_{\infty}+r^{-2}e\vec{\epsilon}^{\,'(2)}\times\vp_{\infty}+...\;.
	\end{align}
	Let's call $e\vec{\epsilon}^{\,'(1)}\times\vp_{\infty}=:\vp^{(1)}_{\text{inf}}$, $e\vec{\epsilon}^{\,'(2)}\times\vp_{\infty}=:\vp^{(2)\perp}_{\text{inf}}$, ..., where the "inf" is a label for infinitesimal. This is justified because $e\vec{\epsilon}^{\,'(1)}\times\vp_{\infty}$ lies in the tangent space $T_0\{\vp^{(1)}|\vp^{(1)}\cdot\vp_{\infty}=0\}_{\vp_{\infty}}\cong\{\vp^{(1)}|\vp^{(1)}\cdot\vp_{\infty}=0\}_{\vp_{\infty}}$, $e\vec{\epsilon}^{\,'(2)}\times\vp_{\infty}$ lies in the tangent space $T_0\{\vp^{(2)\perp}|\vp^{(2)\perp}\cdot\vp_{\infty}=0\}_{\vp_{\infty}}\cong\{\vp^{(2)\perp}|\vp^{(2)\perp}\cdot\vp_{\infty}=0\}_{\vp_{\infty}}$,... . Let's compute the second order:
	\begin{align}
	\delta^2_{\vec{\epsilon}^{\,'}}\vp'&=e\vec{\epsilon}^{\,'}\times\vp_{\infty}+e\vec{\epsilon}^{\,'}\times\delta_{\vec{\epsilon}^{\,'}}\vp'+\text{o}(r^{-\mathbb{N}})\nonumber\\
	&=r^{-1}e\vec{\epsilon}^{\,'(1)}\times\vp_{\infty}+r^{-2}e\vec{\epsilon}^{\,'(2)}\times\vp_{\infty}+...\nonumber\\&+r^{-2}e\vec{\epsilon}^{\,'(1)}\times\vp^{(1)}_{\text{inf}}+...\;.
	\end{align}
	$e\vec{\epsilon}^{\,'(1)}\times\vp^{(1)}_{\text{inf}}$ has the property $e(\vec{\epsilon}^{\,'(1)}\times\vp^{(1)}_{\text{inf}})\cdot\vp_{\infty}=-\|\vp^{(1)}_{\text{inf}}\|^2.$ This is the infinitesimal version of the fact that by $\|\vp\|^2=a^2+\text{o}(r^{-\mathbb{N}})$, $\vp^{(2)}\cdot\vp_{\infty}=-\|\vp^{(1)}\|^2$.
	
	We would like to make sure that by the successive application of a gauge transformation $\delta_{\vec{\epsilon}^{\,'}}$ every possible value of $\vp$ with fixed $\vp_{\infty}$ can be reached.
	
	By the condition $\|\vp\|^2=a^2+\text{o}(r^{-\mathbb{N}})$, at every order $\sim r^{-k}$, $\vp^{(k)}\cdot\vp_{\infty}$ is determined by lower orders of $\vp$, while the $\vp^{(k)\perp}$ are the free parameters\footnote{The value of $\vp'^{(k)\perp}$ determines again through expressions of the form $\vp'^{(k)}\cdot\vp'^{(k')}$, $k'\geq 1$ higher orders of $\vp'\cdot\vp_{\infty}$.}. Every possible value of $\vp^{(k)\perp}$ is reached via a successive application of $r^{-k}\vec{\epsilon}^{\,'(k)}\times$ onto $\vp'$, because $\{\vec{\epsilon}^{\,'(k)}\times\vp_{\infty}|\vec{\epsilon}^{\,'(k)}\}_{k\geq 1}\cong\Gamma(\vp_{\infty}^*(TS^2))$ for any $k$.
	
	Additionally $\delta_{\vec{\epsilon}^{\,'}}\|\vp\|^2=2e(\vec{\epsilon}^{\,'}\times\vp)\cdot\vp=0$ for any $\vec{\epsilon}^{\,'}$ and $\vp$. Hence we can reach any $\vp$ with $\|\vp\|^2=a^2+\text{o}(r^{-\mathbb{N}})$ with a proper gauge transformation from $\vp'=\vp_{\infty}+\text{o}(r^{-\mathbb{N}})$.
\end{proof}
\begin{corollary}\label{theasymptoticgaugespace}
	$\vp^{(n)}$ is a proper gauge degree of freedom $\forall n\geq 1$.
\end{corollary}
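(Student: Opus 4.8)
The plan is to read the statement off directly from Proposition~\ref{goldstone}, of which it is essentially a reformulation. The organizing principle is that a field datum qualifies as a \emph{proper gauge degree of freedom} precisely when a proper gauge transformation can move it to any admissible value: since proper gauge transformations connect points of $\mathcal{C}$ representing one and the same physical state, such a datum carries no gauge-invariant, hence no physical, information.

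First I would invoke Proposition~\ref{goldstone} to connect an arbitrary admissible $\vp$ to the reference configuration $\vp'=\vp_{\infty}+\text{o}(r^{-\mathbb{N}})$ by a proper gauge transformation. For this reference configuration the entire asymptotic tail is trivial: because $\text{o}(r^{-\mathbb{N}})$ falls off faster than every inverse monomial, each coefficient $\vp'^{(n)}$ with $n\geq 1$ vanishes. Thus the vanishing data $\vp'^{(n)}=0$ and the coefficients $\vp^{(n)}$ of an arbitrary $\vp$ share one proper gauge orbit. Next I would make explicit, following the order-by-order construction inside the proof of Proposition~\ref{goldstone}, which part of $\vp^{(n)}$ is genuinely free: at each order the constraint $\|\vp\|^2=a^2+\text{o}(r^{-\mathbb{N}})$ fixes the vacuum-parallel component $\vp^{(n)}\cdot\vp_{\infty}$ in terms of the lower orders $\vp^{(1)},\dots,\vp^{(n-1)}$, so that only the transverse part $\vp^{(n)\perp}$ remains as free datum. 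That same proof shows the successive application of $r^{-n}\,e\,\vec{\epsilon}^{\,'(n)}\times$ sweeps out all of $\{\vec{\epsilon}^{\,'(n)}\times\vp_{\infty}\}\cong\Gamma(\vp_{\infty}^{*}(TS^2))$, i.e.\ every admissible value of $\vp^{(n)\perp}$, while preserving $\|\vp\|^2=a^2+\text{o}(r^{-\mathbb{N}})$.

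Combining these observations yields the claim: for each $n\geq 1$ the free datum $\vp^{(n)\perp}$ can be driven to any admissible value by a proper gauge transformation starting from $\vp'$, and its vacuum-parallel partner is not independent. Hence $\vp^{(n)}$ is a proper gauge degree of freedom for all $n\geq 1$. I expect no genuinely hard step, since all the analytic content sits in Proposition~\ref{goldstone}; the only point demanding care is to maintain the split of $\vp^{(n)}$ into its transverse free part $\vp^{(n)\perp}$ and the parallel part fixed by the norm condition, so that ``$\vp^{(n)}$ is pure gauge'' is not misread as the (false) assertion that an over-determined set of components is freely adjustable.
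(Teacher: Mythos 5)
Your proposal captures only half of the paper's argument, and the half you keep is not strong enough on its own. The paper's proof is \emph{dynamical}: it asserts that the physical state at time $t$ does not depend on the initial values of $\vp^{(n)}$, and for this it needs two inputs: (i) Proposition~\ref{goldstone}, which you use, and (ii) the fact that the \emph{time evolution} of each $\vp^{(n)}$ is itself a proper gauge transformation, which follows from $\vec{\Pi}^{(n)}=0$ for all $n$ (Corollary~\ref{pimass}), which you never invoke. Ingredient (ii) is not decoration. Since $\partial_t\vp$ is driven by $\vec{\Pi}$ plus a gauge term, it is precisely the vanishing of the momentum tail that guarantees the evolution of the Higgs tail contains no physical part; were $\vec{\Pi}^{(n)}\neq 0$ possible, the tail would be dynamically sourced by genuine momentum data, and the subsequent step of the paper --- imposing $\vp=\vp_{\infty}+\text{o}(r^{-\mathbb{N}})$ as a boundary condition on phase space --- would either break under evolution or excise physical solutions. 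A purely instantaneous statement cannot justify that.

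There is also a flaw in your organizing principle itself. A proper gauge transformation moves \emph{all} canonical variables at once ($\delta_{\vec{\epsilon}}\va_i=-D_i\vec{\epsilon}$, $\delta_{\vec{\epsilon}}\vpi^i=e\vec{\epsilon}\times\vpi^i$, etc.), so the fact that $\vp^{(n)}$ can be swept to any admissible value along gauge orbits does \emph{not} imply that two configurations differing only in $\vp^{(n)}$, with $\va_i$, $\vpi^i$, $\vec{\Pi}$ held fixed, are gauge-related or physically identical. The phase of a charged scalar in electromagnetism is the standard counterexample: it can be gauged to zero everywhere, yet $(A_i,\theta)$ and $(A_i,\theta')$ are physically distinct states (different covariant derivatives, different currents). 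Proposition~\ref{goldstone} only gives you that every orbit admits a representative with vanishing Higgs tail; promoting this to ``$\vp^{(n)}$ is a proper gauge degree of freedom'' requires exactly the dynamical statement you omitted, namely that the tail's evolution is pure proper gauge, so that no gauge-invariant information ever gets encoded in it. Your order-by-order discussion of the split into $\vp^{(n)}\cdot\vp_{\infty}$ and $\vp^{(n)\perp}$ is correct but addresses a point internal to Proposition~\ref{goldstone}, not the gap above.
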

\begin{proof}	
	The state of the system at time $t$ does not depend on the initial values of $\vp^{(n)}$ ($n\geq 1$) at time $t=0$, because:
	\begin{itemize}
		\item The time evolution of the $\vp^{(n)}$ is a proper gauge transformation, because $\vec{\Pi}^{(n)}=0$ $\forall n\in\mathbb{N}$.
		\item By the proposition \ref{goldstone}, the state at $t=0$ does not depend on the $\vp^{(n)}$.
	\end{itemize} 
	This is a necessary condition for a variable to be a proper gauge degree of freedom.
\end{proof}
Having that we can sharpen the boundary conditions by demanding
\begin{align}
	\vp=\vp_{\infty}+\text{o}(r^{-\mathbb{N}}).
\end{align}
This conditions restricts the allowed set of gauge parameters to
\begin{align}
	\vec{\epsilon}=\epsilon\vp_{\infty}+\vec{\epsilon}_{(0)}^{\perp}+\text{o}(r^{-\mathbb{N}}).
\end{align}
Let's collect the results and reframe them with respect to the usual notion of the Higgs mechanism. We found that the physical degrees of freedom of the Higgs
field and the non-abelian part of the Yang-Mills field with the exception of the vacuum degrees of freedom ($\vp_{\infty}$, $-(ea^2)^{-1}(\vp_{\infty}\times\partial_i\vp_{\infty})$) fall off faster then any inverse monomial in $r$. This fall-off suggests by comparison to the behaviour of a massive scalar field \cite[Section 3.1]{Tanzi:2021} that these degrees of freedom are massive.

Actually the \textit{mass} of a field is defined via the prefactor of the quadratic term of the field in the Lagrangian. So, let's make a precise statement about the massive fields in this theory. Let's decompose $\vp=\vp_{\infty}+\phi\,\vp_{\infty}+\vp^{\perp}$, where $\phi,\vp^{\perp}\in \text{o}(r^{-\mathbb{N}})$ and $\vp^{\perp}\perp\vp_{\infty}$. Here we have set the partial gauge $\vp^{(n)}=0$ $\forall n\geq 1$. The potential term in the Lagrangian takes the form
\begin{align}
	V(\vp)=4a^2\phi^2+2a\,\phi\|\phi\,\vp_{\infty}+\vp^{\perp}\|^2+\|\phi\vp_{\infty}+\vp^{\perp}\|^4.
\end{align}
The first term is the mass term. It should be noted that only the component pointing in the direction of the vacuum has a mass term. The perpendicular component $\vp^{\perp}$ is the massless \textit{Goldstone boson} of the spontaneously broken $\SU(2)$. $\vp^{\perp}$ is actually massless, because it has no mass term in $V(\vp)$, as well as no mass term in $D_i\vp\cdot D^i\vp$. A possible mass term is not an interaction term, so we can assume for a moment that $\va_i=-(ea^2)^{-1}\vp_{\infty}\times\partial_i\vp_{\infty}$\footnote{Even if this is topologically not possible everywhere in $\mathbb{R}^3$.} to calculate
\begin{align}
	D_i\vp&=\partial_i\vp-a^{-2}(\vp_{\infty}\times\partial_i\vp_{\infty})\times\vp\nonumber\\&=(\partial_i\phi)\vp_{\infty}+\phi\,\partial_i\vp_{\infty}+\partial_i\vp^{\perp}-\phi\partial_i\vp_{\infty}-a^{-2}\vp_{\infty}(\partial_i\vp^{\perp}\cdot\vp_{\infty})\nonumber\\&=(\partial_i\vp)\vp_{\infty}+(\partial_i\vp^{\perp})^{\perp}.\\
	\Rightarrow\; D_i\vp\cdot D^i\vp&=a^2(\partial_i\phi)(\partial^i\phi)+(\partial_i\vp^{\perp})^{\perp}\cdot(\partial_i\vp^{\perp})^{\perp}.
\end{align}
These are only the kinetic energy terms. Hence $\vp^{\perp}$ is massless.
\begin{prop}
There exists a gauge where $\vp^{\perp}=0$.
\end{prop}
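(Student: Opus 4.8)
The plan is to realise the \emph{unitary gauge} by rotating the Higgs field pointwise onto the vacuum direction. A gauge transformation acts on $\vp$ through the pointwise rotation $\text{Ad}_g=D(g)\in\mathrm{SO}(3)$, so it preserves $\|\vp\|$ and the most one can do, if one wants $\vp^{\perp}=0$, is to align $\vp$ with $\pm\vp_{\infty}$; since $\vp=\vp_{\infty}+\text{o}(r^{-\mathbb{N}})$, the angle between the unit vectors $\hat{\vp}:=\vp/\|\vp\|$ and $\hat{\vp}_{\infty}:=\vp_{\infty}/a$ is $\text{o}(r^{-\mathbb{N}})$ and the relevant sign is $+$. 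I would take $g$ to be the rotation carrying $\hat{\vp}$ into $\hat{\vp}_{\infty}$ along the short arc of the great circle through them, generated by the $\su(2)\cong\real^3$ element
\begin{align}
\vec{\omega}=\frac{\alpha}{\sin\alpha}\,\hat{\vp}\times\hat{\vp}_{\infty},\qquad \alpha=\arccos(\hat{\vp}\cdot\hat{\vp}_{\infty}),
\end{align}
the corresponding gauge parameter being $\vec{\epsilon}=e^{-1}\vec{\omega}$. The prefactor $\alpha/\sin\alpha\to 1$ keeps $\vec{\omega}$ smooth even where $\hat{\vp}=\hat{\vp}_{\infty}$ (there $\vec{\omega}=0$ and $g$ is the identity), so $\vec{\omega}$ is well defined wherever $\vp$ stays close to $\vp_{\infty}$.

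The second step is to verify properness and that the rotation does the job. Writing $\vp=(1+\phi)\vp_{\infty}+\vp^{\perp}$ and using $\vp_{\infty}\times\vp_{\infty}=0$ gives $\hat{\vp}\times\hat{\vp}_{\infty}=(a\|\vp\|)^{-1}\,\vp^{\perp}\times\vp_{\infty}$, so that $\vec{\omega}\in\text{o}(r^{-\mathbb{N}})$ and in particular $\vec{\omega}^{(0)}=0$. Hence $\mathcal{O}_1(\vec{\epsilon}\cdot\vpi^r)=\vec{\epsilon}^{(0)}\cdot\vpi^r_{(0)}=0$, the boundary term vanishes, $\bar{G}_{\vec{\epsilon}}\approx 0$, and the transformation is proper; moreover $\delta_{\vec{\epsilon}}\va_i=-D_i\vec{\epsilon}$ and $\delta_{\vec{\epsilon}}\vpi^i=e\vec{\epsilon}\times\vpi^i$ are $\text{o}(r^{-\mathbb{N}})$, so $\vp_{\infty}$ and all leading boundary data of \eqref{boundary} are preserved. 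Applying $g$ gives $\text{Ad}_g\vp=\|\vp\|\,\hat{\vp}_{\infty}=(\|\vp\|/a)\vp_{\infty}$, which is parallel to $\vp_{\infty}$; writing $\|\vp\|/a=1+\tilde{\phi}$ with $\tilde{\phi}\in\text{o}(r^{-\mathbb{N}})$ (from $\|\vp\|^2=a^2+\text{o}(r^{-\mathbb{N}})$) shows $\vp^{\perp}=0$ in the new gauge. At the infinitesimal level this is exactly the content of Proposition \ref{goldstone}: the perpendicular directions form a transitive proper-gauge orbit $\{\vec{\epsilon}^{(k)}\times\vp_{\infty}\}\cong\Gamma(\vp_{\infty}^*(TS^2))$, so one may cancel $\vp^{(k)\perp}$ order by order, and the closed-form rotation above simply does all orders simultaneously.

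The step I expect to be genuinely subtle is the \emph{global} definition of $g$ over $\Sigma\cong\real^3$ rather than only in the asymptotic regime. The construction needs $\hat{\vp}$ to exist and never become antiparallel to $\hat{\vp}_{\infty}$, which holds for large $r$ but may fail in the interior, where $\vp$ can vanish (e.g.\ at a monopole core) and where $\vp_{\infty}$ has no canonical off-sphere extension. For the proposition as stated, which constrains only the $\text{o}(r^{-\mathbb{N}})$ tail $\vp^{\perp}$, it suffices to multiply $\vec{\omega}$ by a radial cutoff that equals $1$ asymptotically and $0$ near the origin; this leaves the asymptotic alignment (hence $\vp^{\perp}=0$) untouched while producing a globally smooth, proper gauge transformation. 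A genuinely global unitary gauge, removing the Goldstone field everywhere on $\Sigma$, is by contrast obstructed by the vacuum winding number $m$ of \eqref{them} and exists only in the sector $m=0$, in line with the earlier remark that aligning the fields with $\vp_{\infty}$ is ``topologically not possible everywhere in $\mathbb{R}^3$''.
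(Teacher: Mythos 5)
Your core construction is the same mechanism as the paper's proof: the paper's entire argument is the one-sentence assertion that a smooth $U:\mathbb{R}^3\to\SU(2)$ with $U\in o(r^{-\mathbb{N}})$ exists such that $U(x)\vp(x)=\phi'(x)\vp_{\infty}(x)$, and your explicit great-circle rotation $\vec{\omega}=\frac{\alpha}{\sin\alpha}\,\hat{\vp}\times\hat{\vp}_{\infty}$ together with the properness check ($\vec{\omega}=(a\|\vp\|)^{-1}(\alpha/\sin\alpha)\,\vp^{\perp}\times\vp_{\infty}\in o(r^{-\mathbb{N}})$, so all boundary data and the surface term are untouched) supplies exactly what the paper's ``Clearly'' suppresses. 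The genuinely new content of your proposal is the global discussion, and there you have identified a real gap in the \emph{paper's} proof rather than in your own: the radially extended $\vp_{\infty}$ is singular at the coordinate origin, and if $\vp(0)\neq 0$ (rotations preserve the norm, so then also $\phi'(0)\neq 0$) smoothness of $D(U)\vp=\phi'\vp_{\infty}$ at the origin forces $\vp_{\infty}$ to be constant; since $m\neq 0$ configurations exist whose Higgs zeros miss the origin (translate any configuration), exact alignment on all of $\Sigma$ is impossible for them, and the proposition read literally needs a qualification the paper does not give. Your cutoff version is the honest repair; its price is that it yields $\vp^{\perp}=0$ only outside a compact set (equivalently, to all orders in $1/r$), which is weaker than the literal statement but is precisely what the paper uses afterwards (mass terms, asymptotic degree-of-freedom counting, the Goldstone mode being pure gauge).

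One correction to your closing remark: the obstruction is not measured by the asymptotic winding $m$ alone, and ``exists only in the sector $m=0$'' fails in both directions. If $D(U)\vp=\phi'\vp_{\infty}$ holds near an isolated zero $x_0$ of $\vp$, then on small spheres around $x_0$ the map $D(U)\hat{\vp}=\pm\hat{\vp}_{\infty}$ has degree $0$ for $x_0\neq 0$ and degree $\pm m$ for $x_0=0$, while it is homotopic to $\hat{\vp}$ because $U$ extends over the ball; so alignment forces every zero away from the origin to have vanishing local degree and the entire winding $m$ to sit at the origin. Consequently the hedgehog $\vp=f(r)\,a\vec{n}$ with $f(0)=0$ is exactly in unitary gauge even though $m=1$, whereas an $m=0$ configuration containing a separated monopole--antimonopole pair of Higgs zeros admits no global unitary gauge at all. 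With that refinement, your proof of the asymptotic statement stands, and your paragraph on the global issue is a substantive improvement on the paper's one-line argument.
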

\begin{proof}
Clearly there exists a smooth $U:\mathbb{R}^3\longrightarrow\SU(2)$ with $U\in o(r^{-\mathbb{N}})$ (a proper gauge transformation) such that $U(x)\vp(x)=\phi'(x)\vp_{\infty}(x)$. 
\end{proof}
Hence the Goldstone boson is no physical degree of freedom, but pure proper gauge.\\

Let's find the mass term of the non-abelian part of the Yang-Mills field in the Lagrangian. For that take a look to the terms which are quadratic in $\va_i$ in
\begin{align}
	D_i\vp\cdot D^i\vp&=(...)+e^2(\va_i\times\vp)\cdot(\va^i\times\vp)\nonumber\\&=(...)+e^2(\va_i\cdot\va^i)\|\vp\|^2-(\va_i\cdot\vp)(\va^i\cdot\vp)\nonumber\\
	&=(...)+(a^2+o(r^{-\mathbb{N}}))(\va_i^{\perp}\cdot\va^{i\perp}-2(ea^2)^{-1}\va_i^{\perp}\cdot(\vp_{\infty}\times\partial^i\vp_{\infty})\nonumber\\&+(ea^2)^{-2}(\vp_{\infty}\times\partial_i\vp_{\infty})\cdot(\vp_{\infty}\times\partial^i\vp_{\infty})).
\end{align}
The term $a^2\va_i^{\perp}\cdot\va^{i\perp}$ is the mass term of the non-abelian part of the Yang-Mills field.\\

Let's finish this section by collecting the boundary conditions at this point:
\begin{align}
	\va_r&=(r^{-1}\bar{A}_r+r^{-2}A_r^{(1)}+r^{-3}A_r^{(2)}...)\vp_{\infty}+o(r^{-\mathbb{N}}),\nonumber\\ \vpi^r&=(\bar{\pi}^{r}+r^{-1}\pi^{r}_{(1)}+r^{-2}\pi^{r}_{(2)}+...)\vp_{\infty}+o(r^{-\mathbb{N}}),\nonumber\\
	\va_{\bar{a}}&=-(ea^2)^{-1}\vp_{\infty}\times\partial_{\bar{a}}\vp_{\infty}+(\bar{A}_{\bar{a}}+r^{-1}A_{\bar{a}}^{(1)}+r^{-2}A_{\bar{a}}^{(2)}+...)\vp_{\infty}+o(r^{-\mathbb{N}}),\nonumber\\
	\vpi^{\bar{a}}&=(r^{-1}\bar{\pi}^{\bar{a}}+r^{-2}\pi^{\bar{a}}_{(1)}+r^{-3}\pi^{\bar{a}}_{(2)}+...)\vp_{\infty}+o(r^{-\mathbb{N}})\nonumber\\
	\vp\;&=\vp_{\infty}+o(r^{-\mathbb{N}})\nonumber\\
	\vec{\Pi}\;&=o(r^{-\mathbb{N}}),\label{fatboundary}\nonumber\\
	\vec{\epsilon}\;&=\vec{\epsilon}_{(0)}^{\perp}+(\bar{\epsilon}+r^{-1}\epsilon^{(1)}+r^{-2}\epsilon^{(2)}+...)\vp_{\infty}+o(r^{-\mathbb{N}}).
\end{align}
It is easy to see that these boundary conditions are invariant under the gauge transformation generated by the given $\epsilon$.
\begin{bem}\label{trickyspatial}
	Spatial translations of the Higgs field do not leave $\vp^{(1)}=0$ invariant, i.e.\
	\begin{align}
		\delta_{(0,\vect{\xi})}\vp^{(1)}=\mathcal{O}_{r^{-1}}(\xi^i\partial_i\vp)=\frac{1}{r}\bar{\gamma}^{\bar{a}\bar{b}}\partial_{\bar{b}}W\partial_{\bar{a}}\vp_{\infty}.
	\end{align}
	Hence we have to choose a slightly different (but equivalent) action of the Poincar\'{e} algebra for the translations to preserve the boundary conditions.
	If the spatial translation is paired with a proper gauge transformation generated by
	\begin{align}
		\vec{\epsilon}^{(1)}=-(ea^2)^{-1}\bar{\gamma}^{\bar{a}\bar{b}}\partial_{\bar{b}}W\,\vp_{\infty}\times\partial_{\bar{a}}\vp_{\infty},
	\end{align}
	then $\delta_{(0,\vect{\xi})}\vp^{(1)}+e\vec{\epsilon}^{(1)}\times\vp_{\infty}=0$. We have to choose the spatial translations to be $(\delta_W)^{\text{new}}:=\delta_W+\delta_{\vec{\epsilon}^{(1)}}$ before we perform the symplectic reduction to allow for an action of the Poincar\'{e} algebra on the set of variables restricted by the boundary conditions \eqref{fatboundary}.
\end{bem}

%\section{The Phase Space}
\section{Global structure of the phase space}\label{globalstructure}
On the path to construct a phase space we have yet to make sure that the symplectic form is finite. For that it helps to understand the global structure of the \textit{proto} phase space $\mathbb{P}^{\text{proto}}$ (defined by the conditions \eqref{fatboundary}).

In \ref{vectorbundle} we show that $\mathbb{P}^{\text{proto}}$ fibres over the space of Higgs vacua $\vp_{\infty}\in\mathcal{F}_{\{\vp_{\infty}\}}:=C^{\infty}(S^2,S^2)$ and $\mathcal{F}_{\{\vp_{\infty}\}}$ has connected components labelled by the magnetic charge $m\in\mathbb{Z}$. Let's denote the fibres by $\mathbb{P}^{\text{proto}}_{\vp_{\infty}}$. We will use this notation with the subscript of $\vp_{\infty}$ also for the fibres in the redefined spaces on our path to a well defined phase space.\\

In the following we will show that for each $m\in\mathbb{Z}$ the variation of $\vp_{\infty}\in\mathcal{F}_{\{\vp_{\infty}\}}^{(m)}$ is a proper gauge degree of freedom in $\mathbb{P}^{\text{proto}}$. First of all, based on the conditions \eqref{fatboundary} it is clear that $\vec{\epsilon}^{\perp}_{(0)}$ generate proper gauge transformations.
\begin{prop}\label{epsilonorbit}
	For any two $\vp_{\infty},\vp_{\infty}'\in\mathcal{F}^{(m)}_{\{\vp_{\infty}\}}$ and $m\in\mathbb{Z}$: $\exists$ a proper gauge transformation $U:\mathbb{R}^3\cong\Sigma\longrightarrow\SU(2)$ such that $U\rhd\mathbb{P}^{\text{proto}}_{\vp_{\infty}}=\mathbb{P}^{\text{proto}}_{\vp_{\infty}'}$ and $U\rhd|_{\mathbb{P}^{\text{proto}}_{\vp_{\infty}}}$ is injective.
\end{prop}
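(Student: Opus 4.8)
The plan is to reduce the statement to the construction of a single smooth $U:\Sigma\cong\mathbb{R}^3\to\SU(2)$ whose leading angular value $U_{(0)}:S^2\to\SU(2)$ satisfies $D(U_{(0)})\vp_{\infty}=\vp_{\infty}'$, and then to read off the remaining assertions from the covariance of the boundary conditions \eqref{fatboundary}. Since $\vp_{\infty}$ and $\vp_{\infty}'$ lie in the same connected component $\mathcal{F}^{(m)}_{\{\vp_{\infty}\}}$ of $C^{\infty}(S^2,S^2)$, which, being a connected smooth manifold, is smoothly path-connected, there is a smooth homotopy $\Phi:[0,1]\times S^2\to S^2$ with $\Phi_0=\vp_{\infty}$, $\Phi_1=\vp_{\infty}'$ and $\|\Phi_t\|=a$ throughout; each $\Phi_t$ is thus an admissible Higgs vacuum.

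First I would lift the velocity of this path to a family of gauge parameters. For fixed $t$ the field $\partial_t\Phi_t$ is a section of $\Phi_t^{*}(TS^2)$, and by the pointwise isomorphism exploited in the proof of Proposition \ref{goldstone} — the map $\vec{\epsilon}^{\,\perp}\mapsto e\,\vec{\epsilon}^{\,\perp}\times\Phi_t$ carries the plane orthogonal to $\Phi_t$ isomorphically onto $T_{\Phi_t}S^2$ — the explicit choice
\begin{align}
\vec{\epsilon}^{\,\perp}_t:=\frac{1}{ea^2}\,\Phi_t\times\partial_t\Phi_t,\qquad \vec{\epsilon}^{\,\perp}_t\cdot\Phi_t=0,
\end{align}
solves $e\,\vec{\epsilon}^{\,\perp}_t\times\Phi_t=\partial_t\Phi_t$. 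I would then let $U_{(0)}:=U_1$ be the time-one flow of the non-autonomous equation $(\partial_t U_t)\,U_t^{-1}=e\,\vec{\epsilon}^{\,\perp}_t\in\mathfrak{su}(2)\cong(\mathbb{R}^3,\times)$ on $C^{\infty}(S^2,\SU(2))$ with $U_0=\mathrm{id}$; compactness of $\SU(2)$ guarantees existence on all of $[0,1]$ and smooth dependence on the angles. Both $\Phi_t$ and $\Psi_t:=D(U_t)\vp_{\infty}$ then solve the same linear equation $\partial_t(\,\cdot\,)=e\,\vec{\epsilon}^{\,\perp}_t\times(\,\cdot\,)$ with the same initial value $\vp_{\infty}$, so uniqueness gives $D(U_{(0)})\vp_{\infty}=\Phi_1=\vp_{\infty}'$. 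Finally I would extend $U_{(0)}$ to a smooth $U$ on $\mathbb{R}^3$ that equals $U_{(0)}$ for $r\geq 1$ and is smoothly capped to the identity near the origin, so that its asymptotic angular value is $U_{(0)}$ and its subleading radial coefficients are compatible with \eqref{fatboundary}.

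It then remains to collect the three assertions. The transformation $U$ is \emph{proper}: at each $t$ the generator $\vec{\epsilon}^{\,\perp}_t$ is orthogonal to $\Phi_t$, so with $\vpi^{r}_{(0)}=\bar{\pi}^{r}\vp_{\infty}$ (equation \eqref{pir}) the only candidate boundary charge $\oint\mathrm{d}^2x\,(\vec{\epsilon}^{\,\perp}_t\cdot\Phi_t)\,\bar{\pi}^{r}$ vanishes identically, so $U$ lies in the subgroup generated by proper infinitesimal transformations. The equality $U\rhd\mathbb{P}^{\text{proto}}_{\vp_{\infty}}=\mathbb{P}^{\text{proto}}_{\vp_{\infty}'}$ follows from gauge covariance of \eqref{fatboundary}: the fields $\vp$, $\vec{\Pi}$, $\vpi^{i}$ transform in the adjoint representation, while the relation $\va_i^{(0)}=\bar{A}_i\vp_{\infty}-(ea^2)^{-1}\vp_{\infty}\times\partial_i\vp_{\infty}$ is equivalent to the covariant condition $D_i^{(0)}\vp_{\infty}=0$ of Lemma \ref{derivative}; transforming by $U$ therefore sends every condition written with vacuum $\vp_{\infty}$ to the identical condition with vacuum $\vp_{\infty}'=D(U_{(0)})\vp_{\infty}$, and applying the same argument to $U^{-1}$ yields the reverse inclusion. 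Injectivity of $U\rhd|_{\mathbb{P}^{\text{proto}}_{\vp_{\infty}}}$ is immediate, since the gauge action of a fixed group element is a bijection of the full field space with inverse $U^{-1}\rhd$.

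The conceptual ingredients — path-connectedness of $\mathcal{F}^{(m)}_{\{\vp_{\infty}\}}$ and the pointwise surjectivity of $\vec{\epsilon}^{\,\perp}\mapsto\vec{\epsilon}^{\,\perp}\times\vp_{\infty}$ onto $T_{\vp_{\infty}}S^2$ — are clean, and injectivity is formal. I expect the main work to sit in the globalisation step: upgrading the $S^2$-datum $U_{(0)}$ to a genuine smooth $\SU(2)$-valued gauge transformation on all of $\Sigma\cong\mathbb{R}^3$ whose fall-off is compatible with \eqref{fatboundary}, and confirming that the induced map of fibres is onto rather than merely into, so that one obtains the claimed equality of fibres and not just an inclusion.
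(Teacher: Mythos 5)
Your proof is correct, and its first half is essentially the paper's own construction made more explicit: the paper likewise joins $\vp_{\infty}$ to $\vp_{\infty}'$ by a smooth path in $\mathcal{F}^{(m)}_{\{\vp_{\infty}\}}$, lifts its velocity to the perpendicular generators $\vec{\epsilon}^{\perp}_{(0)}=\frac{1}{ea^2}\vp_{\infty}\times\vec{v}$ (your $\vec{\epsilon}^{\,\perp}_t$), and integrates these fibre-dependent proper transformations to a finite $U$. Your ODE-plus-uniqueness argument for $D(U_{(0)})\vp_{\infty}=\vp_{\infty}'$ and the explicit cap near the origin actually tighten two points the paper leaves informal: it appeals to a ``distinguished lift'' landing in the identity component of $C^{\infty}(S^2,\mathrm{SO}(3))$, and it takes $U$ ``constant in radial direction'', which, taken literally, is singular at the origin (your construction even supplies the cap for free, since the path $U_t$ is a smooth homotopy from the identity to $U_{(0)}$). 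Where you genuinely diverge is the second half. The paper proves $U\rhd\mathbb{P}^{\text{proto}}_{\vp_{\infty}}\subset\mathbb{P}^{\text{proto}}_{\vp_{\infty}'}$ and then checks injectivity variable by variable, in particular computing the action on $\vec{F}_{ij}$ to obtain $A_i'=A_i+\partial_i\Phi+A^{\text{gau}}_i$ with $\partial_i\Phi=0$ for the distinguished lift and $\text{d}A^{\text{gau}}$ determined by $\text{Vol}_{\vp_{\infty}}-\text{Vol}_{\vp_{\infty}'}$; you instead invoke the formal facts that a fixed group element acts bijectively on the full field space (so its restriction is injective) and that the conditions \eqref{fatboundary} are gauge covariant, applying covariance to $U^{-1}$ to upgrade the inclusion to the claimed equality of fibres --- a surjectivity step the paper never spells out, so your argument is, if anything, more complete on this point. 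What the paper's heavier computation buys is the explicit transformation law $A_i\mapsto A_i+A_i^{\text{gau}}$ (with no exact piece), which is reused downstream in the parity analysis of section \ref{parity} and in section \ref{bo}; it is therefore not wasted effort in the paper's economy, but it is not needed for the proposition itself, and your leaner route proves the statement as written.
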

\begin{proof}
	As the first step we see that every tangent space $T_{\vp_{\infty}}\mathcal{F}_{\{\vp_{\infty}\}}$  consists only of proper gauge transformation generators.
	
	Let $\vp_{\infty}\in\mathcal{F}_{\{\vp_{\infty}\}}^{(m)}$ for some $m\in\mathbb{Z}$. Take a tangent vector $v\in T_{\vp_{\infty}}\mathcal{F}_{\{\vp_{\infty}\}}^{(m)}$. Using theorem 7.6. and corollary 7.6 in \cite{wockel},
	\begin{align}
		T_{\vp_{\infty}}\mathcal{F}_{\{\vp_{\infty}\}}^{(m)}\cong \{\delta\vp_{\infty}\in C^{\infty}(S^2,TS^2)|\pi(\delta\vp_{\infty})(x)=\vp_{\infty}(x)\;\forall x\in S^2\},
	\end{align}
	where $\pi:TS^2\longrightarrow S^2$ is the natural projection.
	
	Hence $v$ can be represented as a smooth function $\vec{v}:S^2\longrightarrow \mathbb{R}^3$ such that $\vec{v}(x)\perp\vp_{\infty}(x)$ $\forall x\in S^2$. Take an asymptotic proper gauge transformation generated by $\vec{\epsilon}_{(0)}^{\perp}$ that has the property $\vec{\epsilon}_{(0)}^{\perp}(x)\perp\vp_{\infty}(x)$ $\forall x\in S^2$. Take
	\begin{align}
		\vec{\epsilon}_{(0)}^{\perp}:=\frac{1}{ea^2}\vp_{\infty}\times\vec{v}.\;\Rightarrow\;\delta_{\vec{\epsilon}^{\perp}_{(0)}}\vp_{\infty}=e\vec{\epsilon}^{\perp}_{(0)}\times\vp_{\infty}=\vec{v}.
	\end{align}
	$\Rightarrow$ the tangent space $T_{\vp_{\infty}}\mathcal{F}_{\{\vp_{\infty}\}}^{(m)}$ is spanned by $\vp_{\infty}$-dependent proper gauge transformations.\\
	
	Take $\vp_{\infty},\vp_{\infty}'\in\mathcal{F}^{(m)}_{\{\vp_{\infty}\}}$. Then there exists a smooth curve $\gamma:[0,1]\longrightarrow\mathcal{F}_{\{\vp_{\infty}\}}^{(m)}$ with $\gamma(0)=\vp_{\infty}$ and $\gamma(1)=\vp_{\infty}'$. Every tangent vector to the curve is a field dependent generator of a proper gauge transformation. By arguments used in the proof of \ref{globalstructure}, there exists a smooth homotopy between $\vp_{\infty}$ and $\vp_{\infty}'$. Let $U:\mathbb{R}^3\longrightarrow\SU(2)$ be a gauge transformation that is constant in radial direction. Take the pullback $i^*_{S^2}(U):S^2\longrightarrow\SU(2)$. The action on the Higgs vacuum is given by a rotation matrix $\text{D}(x)$ such that $(i^*_{S^2}(U)\rhd\vp_{\infty})(x)=\text{D}(x)\vp_{\infty}(x)$. As $\vp_{\infty}'=\text{D}\vp_{\infty}$, $\text{D}$ is smooth, and because of the homotopy, $\text{D}$ lies in the identity component of $C^{\infty}(S^2,\text{SO}(3))$. Take the distinguished lift (along the fibres) of $\gamma$ into $\mathbb{P}^{proto}$ which is characterized by the tangent vectors $X^{\vec{\epsilon}^{\perp}_{(0)}}$ (the action of the gauge generators on the canonical variables), where $\vec{\epsilon}^{\perp}_{(0)}$ are tangent to $\gamma$. $\Rightarrow$ as all the $\vec{\epsilon}^{\perp}_{(0)}$ are proper gauge generators, $\text{D}$ represents a proper gauge transformation with $\text{D}\rhd(\mathbb{P}^{\text{proto}}_{\vp_{\infty}})\subset\mathbb{P}^{\text{proto}}_{\vp_{\infty}'}$. It remains to show that this map is bijective. This is easy to see for the variables $\{\vp,\vec{\Pi},\vpi^i\}$, because of the easy form of the action. For example $\text{D}\vpi^i=\pi^i\vp_{\infty}'+\text{D}\vpi^{i\perp}$ with $\text{D}\vpi^{i\perp}\perp\vp_{\infty}'$, which is a bijective map $D:(\pi^i,\pi^{i\perp})\mapsto(\pi^i,D\vpi^{i\perp})$ onto its image. Exactly similar it is for the variables $\vp$ and $\vec{\Pi}$. Let's take a look at the action on $\va_i$. We do not compute the action directly, but use the action on $\vec{F}_{ij}$ to get the information we need.\\
	$\text{D}\rhd\vec{F}_{ij}=\text{D}\vec{F}_{ij}$. With the boundary conditions \eqref{fatboundary} we have
	\begin{align}
	\vec{F}_{ij}=((\partial_i A_j-\partial_j A_i)+(ea^4)^{-1}(\text{Vol}_{\vp_{\infty}})_{ij})\vp_{\infty}+\vec{F}^{\perp}_{ij},
	\end{align}
where $\vec{F}_{ij}^{\perp}=\text{o}(r^{-\mathbb{N}})$ and $(\text{Vol}_{\vp_{\infty}})_{ij}:=\vp_{\infty}\cdot(\partial_i\vp_{\infty}\times\partial_j\vp_{\infty})$. Using that we get the equation:
\begin{align}
	((\partial_i A_j-\partial_j A_i)+(ea^4)^{-1}(\text{Vol}_{\vp_{\infty}})_{ij})\vp_{\infty}'+\text{D}\vec{F}_{ij}^{\perp}=\text{D}\vec{F}_{ij}\nonumber\\ \overset{!}{=}((\partial_i A'_j-\partial_jA'_i)+(ea^4)^{-1}(\text{Vol}_{\vp_{\infty}'})_{ij})\vp_{\infty}'+\text{D}\vec{F}_{ij}^{\perp},\nonumber\\ \Leftrightarrow \text{d}(A'-A)=(ea^2)^{-1}a^{-2}(\text{Vol}_{\vp_{\infty}}-\text{Vol}_{\vp_{\infty}'}).
\end{align}
	Hence by the equation above, $A'_i=A_i+\partial_i\Phi+A^{\text{gau}}_i$, where $\text{d}A^{\text{gau}}=(ea^2)^{-1}a^{-2}(\text{Vol}_{\vp_{\infty}}-\text{Vol}_{\vp_{\infty}'})$ and $A^{\text{gau}}$ does not include an exact term.\footnote{One can show also geometrically that the two form $\text{Vol}_{\vp_{\infty}}-\text{Vol}_{\vp_{\infty}'}$ is exact.} Additionally $\partial_i\Phi=0$ because we chose the distinguished lift of the curve $\gamma$ in the bundle that has $X^{\vec{\epsilon}^{\perp}_{(0)}}$ as tangent vectors. A term $\partial_i\Phi$ would only be generated if we had added a gauge transformation that stabilizes the fibre $\mathbb{P}^{\text{proto}}_{\gamma(t)}$ for some $t\in[0,1]$.
	
	So $\text{D}:(A_i,\vec{A}_i^{\perp})\mapsto(A_i+A_i^{\text{gau}},\text{D}\va_i^{\perp})$, where $A_i^{\text{gau}}$ is fixed by $\text{D}$. This map is clearly bijective onto its image.
\end{proof}

\begin{corollary}\label{thegaugemanifold}
	Every connected component of $\mathcal{F}_{\{\vp_{\infty}\}}$ parametrizes a proper gauge degree of freedom.
\end{corollary}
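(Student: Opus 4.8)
The plan is to derive this corollary almost immediately from Proposition \ref{epsilonorbit}, which already contains all the substantive work; the remaining task is only to repackage that proposition in the language of gauge degrees of freedom. First I would recall what it means for a variable to be a \emph{proper} gauge degree of freedom: it is a direction in the proto phase space along which the motion is generated entirely by proper gauge transformations, so that two points differing only in this variable represent the same physical state and are identified after symplectic reduction.

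Next I would restrict attention to a single connected component $\mathcal{F}^{(m)}_{\{\vp_{\infty}\}}$ for a fixed magnetic charge $m\in\mathbb{Z}$. The infinitesimal statement established in the first part of the proof of Proposition \ref{epsilonorbit} shows that every tangent vector $v\in T_{\vp_{\infty}}\mathcal{F}^{(m)}_{\{\vp_{\infty}\}}$ is realised as $\delta_{\vec{\epsilon}^{\perp}_{(0)}}\vp_{\infty}$ for the proper gauge generator $\vec{\epsilon}^{\perp}_{(0)}=(ea^2)^{-1}\vp_{\infty}\times\vec{v}$, so that every infinitesimal variation of $\vp_{\infty}$ keeping $m$ fixed lies along a proper gauge orbit. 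The global statement of Proposition \ref{epsilonorbit} then upgrades this: for any two $\vp_{\infty},\vp_{\infty}'\in\mathcal{F}^{(m)}_{\{\vp_{\infty}\}}$ there is a proper gauge transformation $U$ carrying the fibre $\mathbb{P}^{\text{proto}}_{\vp_{\infty}}$ bijectively onto $\mathbb{P}^{\text{proto}}_{\vp_{\infty}'}$. Together these say that the group of proper gauge transformations acts transitively on the collection of fibres over a single component, which is precisely the assertion that the base point $\vp_{\infty}$ within that component carries no physical information.

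The conclusion I would draw is therefore that, after symplectic reduction, all fibres over a fixed component $\mathcal{F}^{(m)}_{\{\vp_{\infty}\}}$ collapse to a single orbit, so the choice of $\vp_{\infty}$ within the component is pure proper gauge and parametrizes a proper gauge degree of freedom, as claimed.

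The subtle point I would emphasise, rather than a genuine obstacle, is the indispensable role of the connected components. The transitivity holds only within a fixed $m$: the distinct components of $\mathcal{F}_{\{\vp_{\infty}\}}$ are labelled by the magnetic charge, which is a topological invariant (the winding number of $\vp_{\infty}\colon S^2\to S^2$) and cannot be altered by any smooth, and in particular any proper gauge, transformation. This is exactly why the corollary must be phrased component-wise: varying $\vp_{\infty}$ is gauge, whereas the magnetic charge separating the components is genuinely physical and survives the reduction.
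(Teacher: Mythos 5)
Your argument correctly reproduces the kinematical half of the paper's proof: Proposition \ref{epsilonorbit} gives transitivity of proper gauge transformations on the fibres over a fixed component $\mathcal{F}^{(m)}_{\{\vp_{\infty}\}}$, so the value of $\vp_{\infty}$ within a component carries no physical information at a fixed instant, and your closing remark that the component label $m$ is topological and cannot be changed by any gauge transformation also matches the paper. What you omit is the dynamical ingredient that the paper's proof states explicitly and inherits from the argument of Corollary \ref{theasymptoticgaugespace}: there, a variable is certified as a proper gauge degree of freedom by showing that \emph{the state of the system at time $t$ does not depend on the variable's initial value}, and this requires two separate facts, (i) that the physical state at $t=0$ does not depend on the variable (your transitivity statement, the analogue of Proposition \ref{goldstone}), and (ii) that the time evolution of the variable is itself a proper gauge transformation. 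For $\vp_{\infty}$, fact (ii) is Proposition \ref{timevac}, $\partial_t\vp_{\infty}=e\vec{\epsilon}^{(0)}\times\vp_{\infty}$, together with the remark following it: whether this gauge transformation is proper or improper is \emph{not} automatic, but hinges on the boundary condition $\vpi^r_{(0)}=\bar{\pi}^r\vp_{\infty}$ established in section \ref{aspoin}, which confines improper transformations to those with $\vec{\epsilon}^{(0)}$ parallel to $\vp_{\infty}$; since $\partial_t\vp_{\infty}$ involves only the part of $\vec{\epsilon}^{(0)}$ perpendicular to $\vp_{\infty}$, it is then proper.

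The omission matters because ``pure gauge at a fixed time'' does not by itself implement the paper's criterion: one must also know that the Hamiltonian flow does not displace $\vp_{\infty}$ by an \emph{improper} transformation, i.e.\ that watching $\vp_{\infty}$ evolve reveals no physical change. In this asymptotic setting that is a statement about boundary terms (improperness is detected by $\oint\text{d}^2x\,\vec{\epsilon}^{(0)}\cdot\vpi^r_{(0)}$, not by the pointwise action on $\vp_{\infty}$), so it genuinely depends on the boundary conditions \eqref{fatboundary} and is not a consequence of Proposition \ref{epsilonorbit} alone. The repair is short: add that by Proposition \ref{timevac} and its subsequent remark the time evolution of $\vp_{\infty}$ is a proper gauge transformation, and then invoke the two-part criterion of Corollary \ref{theasymptoticgaugespace} exactly as the paper does.
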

\begin{proof}
	We use the same argument as in the proof of the corollary \ref{theasymptoticgaugespace}. The time evolution of $\vp_{\infty}$ is a proper gauge transformation and by the proposition \ref{epsilonorbit}, any two $\vp_{\infty},\vp_{\infty}'$ that lie in the same connection component of $\mathcal{F}_{\vp_{\infty}}$ are connected via a proper gauge transformation.
\end{proof}

\section{Parity conditions}\label{parity}
Let's take a look at the formal symplectic potential on $\mathbb{P}^{\text{proto}}$. This is at a point $p=(A,\pi,\phi,\Pi)\in\mathbb{P}^{\text{proto}}$:
\begin{align}
	\Theta_p=\int\text{d}^3x\,\vpi^i\cdot\extder\va_i+\int\text{d}^3x\,\vec{\Pi}\cdot\extder\vp.
\end{align}
Evaluating the potential on a tangent vector $v\in T_p\mathbb{P}^{\text{proto}}$ of a curve $\gamma(s)$ with $\gamma(0)=p$ while using the boundary conditions \eqref{fatboundary}:
\begin{align}
	\Theta_p(v)&=\int\text{d}^3x\,\vpi^i\cdot\partial_s\va_i+\int\text{d}^3x\,\vec{\Pi}\cdot\partial_s\vp\nonumber\\
	&=(\text{finite})+\int\text{d}^3x\,\frac{1}{r^3}\vpi^{i}_{(0)}\cdot\partial_s\va_i^{(0)}.
\end{align}
The non-finite term is logarithmically divergent.
Consider the non-finite term of the symplectic potential:\\

\underline{r-component}:
\begin{align}
	\vpi_r^{(0)}\cdot\partial_s\va^r_{(0)}=\bar{\pi}^r\bar{A}_r \vp_{\infty}\cdot Y^{\bar{a}}\partial_{\bar{a}}\vp_{\infty}+a^2\bar{\pi}^r\partial_s\bar{A}_r=a^2\bar{\pi}^r\partial_s\bar{A}_r.
\end{align}
\underline{$\bar{a}$-component}:
\begin{align}
	\label{SympotA}
	\vpi_{\bar{a}}^{(0)}\cdot\partial_s\va^{\bar{a}}_{(0)}&=\bar{\pi}^{\bar{a}}\vp_{\infty}\cdot\partial_s(\vp_{\infty}\bar{A}_{\bar{a}}-\frac{1}{ea^2}\vp_{\infty}\times\partial_{\bar{a}}\vp_{\infty})\nonumber\\
	&=a^2\bar{\pi}^{\bar{a}}\partial_s\bar{A}_{\bar{a}}-\frac{1}{ea^2}\bar{\pi}^{\bar{a}}\vp_{\infty}\cdot(\partial_s\vp_{\infty}\times\partial_{\bar{a}}\vp_{\infty}+\vp_{\infty}\times\partial_s\partial_{\bar{a}}\vp_{\infty})\nonumber\\
	&=a^2\bar{\pi}^{\bar{a}}\partial_s\bar{A}_{\bar{a}}-\frac{1}{ea^2}\bar{\pi}^{\bar{a}}\vp_{\infty}\cdot(\partial_s\vp_{\infty}\times\partial_{\bar{a}}\vp_{\infty}).
\end{align}
The class of functions that makes the integral $\int\text{d}^3 x\frac{1}{r^3}\vpi^{(0)}_i\cdot\partial_s\va^i_{(0)}$ vanish might be rather non trivial, but it certainly vanishes if the integrand is an odd function on $S^2$. We will stick with choosing the integrand to be an odd function, because as we will see this will actually be necessary in the angular components (\ref{pieven}), it is compatible with Poincar\'{e} transformations (section \ref{aspoin}) and it it allows for physically relevant solutions.

Let's propose \textit{parity conditions} on $\bar{A}_r$, $\bar{A}_{\bar{a}}$, $\bar{\pi}^r$ and $\bar{\pi}^{\bar{a}}$.
%\footnote{Note here: Poin leave parity invariant, Gau, lets see}\\
Consider the r-component first. $\bar{\pi}^r\partial_s\bar{A}_r$ has to be odd. A constraint on the possible choices is that the electrically charged Dyon should be a state in phase space.

The Julia-Zee Dyon \cite[equation 2.10, III. boundary conditions]{Julia.Zee:1975} has the radial electric field $\vec{E}_r^{(0)}=\bar{E}_r\vec{n}$, where $\bar{E}_r=C\frac{1}{er^2}$ and $\vec{n}=\vp_{\infty}$ is the unit normal vector on $S^2$ (see \ref{Dyon}). $\bar{E}_r$ is the electric field of the Coulomb solution of electromagnetism. Calculating the corresponding canonical momentum in spherical coordinates gives $\bar{\pi}^r=\text{sin}(\theta)$. The change of $\theta$ after applying the antipodal map is $\theta\mapsto\pi-\theta$ and for $\theta\in[0,\pi]$: $\text{sin}(\pi-\theta)=\text{sin}(\theta)$, which means $\bar{\pi}^r$ is a even function.

Therefore one chooses:
\begin{align}
	\bar{A}_r(-x)=-\bar{A}_r(x)\, ,\;\;\bar{\pi}^r(-x)=\bar{\pi}^r(x).
\end{align}
Since $\bar{A}_r$ as well as $\bar{\pi}^r$ are gauge invariant, these are appropriate boundary conditions.

These conditions also determine what the improper gauge transformations of the theory are:
\begin{align}\label{impgenerator}
	\Gamma[\vec{\epsilon}]\approx -\oint\text{d}^2 x\, \bar{\epsilon}\bar{\pi}^r=-\oint\text{d}^2 x\,\bar{\epsilon}_{\text{even}}\bar{\pi}^r,
\end{align}
i.e. $\bar{\epsilon}_{\text{improper}}=\bar{\epsilon}_{\text{even}}$.\\

Let's consider the angular components \eqref{SympotA}.
We choose the condition
\begin{align}
	\bar{\pi}^{\bar{a}}_{\text{even}}=0.\label{pieven}
\end{align}
This condition might actually necessary\footnote{This is because of the following reasons. In the appendices B.1 and B.2 of \cite{Henneaux-ED} it was shown that in electromagnetism this condition is necessary to have not a divergence in the magnetic field as one approaches null infinity by an infinite sequence of boosts of the Cauchy surface. The same argumentation applies here to the case $\vp_{\infty}=\text{const}$, because in that case the asymptotic form of the action is indistinguishable from the electromagnetic action. By a proper gauge transformation (prop. \ref{epsilonorbit}) this applies to the whole $m=0$ sector. It is a conjecture that this argumentation also applies to the $m\neq 0$ sector.}.
 %But actually we don't need this argumentation here, because there is a different argument for $m\neq 0$ that will lead to the same consequence. We will see (proposition \ref{sympfinm}) that the symplectic can only be finite if $\bar{\pi}^{\bar{a}}_{\text{even}}=0$. Hence we have a proof for the following theorem:
%\begin{theorem}
%	Assuming the boundary conditions \eqref{fatboundary}, as well as finiteness of the fields after an infinite sequence of boosts, the symplectic form is finite if and only if $\bar{\pi}^{\bar{a}}_{\text{even}}=0$.
%\end{theorem}
The condition $\bar{\pi}^{\bar{a}}_{\text{even}}=0$ has to be invariant under boosts. In general $\delta_b\bar{\pi}^{\bar{a}}$ involves a term $\partial_{\bar{b}}(b\text{Vol}_{\vp_{\infty}}^{\bar{b}\bar{a}})$. If $\text{Vol}_{\vp_{\infty}}^{\bar{b}\bar{a}}:=\vp_{\infty}\cdot(\bar{\nabla}^{\bar{b}}\vp_{\infty}\times\bar{\nabla}^{\bar{a}}\vp_{\infty})$ is not odd this breaks the parity condition. But apparently there is no problem at all, which we will see in the remainder of this section. Let's regard the two cases $m=0$ and $m\neq 0$.\\

\underline{m=0}: By the corollary \ref{thegaugemanifold}, for each connection component $(\mathbb{P}^{\text{proto}})^{(m)}$ the base manifold $\mathcal{F}^{(m)}_{\{\vp_{\infty}\}}$ parametrizes a proper gauge degree of freedom. Let's phrase the parity condition firstly in the simple gauge $\vp_{\infty}=\vec{\tau}_3=(0,0,1)$, where $\va_{\bar{a}}^{(0)}=\bar{A}_{\bar{a}}\vec{\tau}_3$ and $\vpi^{\bar{a}}_{(0)}=\bar{\pi}^{\bar{a}}\vec{\tau}_3$. In order to make $\Omega_{\vec{\tau}_3}<\infty$, while $\bar{\pi}^{\bar{a}}_{\text{even}}=0$, choose
\begin{align}
	\bar{A}_{\bar{a}}=\bar{A}_{\bar{a}}^{\text{even}}+\partial_{\bar{a}}\Phi_{\text{even}}\;\text{and}\;\bar{\pi}^{\bar{a}}=\bar{\pi}^{\bar{a}}_{\text{odd}}.
\end{align}
Then it follows $\vec{F}_{(0)}^{\bar{a}\bar{b}}=2\bar{\nabla}^{[\bar{a}}\bar{A}_{\text{even}}^{\bar{b}]}\vec{\tau}_3$ $\Rightarrow$ $\delta_b\bar{\pi}_{\bar{a}}$ stays odd. These conditions let the angular integral over \eqref{SympotA} vanish, because the second term of \eqref{SympotA} vanishes in this gauge and $\partial_{\bar{a}}\bar{\pi}^{\bar{a}}=0$ is a boundary condition (see \eqref{boundary} and use $\bar{\Pi}=0$).

Transfer these conditions onto another fibre $\mathbb{P}^{\text{proto}}_{\vp_{\infty}}\subset(\mathbb{P}^{\text{proto}})^{(0)}$ by a proper gauge transformation $U\rhd$, being the canonical lift of the mapping $\vec{\tau}_3\mapsto\vp_{\infty}$ (which is exactly the $U$ from the proof of proposition \ref{epsilonorbit}). Then (using the proof of prop. \ref{epsilonorbit}),
\begin{align}
	U\rhd\va_{\bar{a}}^{(0)}&=\tilde{A}_{\bar{a}}\vp_{\infty}-(ea^2)^{-1}\vp_{\infty}\times\partial_{\bar{a}}\vp_{\infty}\;\text{with}\;\tilde{A}_{\bar{a}}=\bar{A}_{\bar{a}}+A_{\bar{a}}^{\text{gau}},\\
	U\rhd\vpi^{\bar{a}}_{(0)}&=\bar{\pi}^{\bar{a}}\vp_{\infty},
\end{align}
where $A^{\text{gau}}_{\bar{a}}$ fulfills $\text{d}A^{\text{gau}}=-(ea^2)^{-1}a^{-2}\text{Vol}_{\vp_{\infty}}$ while $A^{\text{gau}}$ does not include any term of the form $\text{d}\lambda$ by the choice of the canonical lift.

With these parity conditions it is easy to see that $\Theta_{\vp_{\infty}}<\infty$ $\forall\vp_{\infty}\in\mathcal{F}_{\vp_{\infty}}^{(0)}$ $\Rightarrow$ $\Omega_{\vp_{\infty}}<\infty$ $\forall\vp_{\infty}\in\mathcal{F}_{\vp_{\infty}}^{(0)}$.\\

\underline{$m\neq0$}: At first we make the following observation:
\begin{prop}\label{oddvol}
	For every $m\in\mathbb{Z}\backslash\{0\}:\exists \vp_{\infty}\in\mathcal{F}_{\{\vp_{\infty}\}}^{(m)}$ such that $\text{Vol}_{\vp_{\infty}}$ is odd.
\end{prop}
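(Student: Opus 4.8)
The plan is to reduce ``$\text{Vol}_{\vp_{\infty}}$ is odd'' to a statement about a single scalar function and then to exhibit an explicit representative in each class $\mathcal{F}^{(m)}_{\{\vp_{\infty}\}}$. First I would record how parity acts on the relevant area forms. Writing the round volume form of the domain $S^2$ as $\omega_0=\sin\theta\,\text{d}\theta\wedge\text{d}\varphi$, the antipodal map $a:(\theta,\varphi)\mapsto(\pi-\theta,\varphi+\pi)$ satisfies $a^{\ast}\omega_0=-\omega_0$ (it reverses orientation on $S^2$). Since $\text{Vol}_{\vp_{\infty}}$ is, up to a fixed positive constant, the pullback $\vp_{\infty}^{\ast}\omega_{S^2}$ of the area form of the target sphere, it is a $2$-form on the $2$-dimensional $S^2$ and hence equals $h\,\omega_0$ for a unique $h\in C^{\infty}(S^2)$ (the Jacobian of $\vp_{\infty}$). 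Then $a^{\ast}\text{Vol}_{\vp_{\infty}}=-(h\circ a)\,\omega_0$, so $\text{Vol}_{\vp_{\infty}}$ is odd if and only if $h$ is an \emph{even} function. This reduces the proposition to: for each $m\neq 0$ produce a smooth degree-$m$ map with even Jacobian.

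For the construction I would use an axially symmetric map. Given $m\neq0$, set
\[
\vp_{\infty}(\theta,\varphi)=a\,\big(\sin f(\theta)\cos(m\varphi),\,\sin f(\theta)\sin(m\varphi),\,\cos f(\theta)\big),
\]
where $f:[0,\pi]\to[0,\pi]$ is smooth and monotone with $f(0)=0$, $f(\pi)=\pi$. I impose two features on $f$: the power-law profile $f(\theta)\sim\theta^{|m|}$ near $\theta=0$ (and correspondingly $\pi-f(\theta)\sim(\pi-\theta)^{|m|}$ near $\theta=\pi$), so that in stereographic coordinates the map looks like $w\sim z^{\pm|m|}$ at the poles and is therefore genuinely smooth there; and the reflection symmetry $f(\pi-\theta)=\pi-f(\theta)$. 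Such an $f$ is produced by a routine interpolation, e.g.\ taking $f(\theta)=\theta$ near $\theta=\pi/2$ and bending it to the required power law near the poles; the only thing to verify is $C^{\infty}$-matching of the reflected extension at $\theta=\pi/2$, which holds once all even-order derivatives of $f$ vanish there---automatic for the symmetric profile.

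It then remains to verify degree and parity by a short computation. The pullback of the unit-sphere area form under $\vp_{\infty}/a$ equals $m\,\sin f(\theta)\,f'(\theta)\,\text{d}\theta\wedge\text{d}\varphi$, so $\deg\vp_{\infty}=\frac{1}{4\pi}\int m\,\sin f\,f'\,\text{d}\theta\,\text{d}\varphi=m$, while $h$ is a positive-constant multiple of $m\,\sin f(\theta)\,f'(\theta)/\sin\theta$, independent of $\varphi$. Differentiating $f(\pi-\theta)=\pi-f(\theta)$ yields $f'(\pi-\theta)=f'(\theta)$, while $\sin f(\pi-\theta)=\sin f(\theta)$ and $\sin(\pi-\theta)=\sin\theta$; hence $h(\pi-\theta)=h(\theta)$, and since $h$ does not depend on $\varphi$ this is exactly $h\circ a=h$. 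By the first step $\text{Vol}_{\vp_{\infty}}$ is odd, and $\vp_{\infty}\in\mathcal{F}^{(m)}_{\{\vp_{\infty}\}}$.

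The step worth flagging is why one cannot take the obvious shortcut. An antipodally equivariant (odd) map, $\vp_{\infty}(-x)=-\vp_{\infty}(x)$, would immediately give an even $h$---because the target antipodal map is orientation reversing, $(\vp_{\infty}\circ a)^{\ast}\omega_{S^2}=\vp_{\infty}^{\ast}(-\omega_{S^2})=-\vp_{\infty}^{\ast}\omega_{S^2}$---but by Borsuk's theorem odd self-maps of $S^2$ have odd degree, so this is unavailable precisely for even $m$. Dropping full equivariance in favour of the weaker reflection symmetry of $f$ is exactly what lets the construction reach all nonzero degrees; the one genuinely delicate point is arranging smoothness at the poles and across $\theta=\pi/2$ simultaneously, but that is a standard smoothing argument rather than a real obstruction.
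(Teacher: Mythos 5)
Your proof is correct, but it takes a genuinely different route from the paper's. Both arguments rest on the same reduction: writing $\text{Vol}_{\vp_{\infty}}=h\,\sin\theta\,\text{d}\theta\wedge\text{d}\varphi$ and observing that oddness of the $2$-form is equivalent to evenness of the Jacobian function $h$ under the antipodal map --- in the paper this appears as the identity $\text{det}[D\vp_{\infty}|_{a(x)}]=\text{det}[D\vp_{\infty}|_{x}]$ feeding into \eqref{theprepullback}--\eqref{thepullback}. The constructions, however, differ. The paper works case by case in cylinder coordinates $(h,\varphi)$: the identity for $m=1$, a $\varphi$-reflection for $m=-1$, the profile $h\mapsto 2h^{2}-1$ with a compensating reflection on one hemisphere for $m=2$, and in general a degree-$m$ polynomial $g(h)$ of definite parity with alternating $\varphi$-reflections between its critical latitudes; its maps are piecewise defined and smoothness across the gluing sets is handled in footnotes. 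You instead exhibit one uniform family for all $m\neq 0$, the axially symmetric map $(\theta,\varphi)\mapsto(f(\theta),m\varphi)$ with profile symmetry $f(\pi-\theta)=\pi-f(\theta)$, getting the degree from a one-line integral and the parity by differentiating the symmetry; your observation that Borsuk's theorem on odd maps rules out the naive equivariant shortcut for even $m$ pinpoints exactly why some weaker symmetry is unavoidable (it is also why the paper needs its alternating reflections). What your approach buys is uniformity in $m$ and an essentially computation-free parity check; what it costs is the pole analysis, and that is the one point you should make fully precise: the asymptotic condition $f(\theta)\sim\theta^{|m|}$ alone does not guarantee smoothness, since lower-order corrections producing odd powers of $|z|$ in stereographic coordinates would spoil it. The clean fix is to define $f$ near the poles exactly by $\tan(f(\theta)/2)=\tan(\theta/2)^{|m|}$, so that the map coincides there with $z\mapsto z^{|m|}$ (or $z\mapsto\bar{z}^{\,|m|}$ for $m<0$), and then interpolate monotonically in a way compatible with the reflection symmetry; with that standard modification your argument is complete.
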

\begin{proof}
	Cases:
	
	$m=1$: Just take $\vp_{\infty}:=\text{id}:x\mapsto x$. Clearly $[\vp_{\infty}]=1$ (See definition \ref{locbdeg} and proposition \ref{locbdegp} in the appendix \ref{appendix1}). Then $\vp_{\infty}(-x)=-x=-\vp_{\infty}(x)$, i.e. $\vp_{\infty}$ is odd and therefore $(\text{Vol}_{\vp_{\infty}})_{\bar{a}\bar{b}}=\vp_{\infty}\cdot(\partial_{\bar{a}}\vp_{\infty}\times\partial_{\bar{b}}\vp_{\infty})$ is odd. By definition (\ref{angularparity}) this is equivalent to $\text{Vol}_{\vp_{\infty}}$ being odd as a two form.\\
	
	$m=-1$: Choose a northpole and a southpole $p_N,p_S\in S^2$ (antipodal points) with the corresponding coordinates $(h,\varphi)$, where $h:S^2\longrightarrow[-1,1]$ is the projection onto the axis connecting $p_N$ and $p_S$ when $S^2$ is embedded into $\mathbb{R}^3$, while $\varphi: S^2\setminus\text{Geo}_{p_N,p_S}\longrightarrow (0,2\pi)$ is the angular coordinate with respect to this axis, where $\text{Geo}_{p_N,p_S}$ is some closed geodesic segment connecting $p_N$ and $p_S$. Define $\vp_{\infty}$ via the map $f:(-1,1)\times(0,2\pi)\longrightarrow (-1,1)\times (0,2\pi)$,
	\begin{align}
		f(h,\varphi)=(h,-\varphi+2\pi).
	\end{align}
	The map $\vp_{\infty}|_{S^2\setminus\text{Geo}_{p_N,p_S}}:=(h,\varphi)^{-1}\circ f\circ (h,\varphi)$ extended to $S^2$ by mapping $\text{Geo}_{p_N,p_S}$ with the identity map onto $\text{Geo}_{p_N,p_S}$ on the target sphere does hit every point of $S^2$ exactly once. Look at the derivative:
	\begin{align}
		Df_{(h,\varphi)}=\begin{pmatrix}
			1 & 0\\
			0 & -1
		\end{pmatrix}\;\Rightarrow\det(Df_{(h,\varphi)})=-1\;\forall (h,\varphi).\label{136}
	\end{align}
	Since $(h,\varphi)$ is a orientation preserving diffeomorphism, the winding number has to be $[\vp_{\infty}]=-1$.
	
	Now show that $\text{Vol}_{\vp_{\infty}}$ is odd under parity. By definition (\ref{degree2}), $\text{Vol}_{\vp_{\infty}}=\vp_{\infty}^*\omega$ where $\omega$ is the standart volume form on $S^2$. Let's choose coordinates $\{y^{\bar{a}}\}$ on a region $R\setminus L\subset S^2$, where $a(R)\subset R$ and $\omega=\text{d}y^1\wedge\text{d}y^2$ \footnote{An example for such a region $R$ would be some ring segment around the equator bounded by two latitudes which are their mutual image under the antipodal map. $L$ is some line segment of a longitude. It is easy to see that such coordinates exist, by segmenting $R$ into regular equal-volume elements that are cut by longitudes and latitudes.}. Now choose those coordinates on the target sphere ($\{y_T^{\bar{a}}\}$ on $R_T\setminus L_T$) as well as on the domain sphere ($\{y_D^{\bar{a}}\}$ on $R_D\setminus L_D$) of $\vp_{\infty}$. Let $x\in R_D\setminus(L_D\cup a(L_D))$ and $\vp_{\infty}(x)\in R_T\setminus(L_T\cup a(L_T))$, then
	\begin{align}\label{theprepullback}
		a^*\text{Vol}_{\vp_{\infty}}|_x&=((\vp_{\infty}\circ a)^*\omega)|_x\nonumber\\&=\text{det}[D\vp_{\infty}|_{a(x)}]\text{det}[Da|_x]\,(\text{d}y^1_D\wedge\text{d}y^2_D)|_x,
	\end{align}
where it is obvious that $\text{det}[D\vp_{\infty}|_{a(x)}]$ and $\text{det}[Da|_x]$ are coordinate-independent expressions.
It is easy to see that $\text{det}[Da|_x]=-1$ $\forall x\in S^2$, as well as $\text{det}[D\vp_{\infty}|_{a(x)}]=\text{det}[Df|_{(h,\varphi)(a(x))}]=-1=\text{det}[D\vp_{\infty}|_{x}]$ $\forall x\in S^2$.
\begin{align}
	\Rightarrow\;a^*\text{Vol}_{\vp_{\infty}}|_x=-\text{det}[D\vp_{\infty}|_x]\,(\text{d}y^1_D\wedge\text{d}y^2_D)|_x=-\text{det}[D\vp_{\infty}|_x]\,\omega|_{x}=-\vp_{\infty}^*\omega\,|_x=-\text{Vol}_{\vp_{\infty}}|_x.\label{thepullback}
\end{align}
Hence $\text{Vol}_{\vp_{\infty}}$ is odd under parity.\\

$m=2$: Let's stay in the coordinates $(h,\varphi)$. Let
\begin{align}\label{Dfzwei}
	f(h,\varphi):&=\begin{cases}
		(2h^2-1,\varphi) & |h\in (0,1)\\
		(2h^2-1,-\varphi+2\pi) & |h\in (-1,0)
	\end{cases},\\
	\Rightarrow Df_{(h,\varphi)}&=\begin{cases}
		\begin{pmatrix}4h & 0\\ 0 & 1\end{pmatrix} & |h\in(0,1)\\
		\begin{pmatrix}4h & 0\\ 0 & -1\end{pmatrix} & |h\in(-1,0)
	\end{cases}.
\end{align}
$f$ is extended to $h=0$ via $\vp_{\infty}((h,\varphi)^{-1}(\{(0,\varphi)\}_{\varphi\in(0,2\pi)}))=\{p_S\}.$ It is easy to see that $\vp_{\infty}$ is smooth\footnote{The only problematic points could be the points of $h=0$. By the definition of $\vp_{\infty}$, $\partial_{\varphi}\vp_{\infty}|_{h=0}=0$, and $\partial_h\vp_{\infty}|_{h=0}=4h$. You have to change the coordinates on the target sphere, to see that the derivative is continuous in the neighbourhood of the $h=0$ set. With that it follows clearly, that $\vp_{\infty}$ is smooth.}. With $S^2_N$ being the closed northern hemisphere with respect to the coordinates $(h,\varphi)$ and $S^2_S$ being the closed southern hemisphere, $\vp_{\infty}|_{S^2_N}$ and $\vp_{\infty}|_{S^2_S}$ are injective and have $S^2$ in their image respectively. As by \eqref{Dfzwei} $\text{det}[Df|_{h}]=\text{det}[Df|_{-h}]>0$ $\forall h$ $\Rightarrow$ $[\vp_{\infty}]=2$.

Now again, $\text{det}[D\vp_{\infty}|_{a(x)}]=\text{det}[Df|_{(h,\varphi)(a(x))}]=\text{det}[Df|_{-h}]=\text{det}[Df|_{h}]=\text{det}[D\vp_{\infty}|_x]$. As \eqref{theprepullback} holds in general for every $\vp_{\infty}\in\mathcal{F}_{\{\vp_{\infty}\}}$, \eqref{thepullback} is again applicable which gives the desired result in this case.\\

	$m=-2$: Take the $f$ from $m=2$ and reverse the orientation of the image like in the case $m=-1$.\\
	
$m\in\mathbb{N}\setminus\{0\}$: Define $f$, using a polynomial $g:[-1,1]\longrightarrow [-1,1]$ of m'th order that has $m-1$ stationary points $h_i\in(-1,1)$ ($h_i<h_{i+1}$ $\forall i$) that are maxima or minima with $g(h_i)\in\{\pm 1\}$. Additionaly $g$ is an even function if $m$ is even and $g$ is an odd function if $m$ is odd. Let
\begin{align}
	f(h,\varphi):=\begin{cases}
		(g(h),\varphi) & |h\in(h_1,1)\\
		(g(h),-\varphi+2\pi) & | h\in(h_2,h_1)\\
		(g(h),\varphi) & | h\in(h_3,h_2)\\
		... & | ...\;\;\;\; .
	\end{cases}
\end{align}
With a such defined smooth $f$ (same argument as before) we find again $\text{det}[Df|_{-h}]=\text{det}[Df|_h]$ $\forall h$. Now execute the same argumentation as in the case $m=2$ to find $[\vp_{\infty}]=m$ and $\text{Vol}_{\vp_{\infty}}$ being odd.\\

$m\in-\mathbb{N}\setminus\{0\}$: Again, reverse the orientation of $f$.
\end{proof}
\begin{theorem}
	If $m\neq 0$, assuming the boundary condition \eqref{fatboundary}, a sufficient condition for the symplectic potential to be finite is:
	\begin{align}\label{theoremconditions}
		\bar{A}_{\bar{a}}=\bar{A}_{\bar{a}}^{\text{even}}+\partial_{\bar{a}}\Phi_{\text{even}}+A^{\text{gau}}_{\bar{a}},\;\;\bar{\pi}^{\bar{a}}=\bar{\pi}^{\bar{a}}_{\text{odd}},
	\end{align}
	where $\text{d}A^{\text{gau}}=\text{Vol}_{\vp_{\infty}}-\text{Vol}_{\vp_{\infty}'}$ for some $\vp_{\infty}\in\mathcal{F}^{(m)}_{\{\vp_{\infty}\}}$ and $\vp_{\infty}'\in\mathcal{F}^{(m)}_{\{\vp_{\infty}\}}$ such that $\text{Vol}_{\vp_{\infty}'}$ is odd.
\end{theorem}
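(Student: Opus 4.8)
The plan is to show that, under the stated hypotheses, the only potentially divergent contribution to $\Theta_p$ vanishes on every tangent vector. Following the computation leading to \eqref{SympotA}, the symplectic potential splits as $\Theta_p(v)=(\text{finite})+\int\text{d}^3x\,r^{-3}\,\vpi^i_{(0)}\cdot\partial_s\va_i^{(0)}$, whose radial integral is logarithmically divergent; finiteness of $\Theta$ is therefore equivalent to the vanishing of the angular integral $I(v)=\oint\text{d}^2x\,\big(\vpi_r^{(0)}\cdot\partial_s\va^r_{(0)}+\vpi_{\bar a}^{(0)}\cdot\partial_s\va^{\bar a}_{(0)}\big)$. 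Since $\Omega=\extder\Theta$, establishing $I\equiv 0$ also yields finiteness of the symplectic form. The radial part is immediate: with $\bar\pi^r$ even and $\bar A_r$ odd, and since the phase-space variation $\partial_s$ does not change parity on $S^2$, the integrand $a^2\bar\pi^r\partial_s\bar A_r$ is even$\times$odd, hence odd, and integrates to zero against the parity-invariant measure.

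For the angular part I would insert the decomposition $\bar A_{\bar a}=\bar A^{\text{even}}_{\bar a}+\partial_{\bar a}\Phi_{\text{even}}+A^{\text{gau}}_{\bar a}$ into the first term of \eqref{SympotA}, splitting $I$ into three contributions. The even piece $a^2\oint\text{d}^2x\,\bar\pi^{\bar a}\partial_s\bar A^{\text{even}}_{\bar a}$ is again odd (odd $\bar\pi^{\bar a}$ against even $\bar A^{\text{even}}$) and drops out. For the gradient piece I would integrate by parts on the closed surface and invoke the leading-order Gauss constraint $\partial_{\bar a}\bar\pi^{\bar a}=0$ from \eqref{boundary} — valid for all $m$ because $\bar\Pi=0$ and $\vec\Pi^{(n)}=0$ by Corollary \ref{pimass} — giving $a^2\oint\text{d}^2x\,\bar\pi^{\bar a}\partial_{\bar a}(\partial_s\Phi)=-a^2\oint\text{d}^2x\,(\partial_{\bar a}\bar\pi^{\bar a})(\partial_s\Phi)=0$. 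The remaining $A^{\text{gau}}$-piece must then cancel against the monopole term $-\tfrac{1}{ea^2}\oint\text{d}^2x\,\bar\pi^{\bar a}\,\vp_\infty\cdot(\partial_s\vp_\infty\times\partial_{\bar a}\vp_\infty)$.

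This cancellation is the heart of the argument and the step I expect to be the main obstacle. I would introduce the one-form $\beta_{\bar a}:=\vp_\infty\cdot(\partial_s\vp_\infty\times\partial_{\bar a}\vp_\infty)$ and first prove the identity $\extder\beta=\partial_s\text{Vol}_{\vp_\infty}$; here it is essential that $\|\vp_\infty\|=a$ is constant, so that $\partial_s\vp_\infty$ and every $\partial_{\bar a}\vp_\infty$ are tangent to the target sphere, which forces the otherwise obstructing term $\partial_s\vp_\infty\cdot(\partial_{\bar a}\vp_\infty\times\partial_{\bar b}\vp_\infty)$ to vanish identically (three coplanar vectors), leaving only the manifestly matching derivative terms. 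By construction $\extder A^{\text{gau}}=(ea^4)^{-1}(\text{Vol}_{\vp_\infty}-\text{Vol}_{\vp_\infty'})$ — the constant suppressed in the statement being the one appearing in \eqref{asymptoticF} — and since $\vp_\infty'$ is held fixed along the curve, $\partial_s(\extder A^{\text{gau}})=(ea^4)^{-1}\partial_s\text{Vol}_{\vp_\infty}=\extder\big((ea^4)^{-1}\beta\big)$. Hence $\partial_s A^{\text{gau}}-(ea^4)^{-1}\beta$ is closed, and as $S^2$ is simply connected it equals $\extder\psi$; a final integration by parts with $\partial_{\bar a}\bar\pi^{\bar a}=0$ discards $\extder\psi$ and gives $a^2\oint\text{d}^2x\,\bar\pi^{\bar a}\partial_s A^{\text{gau}}_{\bar a}=(ea^2)^{-1}\oint\text{d}^2x\,\bar\pi^{\bar a}\beta_{\bar a}$, which is exactly the negative of the monopole term, so the two cancel and $I\equiv 0$.

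Two points require care. Existence of $A^{\text{gau}}$ as a globally defined one-form on $S^2$ requires $\text{Vol}_{\vp_\infty}-\text{Vol}_{\vp_\infty'}$ to be exact; this holds precisely because $\vp_\infty,\vp_\infty'\in\mathcal{F}^{(m)}_{\{\vp_\infty\}}$ have equal winding number and hence equal period, and Proposition \ref{oddvol} supplies a representative $\vp_\infty'$ with $\text{Vol}_{\vp_\infty'}$ odd in that component. The oddness of $\text{Vol}_{\vp_\infty'}$ is not what produces finiteness — the cancellation above works for any fixed reference — but it is what renders the leading angular curvature $\bar F_{\bar a\bar b}=2\,\partial_{[\bar a}\bar A^{\text{even}}_{\bar b]}+(ea^4)^{-1}(\text{Vol}_{\vp_\infty'})_{\bar a\bar b}$ purely odd, so that the parity choice $\bar\pi^{\bar a}_{\text{even}}=0$ remains invariant under the boost transformation \eqref{poinpia}. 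I would close by verifying this compatibility, which reduces to the $m=0$ computation of section \ref{aspoin} since all $\vp_\infty$-dependence of the curvature has been absorbed into the fixed odd reference.
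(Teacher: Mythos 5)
Your proof is correct, and its core coincides with the paper's own: both reduce finiteness to the vanishing of the angular integral of \eqref{SympotA}, both rest on the variational identity $\text{d}\left[\vp_{\infty}\cdot(\partial_s\vp_{\infty}\times\text{d}\vp_{\infty})\right]=\partial_s\text{Vol}_{\vp_{\infty}}$ (the paper's \eqref{varagau}, including your observation that $\partial_s\vp_{\infty}\cdot(\partial_{\bar{a}}\vp_{\infty}\times\partial_{\bar{b}}\vp_{\infty})=0$ because all three vectors are tangent to the target sphere $\|\vp_{\infty}\|=a$), and both kill the gradient piece with $\partial_{\bar{a}}\bar{\pi}^{\bar{a}}=0$. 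Still, two of your choices differ from the paper in a way worth recording. First, you hold the reference $\vp_{\infty}'$ fixed along the variation ($\partial_s\vp_{\infty}'=0$), so the $A^{\text{gau}}$-piece cancels the monopole term exactly and no residual survives; the paper instead allows $\vp_{\infty}'$ to vary, is left with the term $-\frac{1}{ea^2}\bar{\pi}^{\bar{a}}\,\vp_{\infty}'\cdot(\partial_s\vp_{\infty}'\times\partial_{\bar{a}}\vp_{\infty}')$, and there the oddness of $\text{Vol}_{\vp_{\infty}'}$ is genuinely used to argue that this residual is parity-odd and integrates to zero. Your route therefore exposes, correctly, that with a fixed reference the oddness hypothesis plays no role in finiteness itself but only in the compatibility with the parity and boost structure --- a sharper reading than the paper's proof provides. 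Second, the defining relation only determines $\partial_s A^{\text{gau}}$ up to a closed, hence exact, one-form on $S^2$; you dispose of this ambiguity by a further integration by parts against the divergence-free $\bar{\pi}^{\bar{a}}$, where the paper appeals to a ``canonical lift'' to set its $\lambda=0$, so your handling is more self-contained. (Your explicit treatment of the radial term, using the parity conditions on $\bar{A}_r$ and $\bar{\pi}^r$ fixed earlier in section \ref{parity}, is also slightly more complete than the paper's proof, which addresses only \eqref{SympotA}.)

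One caveat concerning your closing paragraph, which lies outside the theorem proper: the cancellation in your third paragraph requires the sign $\text{d}A^{\text{gau}}=+(ea^4)^{-1}(\text{Vol}_{\vp_{\infty}}-\text{Vol}_{\vp_{\infty}'})$, and with that sign \eqref{asymptoticF} gives $\bar{F}_{\bar{a}\bar{b}}=2\partial_{[\bar{a}}\bar{A}^{\text{even}}_{\bar{b}]}+(ea^4)^{-1}\left(2\,\text{Vol}_{\vp_{\infty}}-\text{Vol}_{\vp_{\infty}'}\right)_{\bar{a}\bar{b}}$ rather than $2\partial_{[\bar{a}}\bar{A}^{\text{even}}_{\bar{b}]}+(ea^4)^{-1}(\text{Vol}_{\vp_{\infty}'})_{\bar{a}\bar{b}}$. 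Your curvature formula corresponds to the opposite sign --- the one forced by the gauge-transfer construction of the $m=0$ discussion --- but that sign would turn your cancellation into a doubling. This sign tension is already present in the paper (compare the $m=0$ prescription with the theorem's convention) and does not affect your proof of the stated finiteness result, but the boost-compatibility remark should not be quoted as it stands without resolving it.
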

\begin{proof}
	Let $m\in\mathbb{Z}\setminus\{0\}$ be fixed and $\vp_{\infty}\in\mathcal{F}_{\{\vp_{\infty}\}}^{(m)}$. Take the corresponding $\text{Vol}_{\vp_{\infty}}$. As $\text{Vol}_{\vp_{\infty}}$ is a 2-form it can always be written as the sum of a exact form and a residual part:
	\begin{align}
		\text{Vol}_{\vp_{\infty}}=\text{Vol}_{\vp_{\infty}}^{\text{ex}}+\text{Vol}_{\vp_{\infty}}^{\text{res}}=\text{d}A^{\text{gau}}+\text{Vol}_{\vp_{\infty}}^{\text{res}}
	\end{align}
for some 1-form $A^{\text{gau}}$.
It was shown in \ref{oddvol} that for every $m\in\mathbb{Z}\setminus\{0\}:$ $\exists$ $\vp_{\infty}'\in\mathcal{F}^{(m)}_{\{\vp_{\infty}\}}$ such that $\text{Vol}_{\vp_{\infty}'}$ is odd.
Let's assume $H^2_{\text{dR}}(S^2)\ni[\text{Vol}_{\vp_{\infty}}-\text{Vol}_{\vp_{\infty}'}]\neq 0$. By the theorem of de Rham \cite[Theorem 18.14]{Lee}, $\text{H}^2_{\text{dR}}(S^2)\cong \text{H}^2(S^2;\mathbb{R}):=\text{Hom}(\text{H}^2(S^2),\mathbb{R})$ and 
\begin{align}
	\bigg([S^2]\mapsto\int_{S^2}\alpha\bigg)\mapsto[\alpha]
\end{align}
is an isomorphism. $\text{H}^2(S^2)\cong\mathbb{Z}$ is the second homology group of $S^2$.\footnote{By \cite[Theorem 6 in chapter 4]{Spanier}, $\text{H}^2(S^2)\cong\mathbb{Z}$. Using the Hurewicz theorem \cite[Theorem 2 in chapter 7]{Spanier}), $\pi_2(S^2)\cong \text{H}^2(S^2)$, because $S^2$ is simply connected.} Since $\oint(\text{Vol}_{\vp_{\infty}}-\text{Vol}_{\vp_{\infty}'})=m-m=0$, by the isomorphism, $0$ has to map to $0\neq[\text{Vol}_{\vp_{\infty}}-\text{Vol}_{\vp_{\infty}'}]$ which is a contradiction. Hence, $[\text{Vol}_{\vp_{\infty}}-\text{Vol}_{\vp_{\infty}'}]=0$. $\Rightarrow$ $A^{\text{gau}}$ with $\text{d}A^{\text{gau}}=\text{Vol}_{\vp_{\infty}}-\text{Vol}_{\vp_{\infty}'}$ exists. $\Rightarrow$ $\text{Vol}^{\text{res}}_{\vp_{\infty}}=\text{Vol}_{\vp_{\infty}'}$.
Lets consider a variation paramerized by $s$:
\begin{align}\label{varagau}
	\partial_s(\text{Vol}_{\vp_{\infty}})_{\bar{a}\bar{b}}=&\vp_{\infty}\cdot(\partial_{\bar{a}}(\partial_s\vp_{\infty})\times\partial_{\bar{b}}\vp_{\infty})+\vp_{\infty}\cdot(\partial_{\bar{a}}\vp_{\infty}\times\partial_{\bar{b}}(\partial_s\vp_{\infty}))\nonumber\\
	=&\partial_{\bar{a}}(\vp_{\infty}\cdot(\partial_s\vp_{\infty}\times\partial_{\bar{b}}\vp_{\infty}))-\partial_{\bar{b}}(\vp_{\infty}\cdot(\partial_s\vp_{\infty}\times\partial_{\bar{a}}\vp_{\infty}))\nonumber\\
	\text{Hence}\;\;&\text{d}[\vp_{\infty}\cdot(\partial_s\vp_{\infty}\times\text{d}\vp_{\infty})]=\text{d}(\partial_s A^{\text{gau}})+\text{d}[\vp_{\infty}'\cdot(\partial_s\vp_{\infty}'\times\text{d}\vp_{\infty}')],\nonumber\\
	\Rightarrow& \vp_{\infty}\cdot(\partial_s\vp_{\infty}\times\text{d}\vp_{\infty})=\partial_s A^{\text{gau}}+\vp_{\infty}'\cdot(\partial_s\vp_{\infty}'\times\text{d}\vp_{\infty}')+\text{d}\lambda.
\end{align}
In the next step we show that the conditions stated in the theorem are sufficient for the symplectic potential to be finite. Let $\bar{\pi}^{\bar{a}}=\bar{\pi}^{\bar{a}}_{\text{odd}}$ and $\bar{A}_{\bar{a}}=\bar{A}^{\text{even}}_{\bar{a}}+\partial_{\bar{a}}\Phi_{\text{even}}+A^{\text{gau}}_{\bar{a}}$ with $\text{d}A^{\text{gau}}=\text{Vol}_{\vp_{\infty}}-\text{Vol}_{\vp_{\infty}'}$. In the equation \eqref{varagau} we choose $\lambda=0$, because we lift the gauge transformation $\vp_{\infty}'\mapsto\vp_{\infty}$ in the canonical way. Now using \eqref{varagau} in the expression of the symplectic potential \eqref{SympotA} gives
\begin{align}
	\eqref{SympotA}=a^2\bar{\pi}^{\bar{a}}\partial_s(\bar{A}_{\bar{a}}^{\text{even}}+\partial_{\bar{a}}\Phi_{\text{even}})-\frac{1}{ea^2}\bar{\pi}^{\bar{a}}\vp_{\infty}'\cdot(\partial_s\vp_{\infty}'\times\partial_{\bar{a}}\vp_{\infty}'),
\end{align}
which is an odd expression with the exception of $a^2\bar{\pi}^{\bar{a}}\partial_{\bar{a}}\Phi_{\text{even}}$, which vanishes by using partial integration and the boundary condition $\partial_{\bar{a}}\bar{\pi}^{\bar{a}}=0$.

\end{proof}
By that proposition and the result in the case $m=0$, for every $\vp_{\infty}\in\mathcal{F}_{\{\vp_{\infty}\}}$ the symplectic form $\Omega_{\vp_{\infty}}$ exists under the assumption of the above given parity conditions.\\

Let's state the boundary- and fall-off conditions at this point. In order for the symplectic form to be finite we add to the conditions \ref{fatboundary} the additional conditions:\\

\begin{tabular}[h]{l|l|l|l|l|l}
	m & $\text{Vol}_{\vp_{\infty}}$ & $\bar{A}_r$ & $\bar{\pi}^r$ & $\bar{A}_{\bar{a}}$ & $\bar{\pi}^{\bar{a}}$\\
	\hline
	= 0 & $\oint\text{Vol}_{\vp_{\infty}}=0$ & odd & even & $=\bar{A}^{\text{even}}_{\bar{a}}+\partial_{\bar{a}}\Phi_{\text{even}}+A^{\text{gau}}_{\bar{a}}$, s.t. $\text{d}A^{\text{gau}}=\text{Vol}_{\vp_{\infty}}$ & odd\\
	$\neq$ 0 & $\oint\text{Vol}_{\vp_{\infty}}=m$ & odd & even &$=\bar{A}^{\text{even}}_{\bar{a}}+\partial_{\bar{a}}\Phi_{\text{even}}+A^{\text{gau}}_{\bar{a}}$,   & odd \\ & & & & s.t. $\text{d}A^{\text{gau}}=\text{Vol}_{\vp_{\infty}}-\text{Vol}_{\vp_{\infty}'}$ and $\text{Vol}_{\vp_{\infty}'}$ is odd &	 \;\;\; .
\end{tabular}\\
These additional parity conditions are invariant under the action of the Poincar\'{e} algebra \ref{angularparity}.
\begin{defi}\label{phasespace}
	The set of smooth functions $\{(A,\pi,\phi,\Pi)\}$ that fulfills the conditions \ref{fatboundary} and the additional parity conditions above, denoted by $(\mathbb{P},\Omega)$, is called the \textit{unbroken phase space}.
\end{defi}
$(\mathbb{P},\Omega)$ is a symplectic manifold.

\section{The boost problem}
\subsection{The Lorentz boost of the symplectic form}\label{bo}
In order to have a well stated phase space of a relativistic field theory, the Poincar\'{e} group has to act as a Hamiltonian action. By the result in \cite{Woodhouse:GQ}, chapter 3.2 and 3.3, for the Poincar\'{e} group to act Hamiltonian it is sufficient to show that the group acts via symplectomorphisms. At the Lia algebra level this is $\liephase_{X^{(\xi^{\perp},\vect{\xi})}}\Omega=0$, where $X^{(\xi^{\perp},\vect{\xi})}$ is the fundamental vector field of the group action, corresponding to the Lie-algebra elements $(\xi^{\perp},\vect{\xi})$\footnote{A vanishing Lie derivative of the symplectic form on a locally convex space does in general not ensure the existence of a symplectomorphism, but this functional analytic proplem is beyond the scope here. (\cite{Neeb}, remark II.2,10)}.
It is easy to see that $\liephase_{X^{(0,\vect{\xi})}}\Omega=0$, but a straightforeward calculation shows:
\begin{align}
	\liephase_{X^(\xi^{\perp},0)}\Omega=&\underbrace{a^2\oint\text{d}^2x\sqrt{\bar{\gamma}}\,\extder\bar{A}_r\boldsymbol{\wedge}\bar{\nabla}^{\bar{a}}(b\extder\bar{A}_{\bar{a}})}_{=(1)}\underbrace{-\oint\text{d}^2x\sqrt{\bar{\gamma}}\,\extder\bar{A}_r\boldsymbol{\wedge}\frac{1}{ea^2}\bar{\nabla}^{\bar{a}}(b\vp_{\infty}\cdot(\extder\vp_{\infty}\times\partial_{\bar{a}}\vp_{\infty}))}_{=(2)}\nonumber\\
	&\underbrace{-\oint\text{d}^2x\sqrt{\bar{\gamma}}\bar{A}_r\bar{\nabla}^{\bar{a}}(b\extder\vp_{\infty}\boldsymbol{\wedge}\cdot(\frac{1}{ea^2}\vp_{\infty}\times\partial_{\bar{a}}\extder\vp_{\infty}))}_{=(3)}\nonumber\\
	&\underbrace{+\oint\text{d}^2x\,\extder\vec{\zeta}^{(0)}\boldsymbol{\wedge}\cdot\extder\vec{\pi}^r_{(0)}-\int\text{d}^3x\,\extder\vec{\zeta}\boldsymbol{\wedge}\cdot\extder\vec{\mathscr{G}}}_{(4)}.\label{expr}
\end{align}
By using the parity conditions \eqref{phasespace} we can write $\bar{A}_{\bar{a}}=\bar{A}_{\bar{a}}^{\text{even}}+\partial_{\bar{a}}\Phi_{\text{even}}+A^{\text{gau}}_{\bar{a}}$. Using the parity condition $\bar{A}_r=\bar{A}_r^{\text{odd}}$ we have $a^2\oint\text{d}^2x\,\sqrt{\bar{\gamma}}\extder\bar{A}_r\boldsymbol{\wedge}\bar{\nabla}^{\bar{a}}(b\extder\bar{A}_{\bar{a}}^{\text{even}})=0$. The integral
\begin{align}
	B_1:=a^2\oint\text{d}^2x\,\sqrt{\bar{\gamma}}\,\extder\bar{A}_r\boldsymbol{\wedge}\bar{\nabla}^{\bar{a}}(b\extder\partial_{\bar{a}}\Phi_{\text{even}})
\end{align}
is exactly the boundary term that appears in $\text{U}(1)$ electromagnetism \cite{Henneaux-ED}. This term is treated in the section \ref{asym}.

Let's take a look at the term of (1) that corresponds to $A_{\bar{a}}^{\text{gau}}$. By the equation \eqref{varagau} (choosing $\lambda=0$),
\begin{align}
	\extder A_{\bar{a}}^{\text{gau}}=\frac{1}{ea^4}(\vp_{\infty}\cdot(\extder\vp_{\infty}\times\partial_{\bar{a}}\vp_{\infty})-\vp_{\infty}'\cdot(\extder\vp_{\infty}'\times\partial_{\bar{a}}\vp_{\infty}')),
\end{align}
where the term involving $\vp_{\infty}'$ does vanish if and only if $m=0$. In the case where it does not vanish, the $\vp_{\infty}'$ is such that $(\text{Vol}_{\vp_{\infty}'})_{\bar{a}\bar{b}}$ is odd. Using \eqref{varagau} again we can conclude that $\vp_{\infty}'\cdot(\extder\vp_{\infty}'\times\partial_{\bar{a}}\vp_{\infty}')$ is even. Hence sticking this expression in the integral (1) gives $0$. From $a^2\oint\text{d}^2x\,\sqrt{\bar{\gamma}}\,\extder\bar{A}_r\boldsymbol{\wedge}\bar{\nabla}^{\bar{a}}(b\extder A_{\bar{a}}^{\text{gau}})$ it remains the expression $a^2\oint\text{d}^2x\,\sqrt{\bar{\gamma}}\,\extder\bar{A}_r\boldsymbol{\wedge}\bar{\nabla}^{\bar{a}}(b\vp_{\infty}\cdot(\extder\vp_{\infty}\times\partial_{\bar{a}}\vp_{\infty}))$, which does cancel with the integral (2).

As $\vp_{\infty}$ and also $\text{Vol}_{\vp_{\infty}}$ does in general not fulfill any parity condition, the integral $B_2:=(3)$ does not vanish.

The integrals $B_3:=(4)$ do not vanish for any field configuration, if and only if $\vec{\zeta}$ is field dependent.
A general gauge transformation that fulfills the fall-off and boundary conditions \ref{fatboundary} to accompany the boost depends at least on $\vp_{\infty}$.

\subsection{Solving the boost problem I}\label{booster}
In the previous section we found that the Lorentz boosts do not act canonical on phase space. Rather  for a general $\vp_{\infty}\in\mathcal{F}_{\{\vp_{\infty}\}}$ and an $\bar{A}_{\bar{a}}$ that has an odd component, $\liephase_{X^{(\xi^{\perp},0)}}\Omega=B_1+B_2+B_3\neq 0$. $B_1\neq0$ happens when one allows for improper gauge transformations $\partial_{\bar{a}}\Phi_{\text{even}}$ in the theory (see \eqref{phasespace}). This boundary term can be handled by introducing a new degree of freedom $\bar{\psi}$ on the sphere at infinity and a boundary term for the symplectic form. It will turn out that this new degree of freedom is actually the missing improper gauge transformation that was locked out by the condition $\bar{\pi}^r_{\text{odd}}=0$ (\ref{impgenerator}) before. This method to handle $B_1$ will be carried out in the section \ref{asym}.

The boundary term $B_2$ is made uncritical by a conceptual approach. We use the fact that inside each connection component of $\mathbb{P}$, $\vp_{\infty}$ is only a proper gauge degree of freedom (corollary \ref{thegaugemanifold}). If we would work on the \textit{reduced phase space} which is the set of actual physical states, i.e. the quotient $\mathcal{C}/\mathsf{Gau}$, there would be no boundary term $B_2$, because this term involves $\extder\vp_{\infty}$, which is not anymore present in the theory. But working on the reduced phase space is not realistic because it is unclear whether the quotient $\mathcal{C}/\mathsf{Gau}$ has any manifold structure.

Rather we take the partial gauge that fixes some $\vp_{\infty}$ in each connected component $\mathcal{F}_{\{\vp_{\infty}\}}^{(m)}$ as a boundary condition.
\begin{defi}
	The \textit{pre-final} phase space of $\SU(2)$ Yang-Mills-Higgs theory is
	\begin{align}
		\bar{\mathbb{P}}^{\text{pre}}:=\bigsqcup_{m\in\mathbb{Z}}\mathbb{P}_{\vp_{\infty}^{(m)}},
	\end{align}
	where $\vp_{\infty}^{(m)}$ is some element in $\mathcal{F}_{\{\infty\}}^{(m)}$.
\end{defi}
\begin{prop}\label{adjustpoin}
	There is an action of $\mathfrak{poin}$ on $\mathbb{P}$ that acts tangential to all fibres $\mathbb{P}_{\vp_{\infty}}$ which is the composition of the usual $\mathfrak{poin}$ action (\ref{Poinis}) with an infinitesimal proper gauge transformation.
\end{prop}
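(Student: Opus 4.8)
The plan is to start from the usual Poincar\'{e} action \eqref{Poinis} (with the spatial-translation adjustment of Remark \ref{trickyspatial} already built in) and to read off precisely how it displaces the Higgs vacuum. From \eqref{Higgsboost} one has
\begin{align}
\delta_{(\xi^{\perp},\vect{\xi})}\vp_{\infty}=Y^{\bar{a}}\partial_{\bar{a}}\vp_{\infty}+e\vec{\zeta}^{(0)}\times\vp_{\infty}=:\vec{v}_{(\xi^{\perp},\vect{\xi})}.
\end{align}
Since $\|\vp_{\infty}\|^2=a^2$ is constant on $S^2$ one has $\vp_{\infty}\cdot\partial_{\bar{a}}\vp_{\infty}=0$ and $\vp_{\infty}\cdot(\vec{\zeta}^{(0)}\times\vp_{\infty})=0$, hence $\vp_{\infty}\cdot\vec{v}=0$, so that $\vec{v}_{(\xi^{\perp},\vect{\xi})}\in T_{\vp_{\infty}}\mathcal{F}^{(m)}_{\{\vp_{\infty}\}}$ is a genuine tangent vector to the space of vacua. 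In particular the only generators that move the fibre are the rotations (through $Y^{\bar{a}}$) and the boosts and time translations (through $\vec{\zeta}^{(0)}$); purely spatial translations already fix $\vp_{\infty}$ at leading order.

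First I would construct the compensating gauge parameter. By the first step in the proof of Proposition \ref{epsilonorbit} every element of $T_{\vp_{\infty}}\mathcal{F}^{(m)}_{\{\vp_{\infty}\}}$ is realised by a proper gauge transformation; explicitly I set
\begin{align}
\vec{\epsilon}^{\perp}_{(0)}:=-\frac{1}{ea^2}\,\vp_{\infty}\times\vec{v}_{(\xi^{\perp},\vect{\xi})},\qquad e\,\vec{\epsilon}^{\perp}_{(0)}\times\vp_{\infty}=-\vec{v}_{(\xi^{\perp},\vect{\xi})},
\end{align}
where the second identity uses $\vec{v}\perp\vp_{\infty}$ and the double cross-product rule. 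Because $\vec{\epsilon}^{\perp}_{(0)}$ is perpendicular to $\vp_{\infty}$ it carries no $\bar{\epsilon}$-component, so by \eqref{impgenerator} its generator vanishes on $\mathcal{C}$ and the transformation is \emph{proper} --- this is the crucial point, since an improper compensator would alter the physical state. Defining the modified action $\tilde{X}_{(\xi^{\perp},\vect{\xi})}:=X_{(\xi^{\perp},\vect{\xi})}+X^{\vec{\epsilon}^{\perp}_{(0)}}$ then gives $\tilde{\delta}\vp_{\infty}=\vec{v}-\vec{v}=0$, so $\tilde{X}$ is tangent to every fibre $\mathbb{P}_{\vp_{\infty}}$; invariance of the boundary conditions \eqref{fatboundary} is inherited from the usual action (section \ref{aspoin}) together with the fact that proper gauge transformations preserve \eqref{fatboundary}.

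The step that needs real care --- and which I expect to be the main obstacle --- is showing that $\{\tilde{X}_{(\xi^{\perp},\vect{\xi})}\}$ is still a homomorphic image of $\mathfrak{poin}$, i.e.\ that $[\tilde{X}_p,\tilde{X}_q]=\tilde{X}_{[p,q]}$. Writing $\tilde{X}_p=X_p+X^{\epsilon_p}$ and using $[X_p,X_q]=X_{[p,q]}$, the defect $\Delta_{pq}:=[\tilde{X}_p,\tilde{X}_q]-\tilde{X}_{[p,q]}$ is simultaneously $\mathrm{pr}$-vertical, being a bracket of fibre-tangent fields minus a fibre-tangent field for the projection $\mathrm{pr}:\mathbb{P}\to\mathcal{F}_{\{\vp_{\infty}\}}$, and a gauge transformation, since it equals $X^{\epsilon_{[p,q]}}-[X_p,X^{\epsilon_q}]+[X_q,X^{\epsilon_p}]-[X^{\epsilon_p},X^{\epsilon_q}]$ and the proper-gauge generators form an ideal. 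The clean way to dispatch $\Delta_{pq}$ is to pass at once to a single fixed fibre (the pre-final phase space), where $\vp_{\infty}$ is frozen and $\epsilon_p$ becomes a field-independent function of the angles. The homomorphism property then reduces to a Maurer--Cartan type identity relating the spacetime Lie-dragging of the $\epsilon_p$, their mutual gauge bracket (governed by the constraint algebra $\{\vec{\mathscr{G}}\otimes,\vec{\mathscr{G}}\}\sim(\,.\times.\,)\cdot\vec{\mathscr{G}}$) and the compensator $\epsilon_{[p,q]}$ of the Poincar\'{e} bracket, which I would verify directly from the explicit $\vec{\epsilon}^{\perp}_{(0)}$ and the structure constants of $\mathfrak{poin}$. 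Any residual discrepancy is by construction a fibre-preserving proper gauge transformation and therefore invisible on the physical phase space $\mathcal{C}/\Gau$, which is all that is required for the modified action to be well posed.
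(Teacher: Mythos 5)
Your core construction is exactly the paper's: read off the displacement of $\vp_{\infty}$, cancel it with a proper gauge parameter proportional to $\vp_{\infty}\times(\text{displacement})$, and note that a $\vp_{\infty}$-perpendicular parameter has no $\bar{\epsilon}$-component and hence generates a proper transformation. The only cosmetic difference is that you treat rotations and boosts uniformly through one compensator $\vec{\epsilon}^{\perp}_{(0)}=-(ea^2)^{-1}\vp_{\infty}\times\vec{v}$, whereas the paper splits the cases: translations via Remark \ref{trickyspatial}, rotations via an explicit compensator, and boosts by simply setting $\vec{\zeta}^{(0)\perp}=0$ (which is the same as cancelling $e\vec{\zeta}^{(0)\perp}\times\vp_{\infty}$ by a proper gauge transformation, so the two are equivalent). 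Incidentally your sign is the correct one: with the convention $\delta_{\vec{\epsilon}}\vp=e\vec{\epsilon}\times\vp$ of \eqref{infgauge}, the parameter $+(ea^2)^{-1}\vp_{\infty}\times Y^{\bar{a}}\partial_{\bar{a}}\vp_{\infty}$ written in the paper generates $+Y^{\bar{a}}\partial_{\bar{a}}\vp_{\infty}$ and would double the displacement rather than cancel it, so the paper's formula carries a typo. Your final paragraph on bracket closure goes beyond what the paper proves in this proposition: the paper's proof stops at tangentiality and defers the algebra structure to the later theorem, where $\{P_{(\xi_1^{\perp},\vect{\xi}_1)},P_{(\xi_2^{\perp},\vect{\xi}_2)}\}=P_{(\hat{\xi}^{\perp},\hat{\vect{\xi}})}$ is verified on the pre-final phase space with the compensator frozen to a field-independent function, which is essentially the fixed-fibre computation you sketch. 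Be aware, though, that your fallback clause --- that any residual bracket defect is proper gauge and hence invisible on $\mathcal{C}/\Gau$ --- proves something weaker than the proposition asserts (an action on $\mathbb{P}$ tangential to the fibres, not merely on the quotient), so if you keep that paragraph you should actually carry out the fixed-fibre verification rather than retreat to the reduced phase space.
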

\begin{proof}
	Certainly the action of $\vect{\xi}$ does not leave $\vp_{\infty}$ invariant is $\vect{\xi}$ as $\delta_{(0,\vect{\xi})}\vp_{\infty}=\xi^i\partial_i\vp_{\infty}$. Finding a correcting proper gauge transformation that brings the translation part of $\delta_{(0,\vect{\xi})}\vp_{\infty}$ to zero was already done before (\ref{trickyspatial}). Let's now choose a proper gauge transformation $\vec{\epsilon}^{\perp}$ such that $\delta_{Y}\vp_{\infty}+\delta_{\vec{\epsilon}^{\perp}}\vp_{\infty}=0$. This equation is fulfilled by the choice $\vec{\epsilon}^{\perp}=(ea^2)^{-1}\vp_{\infty}\times Y^{\bar{a}}\partial_{\bar{a}}\vp_{\infty}$. It is now easy to see that $\delta_{(0,\vect{\xi})}+\delta_{\vec{\epsilon}^{\perp}}$ acts tangential to $\mathbb{P}_{\vp_{\infty}}$ $\forall\vp_{\infty}\in\mathcal{F}_{\{\vp_{\infty}\}}$.
	
	If a gauge transformation $\vec{\zeta}$ is accompanying the action of the boost $\xi^{\perp}$, $\delta_{(\xi^{\perp},0)}\vp_{\infty}=e\vec{\zeta}_{(0)}^{\perp}\times\vp_{\infty}$, we have to choose here $\vec{\zeta}_{(0)}^{\perp}=0$.
\end{proof}
Hence $\mathfrak{poin}$ acts on $\bar{P}^{\text{pre}}$.
The generators of gauge transformations on $\bar{\mathbb{P}}^{\text{pre}}$ are
\begin{align}
	\vec{\epsilon}=\epsilon\vp_{\infty}+o(r^{-\mathbb{N}}),
\end{align}
depending on the $\vp_{\infty}\in\{\vp_{\infty}^{(m)}\}_{m\in\mathbb{Z}}$.
Choosing the space $\bar{\mathbb{P}}^{\text{pre}}$ is associated to the treatment of $\vp_{\infty}$ as a non-dynamical \textit{background field}.
The symplectic form on $\mathbb{P}_{\vp_{\infty}}$ for some $\vp_{\infty}$ is with $\extder\vp_{\infty}=0$:
\begin{align}
	\Omega_{\vp_{\infty}}=a^2\int\text{d}^3x\,\extder A_i\boldsymbol{\wedge}\extder\pi^i+o(r^{-\mathbb{N}}).\\
	\Rightarrow\;\liephase_{(\xi^{\perp},0)}\Omega_{\vp_{\infty}}=(B_1(+B_3))|_{\extder\vp_{\infty}=0},
\end{align}
where $B_3$ does not vanishes if $\vec{\zeta}$ is field dependenent in some $\mathbb{P}_{\vp_{\infty}}$. In the next subsection we will see that we actually have to choose  $B_3$ field dependent for the method in the next subsection to work.

\subsection{Solving the boost problem II}
\label{asym}
This section is an exact reprise of the method by Henneaux and Troessaert for electromagnetism in \cite{Henneaux-ED}, because they have to treat exactly the same boundary term.

On the pre-final phase space $\bar{\mathbb{P}}^{\text{pre}}$ the Lorentz boost of the symplectic form in the background $\vp_{\infty}$ is:
\begin{align}
	\liephase_{X^{(\xi^{\perp},0)}}\Omega_{\vp_{\infty}}&=a^2\oint\text{d}^2x\sqrt{\bar{\gamma}}\,\extder\bar{A}_r\boldsymbol{\wedge}\bar{\nabla}^{\bar{a}}(b\extder(\bar{A}_{\bar{a}}^{\text{even}}+\partial_{\bar{a}}\Phi_{\text{even}}))\nonumber\\
	&-\int\text{d}^3x\,\extder\vec{\zeta}\boldsymbol{\wedge}\cdot\extder\vec{\mathscr{G}}+a^2\oint\text{d}^2x\,\extder\bar{\zeta}\boldsymbol{\wedge}\extder\bar{\pi}^r.\label{boostproblem}
\end{align}
For the boosts to be canonical, this expression has to vanish. By \ref{phasespace} $\bar{A}_r$ is odd under parity and hence the term including $\bar{A}_{\bar{a}}^{\text{even}}$ vanishes, while the term including the improper gauge transformation $\partial_{\bar{a}}\Phi_{\text{even}}$ does not vanish. Let's treat this term at first and we will see that the remaining terms will also be treated by this ansatz.
Henneaux and Troessaert introduced a new field $\bar{\psi}\in C^{\infty}(S^2,\mathbb{R})$ together with an alternative symplectic form $\Omega_{\vp_{\infty}}':=\Omega_{\vp_{\infty}}+\omega$, where
\begin{align}
	\omega=-a^2\oint\text{d}^2x\sqrt{\bar{\gamma}}\,\extder \bar{A}_r\boldsymbol{\wedge}\extder\bar{\psi}.
\end{align}
$\bar{\psi}$ should have the following transformation behaviour under boosts:
\begin{align}
	\delta_b\bar{\psi}=\bar{\nabla}^{\bar{a}}(b\bar{A}_{\bar{a}})+b\bar{A}_r.\label{cureboost}
\end{align}
 With that,
\begin{align}
	\liephase_{X^{(\xi^{\perp},0)}}\omega&=\extder(i_{X^{(\xi^{\perp},0)}}\omega)\nonumber\\&=-a^2\oint\text{d}^2x\,b\extder\bar{\pi}_r\boldsymbol{\wedge}\extder\bar{\psi}-a^2\oint\text{d}^2x\,\sqrt{\bar{\gamma}}\extder\bar{A}_r\boldsymbol{\wedge}\bar{\nabla}^{\bar{a}}(b\extder\bar{A}_{\bar{a}}).
\end{align}
\begin{align}
	\Rightarrow\;\liephase_{X^{(\xi^{\perp},0)}}\Omega'_{\vp_{\infty}}&=-a^2\oint\text{d}^2x\,b\extder\bar{\pi}_r\boldsymbol{\wedge}\extder\bar{\psi}+a^2\oint\text{d}^2x\,\extder\bar{\zeta}\boldsymbol{\wedge}\extder\bar{\pi}^r\nonumber\\
	&-\int\text{d}^3x\,\extder\vec{\zeta}\boldsymbol{\wedge}\cdot\extder\vec{\mathscr{G}}.
\end{align}
For the first two terms to cancel we choose the field dependent $\bar{\zeta}=-b\bar{\psi}.$ The third term is handled after introducing the canonical pair $(\vec{\psi},\vec{\pi}_{\psi})$ with an affiliated additional term for the symplectic form,
\begin{align}
	\Omega^{\psi}=\int\text{d}^3x\,\extder\vec{\pi}_{\psi}\boldsymbol{\wedge}\cdot\extder\vec{\psi},
\end{align}
while demanding the constraint $\vec{\pi}_{\psi}\approx 0$, with the fall-off behaviour
\begin{align}
	\vec{\psi}=(r^{-1}\bar{\psi}+r^{-2}\psi^{(1)}+...)\vp_{\infty}+\text{o}(r^{-\mathbb{N}}),\;\vec{\pi}_{\psi}=(r^{-4}\bar{\pi}_{\psi}+...)\vp_{\infty}+\text{o}(r^{-\mathbb{N}}),
\end{align}
where the fall-off of $\vec{\pi}_{\psi}$ is the fastes decay that is invariant under the yet to be defined transformations under boosts. The additional constraint function is
\begin{align}
	G_{\vec{\mu}}=\int\text{d}^3x\,\vec{\mu}\cdot\vec{\pi}_{\psi}.
\end{align}
By the fall-off of $\vec{\pi}_{\psi}$ we allow for a fall-off $\vec{\mu}=(\bar{\mu}+r^{-1}\mu^{(1)}+...)\vp_{\infty}+o(r^{-\mathbb{N}})$ of the gauge parameter. $G_{\vec{\mu}}$ is not functionally differentiable with respect to the extended symplectic form $\bar{\Omega}_{\vp_{\infty}}:=\Omega'_{\vp_{\infty}}+\Omega^{\psi}$. We have to add a boundary term, such that
\begin{align}
	\bar{G}_{\vec{\mu}}=G_{\vec{\mu}}-a^2\oint\text{d}^2x\sqrt{\bar{\gamma}}\,\bar{\mu}\bar{A}_r\label{newimpropers}
\end{align}
is f.d.w.r.t. $\bar{\Omega}_{\vp_{\infty}}$. The corresponding gauge transformation is $\delta_{\vec{\mu}}\vec{\psi}=\vec{\mu}$ where the improper part is given by $\bar{\mu}_{\text{odd}}$. Hence up to $\bar{\psi}_{\text{odd}}$, $\vec{\psi}$ is a proper gauge degree of freedom.

The last adjustment in this section is to choose the boost- accompanying gauge parameter $\vec{\zeta}$, the boost transformations of $\vec{\psi}$ and $\vpi_{\psi}$ and an adjustment of the transformation of $\vpi^i$ that vanishes on the constraint surface:
\begin{align}\label{psiboost}
	&\delta_{(\xi^{\perp},0)}\vec{\psi}=D_i(\xi^{\perp}\va^i),\;\;\delta_{(\xi^{\perp},0)}\vpi_{\psi}=-\xi^{\perp}\vec{\mathscr{G}},\nonumber\\
	&\vec{\zeta}=-\xi^{\perp}\vec{\psi},\;\;\delta_{(\xi^{\perp},0)}\vpi^i=(...)-\xi^{\perp}D_i\vpi_{\psi}.
\end{align}
This choice results in $\liephase_{X^{(\xi^{\perp},0)}}\bar{\Omega}_{\vp_{\infty}}=0$.
\begin{defi}
	The \textit{final} phase space of $\SU(2)$ Yang-Mills-Higgs theory is
	\begin{align}
		\bigg(\bar{\mathbb{P}}:=\bigsqcup_{m\in\mathbb{Z}}\mathbb{P}_{\vp_{\infty}^{(m)}}\times\{(\vec{\psi},\vec{\pi}_{\psi})\}_{\vp_{\infty}^{(m)}},(\bar{\Omega}_{\vp_{\infty}^{(m)}})_{m\in\mathbb{Z}}\bigg).
	\end{align}
\end{defi}

\section{Asymptotic symmetries}
In the last section in order for the boosts to act canonically we introduced an additional gauge degree of freedom parametrized by $\vec{\mu}$ that is generated by $\bar{G}_{\vec{\mu}}$. Together with $\bar{G}_{\vec{\epsilon}}$ this combines to
\begin{align}
	\bar{G}_{(\vec{\epsilon},\vec{\mu})}=\int\text{d}^3x\,(\vec{\epsilon}\cdot\vec{\mathscr{G}}+\vec{\mu}\cdot\vpi_{\psi})-a^2\oint\text{d}^2x\,(\bar{\epsilon}\bar{\pi}^r+\sqrt{\bar{\gamma}}\bar{\mu}\bar{A}_r).
\end{align}
The pair $(\vec{\epsilon},\vec{\mu})$ with the asymptotic component $(\bar{\epsilon}_{\text{even}},\bar{\mu}_{\text{odd}})$ generate improper gauge transformations. $\bar{\epsilon}_{\text{even}}$ and $\bar{\mu}_{\text{odd}}$ combine to a single smooth function on $S^2$. It is easy to see that $\bar{G}_{(\vec{\epsilon},\vec{\mu})}$ actually generate symmetries, i.e. $\{\bar{G}_{\vec{\epsilon},\vec{\mu}},H\}\approx 0$ $\forall(\vec{\epsilon},\vec{\mu})$.
\begin{bem}
	At first sight the introduction of the new physical variable $\bar{\psi}_{\text{odd}}$ seems adhoc, but as $\delta_{\bar{\mu}}\bar{\psi}_{\text{odd}}=\bar{\mu}_{\text{odd}}$ this new variable brings only the odd part of $\bar{\epsilon}$ back as a physical degree of freedom. Also $\bar{A}_r$ is a gauge-invariant conserved quantity. By the introduction of $\bar{\mu}$ we get a corresponding symmetry.
\end{bem}
The generators of the Poincar\'{e} transformations have the general form:
\begin{align}
	P_{(\xi^{\perp},\vect{\xi})}=\int\text{d}^3x\sqrt{g}\,(\xi^{\perp}\mathscr{P}_0+\xi^i\mathscr{P}_i)+\oint\text{d}^2x\,\mathscr{B}.\label{Boostgen}
\end{align}
$\mathscr{P}_{0}$ is just the Hamiltonian density with adjusted gauge generating part for the transformation \eqref{psiboost} and an additional term that generates a fitting transformation for the asymptotic boost transformation of $\bar{\psi}$ (\ref{cureboost}):
\begin{align}
	\mathscr{P}_0&=\frac{\vpi^i\cdot\vpi_i}{2\sqrt{g}}+\frac{\sqrt{g}}{4}\vec{F}^{ij}\cdot\vec{F}_{ij}+\frac{\vec{\Pi}\cdot\vec{\Pi}}{2\sqrt{g}}+\frac{1}{2}D^i\vp\cdot D_i\vp+V(\vp)-\vec{\psi}\cdot\vec{\mathscr{G}}-\va_i\cdot  \bar{D}^i\vec{\pi}_{\psi}.
\end{align}
The generators of the spatial Poincar\'{e} transformations are:
\begin{align}
	\mathscr{P}_i=&\vpi^{j\perp}\cdot\partial_i\va_j^{\perp}-\partial_j(\vpi^{j\perp}\cdot\va_i^{\perp})+\vec{\Pi}^{\perp}\cdot\partial_i\vp^{\perp}+\vec{\pi}_{\psi}^{\perp}\cdot\partial_i\vec{\psi}^{\perp}\nonumber\\
	&+a^2(\pi^j\partial_i A_j-\partial_j(\pi^jA_i)+\Pi\partial_i\phi+\pi_{\psi}\partial_i\psi),\label{spatialgenerator}
\end{align}
where we made the decompositions
\begin{align}
	\va_i&=A_i\vp_{\infty}-(ea^2)^{-1}\vp_{\infty}\times\partial_i\vp_{\infty}+\va_i^{\perp},\;\;\vp=\phi\vp_{\infty}+\vp^{\perp}\\
	\vpi^i&=\pi^i\vp_{\infty}+\vpi^{i\perp},\;\;\vec{\Pi}=\Pi\vp_{\infty}+\vec{\Pi}^{\perp}.
\end{align}
The form of $\mathscr{P}_i$ differs from the usual form (to generate the transformations $\delta_{(0,\vect{\xi})}$, defined in \ref{Poinis}), because we had to adjust the action $\delta_{(0,\vect{\xi})}$ by a proper gauge transformation to have it act tangentially to $\bar{\mathbb{P}}$ (see prop. \ref{adjustpoin}).

To make $P_{(\xi^{\perp},\vect{\xi})}$ functionally differentiable with respect to $\bar{\Omega}_{\vp_{\infty}}$ we have to add the same boundary terms as in electromagnetism \cite[Section 4.5]{Henneaux-ED}, i.e.\
\begin{align}
	\mathscr{B}&=b(\bar{\psi}\bar{\pi}^r+\sqrt{\bar{\gamma}}\bar{A}_{\bar{a}}\bar{\nabla}^{\bar{a}}\bar{A}_r)+Y^{\bar{a}}(\bar{\pi}^r\bar{A}_{\bar{a}}+\sqrt{\bar{\gamma}}\bar{\psi}\partial_{\bar{a}}\bar{A}_r).
\end{align}
\begin{theorem}The asymptotic symmetry algebra is build by the generators of the Poincar\'{e} group $P_{(\xi^{\perp},\vect{\xi})}$ and the generators of improper gauge transformations $G_{(\epsilon,\mu)}$, fulfilling the Poisson relations:
	\begin{align}
		\{P_{(\xi_1^{\perp},\vect{\xi}_1)},P_{(\xi_2^{\perp},\vect{\xi}_2)}\}= P_{(\hat{\xi}^{\perp},\hat{\vect{\xi}})}\; ,\; \{\bar{G}_{(\vec{\epsilon},\vec{\mu})},P_{(\xi^{\perp},\vect{\xi})}\}= \bar{G}_{(\hat{\vec{\epsilon}},\hat{\vec{\mu}})}\; ,\; \{\bar{G}_{(\vec{\epsilon}_1,\vec{\mu}_1)},\bar{G}_{(\vec{\epsilon}_2,\vec{\mu}_2)}\}\approx 0,\label{algebra}
	\end{align}
	where	
	\begin{align}\label{liepoin}
		\hat{\xi}^{\perp}&=\xi^i_1\partial_i\xi^{\perp}_2-\xi^i_2\partial_i\xi_1^{\perp}\; ,\; \hat{\xi}^i=g^{ij}(\xi^{\perp}_1\partial_j\xi_2^{\perp}-\xi_2^{\perp}\partial_j\xi_1^{\perp})+\xi_1^j\partial_j\xi_2^i-\xi_2^j\partial_j\xi_1^i\nonumber\\
		\hat{\vec{\mu}}&=\nabla^i(\xi^{\perp}D_i\vec{\epsilon})-(\xi^i\partial_i\mu)\vp_{\infty}-(\xi^j\partial_j\vec{\mu}^{\perp})^{\perp}\; ,\;\hat{\vec{\epsilon}}=\xi^{\perp}\vec{\mu}-(\xi^j\partial_j\epsilon)\vp_{\infty}-(\xi^j\partial_j\vec{\epsilon}^{\perp})^{\perp}.
	\end{align}
	Note: $(\xi^j\partial_j\epsilon)\vp_{\infty}+(\xi^j\partial_j\vec{\epsilon}^{\perp})^{\perp}=\delta_{(0,\vect{\xi})}\vec{\epsilon}$ and $(\xi^i\partial_i\mu)\vp_{\infty}+(\xi^j\partial_j\vec{\mu}^{\perp})^{\perp}=\delta_{(0,\vect{\xi})}\vec{\mu}$.
\end{theorem}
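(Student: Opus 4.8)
The plan is to use the standard Hamiltonian fact that, whenever $F$ and $G$ admit Hamiltonian vector fields $X^F,X^G$, the Poisson bracket $\{F,G\}=X^G(F)$ generates the commutator $[X^F,X^G]$, so that the Poisson algebra of generators faithfully reproduces the algebra of the associated infinitesimal transformations $\delta_{(\xi^{\perp},\vec{\xi})}$ and $\delta_{(\vec{\epsilon},\vec{\mu})}$, up to a possible field-independent surface term that one must show is absent. Accordingly, for each of the three relations I would first read off the commutator of the associated transformations and then verify that the attendant boundary pieces reassemble into the generator carrying the composed parameters in \eqref{liepoin}, with no leftover central term.

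For the first relation the vector fields $X^{(\xi^{\perp},\vec{\xi})}$ implement Cauchy-hypersurface deformations with lapse $\xi^{\perp}$ and shift $\vec{\xi}$, and their bulk commutator is governed by the Dirac surface-deformation algebra \cite{HenneauxTeitelboim}, whose structure functions are exactly the $\hat{\xi}^{\perp}$ and $\hat{\xi}^i$ stated in the theorem. I would check that the bulk brackets reproduce these, and then show that the boundary term $\mathscr{B}$ together with the $\bar{\psi}$-sector boundary contribution combine so that the bracket of the full (bulk plus surface) generators closes onto $P_{(\hat{\xi}^{\perp},\hat{\vec{\xi}})}$. Since $\mathscr{B}$, the boost law \eqref{cureboost} and the adjustments \eqref{psiboost} were tailored precisely to enforce $\liephase_{X^{(\xi^{\perp},0)}}\bar{\Omega}_{\vp_{\infty}}=0$, this closure is expected; the labour lies in the surface bookkeeping, in particular the cross terms proportional to $b$ and to $Y^{\bar{a}}$ in $\mathscr{B}$.

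For the second relation I would evaluate $\{\bar{G}_{(\vec{\epsilon},\vec{\mu})},P_{(\xi^{\perp},\vec{\xi})}\}=\delta_{(\xi^{\perp},\vec{\xi})}\bar{G}_{(\vec{\epsilon},\vec{\mu})}$. The spatial shift $\vec{\xi}$ merely Lie-drags the parameters, producing the $-\delta_{(0,\vec{\xi})}\vec{\epsilon}$ and $-\delta_{(0,\vec{\xi})}\vec{\mu}$ contributions recorded in the note. The genuinely new effect is the boost: because the boost-accompanying gauge parameter is the field-dependent $\vec{\zeta}=-\xi^{\perp}\vec{\psi}$ and $\bar{\psi}$ transforms by \eqref{cureboost}, a boost interchanges the $\vec{\epsilon}$- and $\vec{\mu}$-types, which is the origin of the $\xi^{\perp}\vec{\mu}$ term in $\hat{\vec{\epsilon}}$ and of the $\nabla^i(\xi^{\perp}D_i\vec{\epsilon})$ term in $\hat{\vec{\mu}}$. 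I would extract these by inserting \eqref{psiboost} and the boundary term \eqref{newimpropers} into the bracket and collecting the surface contributions.

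For the third relation I would use the bulk constraint algebra together with $\vec{\pi}_{\psi}\approx 0$; by itself this returns a generator with bulk parameter $e\vec{\epsilon}_1\times\vec{\epsilon}_2$. The decisive simplification is that on $\bar{\mathbb{P}}$ every admissible gauge parameter is asymptotically parallel to the background, $\vec{\epsilon}=\epsilon\vp_{\infty}+\text{o}(r^{-\mathbb{N}})$, so the improper (surface) content sees only $e(\bar{\epsilon}_1\vp_{\infty})\times(\bar{\epsilon}_2\vp_{\infty})=0$; hence the right-hand side is a pure bulk constraint and $\{\bar{G}_{(\vec{\epsilon}_1,\vec{\mu}_1)},\bar{G}_{(\vec{\epsilon}_2,\vec{\mu}_2)}\}\approx 0$. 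I expect the main obstacle of the whole theorem to be the boundary analysis of the first two relations in the presence of the field-dependent parameter $\vec{\zeta}$: because $\vec{\zeta}$ depends on $\vec{\psi}$ and on $\vp_{\infty}$, the naive brackets carry extra variational terms, and the delicate step is to show these are absorbed by the designed boundary terms without generating a central extension or spoiling the Jacobi identity. The non-uniqueness of the embedding $\mathfrak{poin}\hookrightarrow\mathfrak{asym}$ noted in Remark \ref{Poingauge} is precisely the freedom I would exploit to fix these field-dependent pieces consistently.
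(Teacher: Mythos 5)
Your proposal is correct in outline, but it takes a genuinely different route from the paper. You propose the direct Henneaux--Troessaert-style computation: bulk hypersurface-deformation algebra plus explicit boundary bookkeeping with $\mathscr{B}$, \eqref{cureboost} and \eqref{psiboost}, checking at the end that no field-independent (central) surface term survives. The paper deliberately avoids exactly this, calling it ``really tedious'' because every bracket requires splitting all variables into parts parallel and perpendicular to $\vp_{\infty}$. Instead it uses two tricks: (i) for $\{P_{(\xi_1^{\perp},\vect{\xi}_1)},P_{(\xi_2^{\perp},\vect{\xi}_2)}\}$ it writes $P_{(\xi^{\perp},\vect{\xi})}=a^2P^{\text{em}}_{(\xi^{\perp},\vect{\xi})}+P^{\perp}_{(\xi^{\perp},\vect{\xi})}$, invokes the Woodhouse result (perfectness of $\mathfrak{poin}$) to know \emph{a priori} that the bracket equals $-P_{(\tilde{\xi}^{\perp},\tilde{\vect{\xi}})}$ for some parameters with no central charge, and then identifies $(\tilde{\xi}^{\perp},\tilde{\vect{\xi}})=(\hat{\xi}^{\perp},\hat{\vect{\xi}})$ by matching the purely electromagnetic piece against the known result of \cite{Henneaux-ED}, using faithfulness of the two comoment maps; (ii) for $\{\bar{G}_{(\vec{\epsilon},\vec{\mu})},P_{(0,\vect{\xi})}\}$ it lifts to the extended space $\bar{\mathbb{P}}^{\uparrow}$ in which $\vp_{\infty}$ is again dynamical, splits the adjusted spatial generator as $(P_{(0,\vect{\xi})})^{\text{former}}+\bar{G}_{(\vec{\epsilon}^{\perp},0)}$ (the compensating proper gauge transformation of Proposition \ref{adjustpoin}), and computes the two simpler brackets, which reassemble into $-\bar{G}_{(\delta_{(0,\vect{\xi})}\vec{\epsilon},0)}$. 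What the paper's route buys is precisely the two points your plan leaves hardest: the absence of central extensions in the $P$--$P$ bracket comes for free from perfectness rather than from inspecting surface terms, and the field-dependent compensating parameters (which make a naive bracket on $\bar{\mathbb{P}}$ ill-behaved, since $X^{(\vec{\epsilon},\vec{\mu})}$ is canonical only for field-independent parameters) are tamed by the lift. What your route buys is self-containedness and the physical transparency you exhibit: your explanations of why a boost interchanges the $\vec{\epsilon}$- and $\vec{\mu}$-types, and of the origin of $\xi^{\perp}\vec{\mu}$ in $\hat{\vec{\epsilon}}$ and $\nabla^i(\xi^{\perp}D_i\vec{\epsilon})$ in $\hat{\vec{\mu}}$, are exactly the content the paper compresses into ``straightforward calculation.''

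Your treatment of the third relation is essentially the same as the paper's and is complete as stated: the bulk closes on a Gauss constraint with parameter $e\,\vec{\epsilon}_1\times\vec{\epsilon}_2$, and since admissible parameters satisfy $\vec{\epsilon}=\epsilon\vp_{\infty}+\text{o}(r^{-\mathbb{N}})$ the surface content vanishes, leaving a proper gauge generator $\approx 0$. If you carry out the full program, be aware of one under-specified step: when you compute $\{\bar{G}_{(\vec{\epsilon},\vec{\mu})},P_{(0,\vect{\xi})}\}$ directly on $\bar{\mathbb{P}}$, the statement that the shift ``merely Lie-drags the parameters'' is only true for the \emph{adjusted} drag $(\xi^j\partial_j\epsilon)\vp_{\infty}+(\xi^j\partial_j\vec{\epsilon}^{\perp})^{\perp}$; the difference from the naive drag $\xi^j\partial_j\vec{\epsilon}$ consists of $\partial_j\vp_{\infty}$-terms produced by the compensating gauge transformation, and exhibiting this cancellation is where the paper needs its lift trick. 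Without either that trick or an equivalent argument handling the field dependence of $\vec{\epsilon}^{\perp}$ and $\vec{\zeta}=-\xi^{\perp}\vec{\psi}$, your bookkeeping would not close.
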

\begin{proof}
	As $P_{(0,\vect{\xi})}$ has a form that requires a splitting of the variables in a perpendicular to $\vp_{\infty}$- and parallel to $\vp_{\infty}$- part, to directly compute the Poisson brackets would be really tedious. Instead we use a trick to reduce the calculation to the calculation one has to do to get the result in electromagnetism \cite[equations (4.36)- (4.39)]{Henneaux-ED}. Let's start with $\{P_{(\xi_1^{\perp},\vect{\xi}_1)},P_{(\xi_2^{\perp},\vect{\xi}_2)}\}$. One can write $P_{(\xi^{\perp},\vect{\xi})}=a^2P^{\text{em}}_{(\xi^{\perp},\vect{\xi})}+P^{\perp}_{(\xi^{\perp},\vect{\xi})}$, where $P^{\text{em}}_{(\xi^{\perp},\vect{\xi})}$ is the term that depends exclusively on the variables $A_i$ and $\pi^i$. It is exactly the same expression as the Poincar\'{e} generator in electromagnetism.
	
	As the action of $\mathfrak{poin}$ on $\bar{\mathbb{P}}$ is canonical and therefore by \cite{Woodhouse:GQ} Hamiltonian, there is a Lie anti-homomorphism:
	\begin{align}
		\rho:(\mathfrak{poin},[\,,\,])\longrightarrow (C^{\infty}(\bar{\mathbb{P}}),\{\,,\,\}),
	\end{align}
such that with the faithful representation of $\mathfrak{poin}$ as $(\xi^{\perp},\vect{\xi})$:
\begin{align}
\{P_{(\xi_1^{\perp},\vect{\xi}_1)},P_{(\xi_2^{\perp},\vect{\xi}_2)}\}=-P_{(\tilde{\xi}^{\perp},\tilde{\vect{\xi})}}.	
\end{align}
for some $(\tilde{\xi}^{\perp},\tilde{\vect{\xi}})$. By using the equation (4.36) in \cite{Henneaux-ED}, $\{P_{(\xi_1^{\perp},\vect{\xi}_1)},P_{(\xi_2^{\perp},\vect{\xi}_2)}\}=a^2P^{\text{em}}_{(\hat{\xi}^{\perp},\hat{\vect{\xi}})}+(...)$. We also decompose $-P_{(\tilde{\xi}^{\perp},\tilde{\vect{\xi}})}=-a^2P_{(\tilde{\xi}^{\perp},\tilde{\vect{\xi}})}^{\text{em}}-P_{(\tilde{\xi}^{\perp},\tilde{\vect{\xi}})}^{\perp}$. Also for the phase space of electromagnetism there is a Lie anti-homomorphism $\rho^{\text{em}}$. It is easy to see that $\rho$ and $\rho^{\text{em}}$ are actually faithful. Using that, $P_{(\tilde{\xi}^{\perp},\tilde{\vect{\xi}})}^{\text{em}}=P_{(\hat{\xi}^{\perp},\hat{\vect{\xi}})}^{\text{em}}$ and therefore $(...)=P_{(\tilde{\xi}^{\perp},\tilde{\vect{\xi}})}^{\perp}$.

It is a straightforward calculation to show
\begin{align}
	\{\bar{G}_{(\vec{\epsilon},\vec{\mu})},P_{(\xi^{\perp},0)}\}= \bar{G}_{(\hat{\vec{\epsilon}},\hat{\vec{\mu}})(\vect{\xi}=0)}\;\text{and}\;\{\bar{G}_{(\vec{\epsilon}_1,\vec{\mu}_1)},\bar{G}_{(\vec{\epsilon}_2,\vec{\mu}_2)}\}\approx 0.
\end{align}
To calculate $\{\bar{G}_{(\vec{\epsilon},\vec{\mu})},P_{(0,\vect{\xi})}\}$ we use again a trick. Let's extend $\bar{\mathbb{P}}$ again with $\vp_{\infty}$ being a degree of freedom. On this extended space $\bar{\mathbb{P}}^{\uparrow}$ we use the symplectic form $\bar{\Omega}^{\uparrow}=\Omega+\Omega^{\psi}+\omega$. With that $X^{(0,\vect{\xi})}_{\text{former}}$ (vector field corresponding to spatial Poincar\'{e} transformations before adjusting to being tangential to $\bar{\mathbb{P}}_{\vp_{\infty}}$) is canonical. $X^{(\vec{\epsilon},\vec{\mu})}$ is canonical iff $(\vec{\epsilon},\vec{\mu})$ is field-independent. Let's write $(P_{(0,\vect{\xi})})^{\uparrow}_{\vp_{\infty}}=(P_{(0,\vect{\xi})})^{\text{former}}_{\vp_{\infty}}+(\bar{G}_{(\vec{\epsilon}^{\perp},0)})_{\vp_{\infty}}$, where $\vec{\epsilon}^{\perp}=(ea^2)^{-1}\vp_{\infty}\times Y^{\bar{a}}\partial_{\bar{a}}\vp_{\infty}$ (from the proof of proposition \ref{adjustpoin}). The arrow on the LHS means the lift of $P_{(0,\vect{\xi})}$ to the embedding of $\bar{\mathbb{P}}_{\vp_{\infty}}$ into $\bar{\mathbb{P}}^{\uparrow}$. Obviously $\vec{\epsilon}^{\perp}$ depends on the variable $\vp_{\infty}$, but along the embedding $\vp_{\infty}$ is fixed and we only consider the infinitesimal gauge transformation along this embedding, i.e. we can choose it field independet via still choosing $\vec{\epsilon}^{\perp}(\vp_{\infty})$ also at a different $\vp_{\infty}'$. With that the generator $\bar{G}_{(\vec{\epsilon}^{\perp},0)}$ actually exists. Now,
\begin{align}
\{\bar{G}_{(\vec{\epsilon},0)},(P_{(0,\vect{\xi})})^{\uparrow}_{\vp_{\infty}}\}=\underbrace{\{\bar{G}_{(\vec{\epsilon},0)},(P_{(0,\vect{\xi})})^{\text{former}}_{\vp_{\infty}}\}}_{=(1)}+\underbrace{\{\bar{G}_{(\vec{\epsilon},\vec{\mu})},\bar{G}_{(\vec{\epsilon}^{\perp},0)}\}}_{=(2)}.
\end{align}
Straightforward calculations shows $(1)=-\bar{G}_{(\xi^j\partial_j\vec{\epsilon},0)}$ and $(2)=-\bar{G}_{(\vec{\epsilon}^{\perp}\times\vec{\epsilon},0)}$, such that $(1)+(2)=-\bar{G}_{(\delta_{(0,\vect{\xi})}\vec{\epsilon},0)}$ using the definition of $\delta_{(0,\vect{\xi})}\vec{\epsilon}$ stated in the theorem.

Another straightforward calculations shows $\{\bar{G}_{(0,\vec{\mu})},(P_{(0,\vect{\xi})})^{\uparrow}_{\vp_{\infty}}\}=-\bar{G}_{(0,\delta_{(0,\vect{\xi})}\vec{\mu})}$.
\end{proof}
We denote the semi-direct sum of two Lie algebras $L$ and $L'$ by $L\oplus_{\text{semi}}L'$.
\begin{corollary}
	The Poisson algebra \eqref{algebra} in the previous theorem has the structure of a semi-direct sum of two Lie algebras.
\end{corollary}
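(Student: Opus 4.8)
The plan is to recognise the three Poisson relations \eqref{algebra} as exactly the data defining a semi-direct sum, so that the corollary reduces to matching each relation to an axiom. Recall the relevant characterisation: a Lie algebra $\mathfrak{s}$ is a semi-direct sum $\mathfrak{g}\oplus_{\text{semi}}\mathfrak{a}$ of a subalgebra $\mathfrak{g}$ with an ideal $\mathfrak{a}$ as soon as it decomposes, as a vector space, into $\mathfrak{s}=\mathfrak{g}\oplus\mathfrak{a}$ with $\mathfrak{g}$ a subalgebra and $\mathfrak{a}$ an ideal; the action $\phi:\mathfrak{g}\to\mathrm{Der}(\mathfrak{a})$, $\phi(X)Y=[X,Y]$, together with its homomorphism property, is then induced automatically. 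I would set $\mathfrak{g}:=\{P_{(\xi^{\perp},\vect{\xi})}\}$ and $\mathfrak{a}:=\{\bar{G}_{(\vec{\epsilon},\vec{\mu})}\}$ and verify only the three items above, the bracket throughout being the Poisson bracket, which is antisymmetric and satisfies Jacobi.

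First I would use the relation $\{P_{(\xi_1^{\perp},\vect{\xi}_1)},P_{(\xi_2^{\perp},\vect{\xi}_2)}\}=P_{(\hat{\xi}^{\perp},\hat{\vect{\xi}})}$, with $(\hat{\xi}^{\perp},\hat{\vect{\xi}})$ the surface-deformation bracket \eqref{liepoin}, to conclude that $\mathfrak{g}$ is closed and is a faithful image of $\mathfrak{poin}$ (faithfulness being the point already invoked in the theorem's proof). Next, the relation $\{\bar{G}_{(\vec{\epsilon},\vec{\mu})},P_{(\xi^{\perp},\vect{\xi})}\}=\bar{G}_{(\hat{\vec{\epsilon}},\hat{\vec{\mu}})}$ gives $[\mathfrak{g},\mathfrak{a}]\subseteq\mathfrak{a}$, and $\{\bar{G}_{(\vec{\epsilon}_1,\vec{\mu}_1)},\bar{G}_{(\vec{\epsilon}_2,\vec{\mu}_2)}\}\approx 0$ gives $[\mathfrak{a},\mathfrak{a}]\subseteq\mathfrak{a}$ (indeed $\mathfrak{a}$ is abelian); together these say $\mathfrak{a}$ is an ideal. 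Because $\mathfrak{a}$ is abelian the induced maps $\phi(X)$ are derivations trivially, and the homomorphism property of $\phi$ follows from the Jacobi identity applied to one $\bar{G}$ and two $P$'s, namely $\{\{\bar{G},P_1\},P_2\}-\{\{\bar{G},P_2\},P_1\}=\{\bar{G},\{P_1,P_2\}\}$, so no separate computation is required.

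For the vector-space directness $\mathfrak{g}\cap\mathfrak{a}=\{0\}$ I would argue that the two families are parametrised by independent data — $(\xi^{\perp},\vect{\xi})$ ranging over $\mathfrak{poin}$ and $(\vec{\epsilon},\vec{\mu})$ over the asymptotic gauge parameters — and that a nonzero $P_{(\xi^{\perp},\vect{\xi})}$ is never weakly equal to a pure improper-gauge generator $\bar{G}_{(\vec{\epsilon},\vec{\mu})}$, since the former contains the bulk Hamiltonian and momentum densities that do not vanish on $\mathcal{C}$ whereas the latter reduces weakly to a pure boundary charge. This secures $\mathfrak{asym}=\mathfrak{g}\oplus\mathfrak{a}$ as a vector space, completing the identification $\mathfrak{asym}\cong\mathfrak{poin}\oplus_{\text{semi}}\mathfrak{a}$ with $\mathfrak{a}$ the abelian ideal.

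The main obstacle is not algebraic but the bookkeeping forced by the weak equality: $\{\bar{G}_1,\bar{G}_2\}\approx 0$ holds only on $\mathcal{C}$, so strictly the bracket must be read on the quotient of observables by constraint functions (equivalently, restricted to $\mathcal{C}$), where $\approx$ becomes $=$. I would therefore carry out the whole argument for the classes of the $\bar{G}_{(\vec{\epsilon},\vec{\mu})}$ in that quotient, checking that Jacobi and the closure relations descend, so that $\mathfrak{a}$ is a genuine abelian ideal and the derivation/homomorphism structure is exact rather than weak. A secondary subtlety worth flagging is the field-dependence of the accompanying gauge parameter $\vec{\zeta}$, and hence of the structure functions hidden in \eqref{liepoin}; since the theorem already produced the closed brackets \eqref{algebra} with these dependences resolved, I would inherit that closure rather than re-derive it.
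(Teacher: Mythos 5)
Your proposal is correct and follows essentially the same route as the paper: the paper's proof simply takes $L'$ to be the Poisson algebra of the Poincar\'e generators and $L$ that of the improper gauge generators, and reads the semi-direct structure directly off the relations \eqref{algebra}, exactly your decomposition $\mathfrak{g}\oplus\mathfrak{a}$. Your additional care — verifying the subalgebra/ideal axioms explicitly, noting that the derivation and homomorphism properties follow from Jacobi, and handling the weak equality $\{\bar{G}_1,\bar{G}_2\}\approx 0$ by passing to the quotient by constraint functions (consistent with the paper's $\Asym=\Sym/\Gau$) — only makes explicit what the paper's two-line proof leaves implicit.
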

\begin{proof}
	Take as $L'$ the Poisson algebra of the Poincar\'{e} generators. Take as $L$ the Poisson algebra of improper gauge transformations. Looking at the relations \eqref{algebra} one sees immediately the structure of the semi-direct sum.
\end{proof}
As $\rho^{-1}_{\mathfrak{poin}}:L'\longrightarrow\mathfrak{poin}$ and $\rho^{-1}_{\text{imp}}:L\longrightarrow C^{\infty}(S^2,\mathfrak{u}(1))$ are faithful Lie algebra homomorphisms, the \textit{asymptotic symmetry algebra} of $\SU(2)$ Yang-Mills-Higgs theory has the structure of a semi-direct sum of an abelian algebra with the Poincar\'{e} algebra:
\begin{align}
	\mathfrak{asym}=C^{\infty}(S^2,\mathfrak{u}(1))\oplus_{\text{semi}}\mathfrak{poin}.
\end{align}

\section{Conclusions}
The study of  $\SU(2)$ Yang-Mills-Higgs theory in the Hamiltonian framework, starting from first principles, led us to clear insights into symmetry breaking and the Higgs mechanism at a classical level. Symmetry breaking happens as a consequence of a finite Hamiltonian. The Higgs mechanism is a consequence of the Poincar\'{e} invariance of this theory. It forces the non-abelian degrees of freedom of the theory to become massive (fast fall-off) while asymptotically electromagnetic degrees of freedom as well as a topological charge -- the magnetic charge -- remain. The result is similar to the abelian Higgs mechanism \cite{Tanzi:2021}.

Because of the presence of the magnetic charge it was not completely straightforward to 
apply the results of Henneaux and Troessaert \cite{Henneaux-ED} to the asymptotic degrees of freedom. Instead we had to carefully extract proper gauge degrees of freedom from the asymptotic structure and had to apply an asymptotic partial gauge in every magnetic sector as a prerequisite to the definition of the phase space in order for the boosts acting canonically.

By the same method as in \cite{Henneaux-ED} 
we found the asymptotic symmetry algebra of electromagnetism in every magnetic sector, including global angle-dependent $\mathfrak{u}(1)$ transformations.

We highlight that the extraction of the improper gauge transformations depends in a subtle way on the underlying functional analytic structure of the phase space. In a future work we will focus on developing a rigorous mathematical background for all these Hamiltonian theories.

In a recent work \cite{Emnote} by Fuentealba, Henneaux and Troessaert the fall-off conditions for the gauge parameter is weakened. It is found that by the extention to include $\sim\text{log}(r)$ and $\sim r$ growing terms in the angle-dependent $\mathfrak{u}(1)$ transformations together with conjugacy relations between these new terms with the terms of $\mathcal{O}_1$ and $\mathcal{O}_{r^{-1}}$, $C^{\infty}(S^2,\mathfrak{u}(1))$ will be decoupled from $\mathfrak{poin}$ in $\mathfrak{asym}$. This yields in particular that one can give a definition of the angular momentum that is free from ambiguities coming from $C^{\infty}(S^2,\mathfrak{u}(1))$.

As the asymptotic structure in the $\SU(2)$ Yang-Mills-Higgs case is that of electromagnetism of \cite{Henneaux-ED}, with the only difference being the presence of topological charges, we are optimistic that this generalization works out also for $\SU(2)$ Yang-Mills-Higgs theory.

Another interesting generalization would be to consider other gauge groups together with  symmetry-breaking fields. Does the method presented here generalize to these other cases? A case of particular interest is 
obviously the electroweak gauge group $\SU(2)_L\times\text{U}(1)_Y$.

Last but not least we mention the possibility 
of a different spacetime dimension. 
It is known that the behaviour found in \cite{Tanzi:2020} critically depends on 
that dimension being four. In the case considered here the choice of dimension certainly influences the topological 
structure of the Higgs vacuum, the impact of which remains to be analysed.

\appendix
\section{Fall-off from compactification}\label{AppendixFalloff}
The general idea is to compactify the Minkowski space and the fields should be smooth at the boundary. The usual method to compactify Minkowski space is conformal compactification of the null directions \cite[Chapter 11.1]{WaldGR}. Compactification means to embed Minkowki space $M$ into a manifold with boundary $\bar{M}$, such that the interior of $\bar{M}$ is diffeomorphic to $M$ and $\partial \bar{M}=\mathscr{J}^+\cup\mathscr{J}^-\cup i^0\cup i^-\cup i^+$, where $\mathscr{J}^{+,-}\cong S^2\times\mathbb{R}$ are called future- and past \textit{null infnity} and $i^{0,+,-}\cong\{p\}$ are called \textit{spatial infinity} and future- and past \textit{timelike infinity}. The null infinities are reached when one follows any null geodesic, the timelike infinities are reached when one follows any timelike geodesic and spatial infinity is reached when one follows any spacelike geodesic. A natural assumption for fields is smoothness at the boundary. In \cite{WaldSati} Satishchandran and Wald study the asymptotic behaviour of fields at null infinity.

The idea here is to make a smoothness assumption for the fields at spatial infinity. In the usual conformal compactification, spatial infinity consists only of a single point $i^0=\{p\}$. This is to restrictive for our purposes, which can be seen by an easy argument by Ashtekar and Hansen in their seminal paper \cite{Ashtekar}. The argument is the following: Take the Maxwell field of a free falling electric charge and a Cauchy hypersurface $\Sigma$ in which the charge is at rest. Look at the conformally rescaled field on the compact manifold $\Sigma\cup\, i^0\cong S^3$ without boundary. As the total electric charge in a closed manifold has to be zero, there has to be a mirror electric charge at $i^0$, which means that the rescaled Maxwell field has to diverge at $i^0$ and is in particular not smooth at $i^0$. Electrically charged solutions should be allowed by the boundary conditions for the fields (see Dyon solution \ref{Dyon}). A solution for this problem is to consider an extended manifold as spatial infinity.
In the paper \cite{Ashtekar} two different blow ups of $i^0$ are discussed. One of them contains the other. We consider the smaller one, because it is fitting for fields on Minkowki space where the metric itself is not dynamical.

The construction is the following: Take the usual compactification of Minkowski space with $i^0=\{p\}$ and consider the \textit{unit hyperboloid} $\mathcal{H}\subset T_p\bar{M}$ that consists of all spatial directions in $T_p\bar{M}$. Take $\mathcal{H}$ as the new spatial infinity. Let $\Sigma$ be a Cauchy hypersurface in $M$. With $\mathcal{H}$ as spatial infinity, $\Sigma$ will be embedded into the compact manifold with boundary $\bar{\Sigma}$, such that $\partial\bar{\Sigma}=S\cap\mathcal{H}\cong S^2$ where $S\subset T_p\bar{M}$ is a spatial 3-dimensional subspace.

Consider a conformal transformation $\Omega$ of the Riemannian metric $g$ on $\Sigma$, where $\tilde{g}:=\Omega^2 g$. We would like to choose the conformal factor $\Omega$ such that we can extend $\tilde{g}$ onto $\bar{\Sigma}$. Let $\bar{\gamma}$ be the round metric on $S^2$, then, after choosing spherical coordinates on $\Sigma\cong\mathbb{R}^3$, $g=\text{d}r\otimes\text{d}r+r^2\bar{\gamma}$. If $\Omega=\frac{1}{r}$ at large distances, $\lim_{r\rightarrow\infty}\tilde{g}=\bar{\gamma}$. We need to find an appropriate conformal transformation behaviour of the Yang-Mills field $\va_i$. As the conformally transformed theory should describe the same physics the Lagrangian function should have the same value, whether one works with the fields or the conformally transformed fields, i.e. for the Yang-Mills field:
\begin{align}\label{confcond}
\int_{\Sigma}\text{d}^3x\,\vec{F}^{\mu\nu}\cdot\vec{F}_{\mu\nu}=\int_{\Sigma}\text{d}^3x\,\tilde{g}^{\mu\rho}\tilde{g}^{v\sigma}\vec{F}'_{\rho\sigma}\cdot\vec{F}'_{\mu\nu},
\end{align}
where $\vec{F}'$ is the (3D)-conformally transformed Yang-Mills curvature. This relation is true for any configuration of the Yang-Mills field, hence this relation is fulfilled locally, i.e. $\vec{F}^{\mu\nu}\cdot\vec{F}_{\mu\nu}=\tilde{g}^{\mu\rho}\tilde{g}^{v\sigma}\vec{F}'_{\rho\sigma}\cdot\vec{F}'_{\mu\nu}$. As $\tilde{g}^{ij}=\Omega^{-2}g^{ij}$ and $\tilde{g}^{00}=g^{00}$, it has to be $\vec{F}'_{ij}=\Omega^2\vec{F}_{ij}$. As $\vec{F}'_{ij}=2^{-1}(\partial_{[i}\va'_{j]}+e\va'_{[i}\times\va'_{j]})$ and $\Omega^2=r^{-2}$, it has to be $\va'_i=\Omega\va_i$. The conformal factor depends on the index position, i.e. $\va'^{\,i}=\Omega^{-1}\va^i$.

The fundamental assumption is smoothness of $\va'$ at the spatial boundary $S\cap\mathcal{H}\cong S^2$. Working in the compact space $\bar{\Sigma}\cong\bar{\mathbb{B}}_{1}(0)\subset\mathbb{R}^3$ we can use Taylors theorem for expanding $\va'$ at the boundary $S^2$. Take the functions $\va'^{\,i}$ at first. In the neighbourhood of $S^2$ in $\bar{\mathbb{B}}_1(0)$ we choose the coordinates $(\Omega,x^{\bar{a}})$, where $\Omega=0$ at the boundary. If $\va'^{\,i}$ is smooth at the boundary, by Taylors theorem, it has in a neighbourhood of $S^2$ the following form:
\begin{align}
\va'^{\,i}(\Omega,x^{\bar{a}})=\sum_{n=0}^{k}\frac{\partial^n_{\Omega}\vec{A}'^{i}(\Omega,x^{\bar{a}})|_{\Omega=0}}{n!}\,\Omega^n+o(\Omega^k)\quad\forall k\in\mathbb{N}\backslash\{0\}.
\end{align}
Call $\frac{\partial_{\Omega}A'^{i}(\Omega,x^{\bar{a}})|_{\Omega=0}}{n!}=:\va'^{\,i}_{(n)}(x^{\bar{a}})$. Doing the same expansion for $\va'_i=\Omega^{2}\va'^{\,i}$, it has to be $\va'_i=\sum_{n=2}^k\va'^{\,(n)}_i\Omega^{n}+o(\Omega^k)$ $\forall k\geq 2$. Let's transform these expansions to the initial fields, i.e. with $\va'^{\,i}=\Omega^{-1}\va^i$ it follows:
\begin{align}
\label{Afalloff}
\va^i(r,x^{\bar{a}})=\sum_{n=0}^k\va'^{\,i}_{(n)}(x^{\bar{a}})\Omega^{n+1}+o(\Omega^{k+1})=\sum_{n=1}^{k}\frac{1}{r^{n}}\va^i_{(n-1)}(x^{\bar{a}})+o(r^{-k})\quad\forall k\in\mathbb{N}\backslash\{0\},
\end{align}
where $\va^i_{(n-1)}(x^{\bar{a}})=\va'^{\,i}_{(n)}(x^{\bar{a}})$. This matches also with the expansion of $\Omega^{-1}\va'_i=\va_i=\va^i$. Now by taking as the fundamental assumption for the asymptotic behaviour of $\va_i$ that $\va_i$ can be smoothly extended to the compactification of $\Sigma$ in the form of a closed ball $\bar{\mathbb{B}}_1(0)\subset\mathbb{R}^3$, $\va_i$ necessary has to have the fall-off behaviour \eqref{Afalloff}.\\

To make the fall-off conditions Poincar\'{e} invariant (see \ref{aspoin}) we have to choose the fall-off for the canonical momentum of the Yang-Mills field to be an asymptotic expansion starting at one order higher in $r^{-1}$. The resulting fall-off conditions we get by this second approach are:
\begin{align}
\va_i(r,x^{\bar{a}})&=\sum_{n=1}^{\infty}\frac{1}{r^n}\va_i^{(n-1)}(x^{\bar{a}})+o(r^{-\mathbb{N}}),\\
%\label{afalloff}\\
\vpi^i(r,x^{\bar{a}})&=\sum_{n=1}^{\infty}\frac{1}{r^{n+1}}\vpi^i_{(n-1)}(x^{\bar{a}})+o(r^{-\mathbb{N}}),
%\label{pifalloff},
\end{align}
where $o(r^{-\mathbb{N}})$ is a different notation for $o(r^{-n})$ $\forall n\in\mathbb{N}$.\\

%These fall-off conditions are similar but more restrictive then the conditions \eqref{aboundary} and \eqref{piboundary}. Most importantly, they make the terms $\int\text{d}^3x\,(2^{-1}\vpi^i\cdot\vpi_i+4^{-1}\vec{F}^{ij}\cdot\vec{F}_{ij})$ of the Hamiltonian finite. But they also show in a natural way why the next-to highest order terms should be a unit integer step lower in the exponent then the highest order term, like it was assumed to get to the conditions \eqref{aboundary} and \eqref{piboundary}. Also the assumption that the symplectic form is at most logarithmically divergent is covered by the smoothness assumption at the spatial boundary.
%\begin{bem}
%	One could recover the fall-off conditions \eqref{aboundary} and \eqref{piboundary} by assuming that the conformally transformed Yang-Mills fields are only $C^1$ at the boundary $\partial\bar{\Sigma}$. Then Taylors theorem gives only an expansion to the first order in $\Omega$ and the fall-off conditions for $\va_i$ and $\vpi^i$ are exactly \eqref{aboundary} and \eqref{piboundary}.
%\end{bem}
As for the Yang-Mills field we choose the extension of the Higgs field onto $\bar{\Sigma}\cong\bar{\mathbb{B}}_1(0)$ to be smooth at the boundary. Let's determine the spatial conformal transformation of $\vp$. In order for the Lagrangian to stay the same we certainly have $D'^{\,i}\vp'\cdot D'_i\vp'=D^i\vp\cdot D_i\vp$, where ' marks the conformal transformation. We have seen before, that $\va'^{\, i}=\Omega^{-1}\va^i$ and $\va'_i=\Omega\va_i$, so especially $(\va'^{\,i}\times\vp')\cdot(\va'_i\times\vp')=(\va^{\,i}\times\vp)\cdot(\va_i\times\vp)$ $\Leftrightarrow$ $\vp(x)=\vp'(x)$ $\forall x\in\Sigma$. If $\vp$ is uniformly smooth as $r\rightarrow\infty$, which is certainly the case when $\vp'$ is smooth at the boundary, $\partial_i\vp'=\Omega\,\partial_i\vp$. $\partial^i\vp'=\Omega^{-1}\partial^i\vp$ by $\tilde{g}=\Omega^2 g$. It is also clearly $V(\vp')=V(\vp)$. By the assumption that $\vp'$ is smooth at the boundary, there exists a Taylor expansion in $\Omega$, i.e.
\begin{align}
\vp'(\Omega,x^{a})=\vp_{\infty}(x^{\bar{a}})+\sum_{n=1}^k\Omega^n\vp^{(n)}(x^{\bar{a}})+o(\Omega^k)\;\forall k\in\mathbb{N}.
\end{align}
This is equivalent to the fall-off behaviour
\begin{align}
\vp(r,x^{\bar{a}})=\vp_{\infty}(x^{\bar{a}})+\sum_{n=1}^{\infty}\frac{1}{r^n}\vp^{(n)}(x^{\bar{a}})+o(r^{-\mathbb{N}}).
\end{align}

\section{Winding number}\label{appendix1}
%The aim of this chapter is to prove a theorem about continuous functions $f:S^2\rightarrow S^2$, which states that such functions, that are \textit{pseudoeven} under the antipodal map have a even winding number, while \textit{pseudoodd} ones have a odd \textit{winding number}. We have yet to define the notions of pseudoeven, pseudoodd and winding number.\\
Let $f:M\rightarrow N$ be a smooth map between two manifolds $M,N$ of the same dimension $n$, which are also connected, closed and oriented.

In this chapter we define two different notions of a \textit{winding number} of $f$ and point out that these notions are equivalent.\\

At first, let's define the \textit{Brouwer degree}.

The following definitions and proposition A.1 until the main theorem A.2 give a quick introduction into the notion of the Brouwer degree, following the introduction Pierre de Harpe has given in his paper \cite{laHarpe}.

Let $f:M\rightarrow N$ be $C^{\infty}$ and let $y\in N$ be a regular value of $f$, i.e. $Df_x$ is invertible $\forall x\in f^{-1}(\{y\})$.The result of Sards theorem for $C^{\infty}$ functions between manifolds of the same dimension is that $N\backslash\{\text{regular values}\}$ is a Lebesque measure zero set\footnote{Null sets can obviously be defined independent of the choice of a Lebesque measure coming from a particular atlas.}. Therefore $\{\text{regular value}\}\subset N$ is dense.

Using the inverse function theorem, $f$ is a diffeomorphism around every $x\in f^{-1}(\{y\})$. Therefore $f^{-1}(\{y\})$ has to be a discrete set. $M$ is compact, which forces $f^{-1}(\{y\})$ to be finite. At every $x\in f^{-1}(\{y\})$ the diffeomorphism $f|_{U_x}$ can either be orientation preserving or orientation reversing. Define $\varepsilon_x(f)$ to be $1$ if $f|_{U_x}$ is orientation preserving and $-1$ if it is reversing.
\begin{defi}\label{locbdeg}
	The \textit{local brouwer degree} of a $C^{\infty}$-map $f:M\rightarrow N$ at a regular value $y\in N$ is:
	\begin{align}
	\text{deg}_y(f)=\sum_{x\in f^{-1}(\{y\})}\varepsilon_x(f)\in\mathbb{Z}.
	\end{align}
\end{defi}
The sum is well defined, because $f^{-1}(\{y\})$ is finite. The Brouwer degree is actually a well defined topological property of the map, i.e.
\begin{prop}\label{locbdegp}
	(1) $\text{deg}_y(f)$ is independent of the choice of regular value $y\in N$.
	
	(2) If $f':M\rightarrow N$ is $C^1$ and homotopic to $f$, then $\text{deg}(f')=\text{deg}(f)$.\label{degreeprop}
\end{prop}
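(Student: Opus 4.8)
The plan is to reduce both assertions to two standard lemmas and then combine them. The first is a \emph{Homotopy Lemma}: if $f,f'\colon M\to N$ are smoothly homotopic and $y\in N$ is a regular value of \emph{both}, then $\text{deg}_y(f)=\text{deg}_y(f')$. The second is a \emph{Homogeneity Lemma}: for any two points $p,q$ in the connected manifold $N$ there is a diffeomorphism $h\colon N\to N$ that is smoothly isotopic to the identity (and hence orientation-preserving) and satisfies $h(p)=q$. Granting these, I would argue as follows. For claim (1), let $y,z$ be two regular values; choose $h$ isotopic to $\mathrm{id}_N$ with $h(y)=z$. Since $(h\circ f)^{-1}(z)=f^{-1}(y)$ and $h$ preserves orientation, the combinatorial Definition \ref{locbdeg} gives at once $\text{deg}_z(h\circ f)=\text{deg}_y(f)$. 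On the other hand $h\circ f$ is homotopic to $f$ through the isotopy, and $z$ is a regular value of both $f$ and $h\circ f$, so the Homotopy Lemma yields $\text{deg}_z(h\circ f)=\text{deg}_z(f)$. Chaining these equalities gives $\text{deg}_y(f)=\text{deg}_z(f)$, so the degree is independent of the regular value. For claim (2), note first that because $M$ is compact the critical points form a closed, hence compact, set, so the critical \emph{values} are compact and the regular values form an open dense subset of $N$; intersecting the open dense sets of regular values of $f$ and of $f'$ produces a common regular value $y$, and the Homotopy Lemma together with claim (1) gives $\text{deg}(f')=\text{deg}_y(f')=\text{deg}_y(f)=\text{deg}(f)$.

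It remains to sketch the two lemmas. For the Homotopy Lemma I would use a cobordism argument. Given a smooth homotopy $F\colon M\times[0,1]\to N$ with $F(\cdot,0)=f$ and $F(\cdot,1)=f'$, I would first perturb $F$ rel boundary so that $y$ becomes a regular value of $F$ as well; this is possible because $y$ is already regular for the two boundary restrictions, and Sard's theorem lets me push the interior into general position without disturbing the ends. Then $F^{-1}(y)$ is a compact $1$-manifold with boundary, and its boundary is exactly $f^{-1}(y)\times\{0\}\sqcup (f')^{-1}(y)\times\{1\}$. A compact $1$-manifold is a finite disjoint union of circles and closed arcs; only the arcs meet the boundary, and each arc contributes exactly two boundary points. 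Orienting each arc and comparing the orientation convention at its two endpoints with the local sign $\varepsilon_x$ from Definition \ref{locbdeg}, one finds that the two endpoints of an arc contribute with opposite net sign to the difference $\text{deg}_y(f)-\text{deg}_y(f')$, so all contributions cancel and the two local degrees agree.

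For the Homogeneity Lemma I would introduce the relation $p\sim q$ iff some diffeomorphism isotopic to the identity carries $p$ to $q$; this is plainly an equivalence relation. Its classes are open: inside any chart one realises an arbitrary small translation as the time-one flow of a compactly supported vector field, which extends by the identity to an isotopy of $N$ supported in that chart. Connectedness of $N$ then forces a single class, proving the lemma.

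The main obstacle is the sign bookkeeping in the cobordism argument: one must fix compatible orientation conventions on $M\times[0,1]$, on the regular fibre $F^{-1}(y)$, and on its boundary, so that the ``incoming'' and ``outgoing'' endpoints of each arc carry opposite signs and match $\varepsilon_x$ on the two ends. Closely tied to this is the clean execution of the rel-boundary perturbation making $y$ regular for $F$ while keeping it regular for $f$ and $f'$; once these transversality and orientation points are pinned down, the rest is formal.
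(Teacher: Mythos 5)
Your proposal is correct, but it is genuinely different from what the paper does: the paper offers no proof at all for this proposition, deferring entirely to the citation \cite[Proposition 1.2]{laHarpe}. What you have written out is the classical self-contained argument of Milnor's \emph{Topology from the Differentiable Viewpoint} (essentially the same toolkit the cited reference rests on): the Homotopy Lemma, proved by making $y$ a regular value of the homotopy $F$ and reading off cancellation of signs at the two ends of each arc of the compact $1$-manifold $F^{-1}(y)$, and the Homogeneity Lemma, proved by showing the relation ``carried to by a diffeomorphism isotopic to the identity'' has open classes and invoking connectedness; chaining them by precomposing with an orientation-preserving $h$ that moves one regular value to the other is exactly the standard reduction of claim (1) to claim (2). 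Your route buys self-containedness and makes visible where compactness, orientation, and connectedness of $M$ and $N$ actually enter; the paper's citation buys brevity and sidesteps the transversality and sign-convention bookkeeping you rightly identify as the delicate part. One point you should tighten before calling the argument complete: statement (2) permits $f'$ to be merely $C^1$ and the homotopy to be merely continuous, so your cobordism argument (which needs a smooth, or at least $C^1$, homotopy transverse to $y$) must be preceded by a Whitney-type approximation step — replace $f'$ and the homotopy by smooth ones that are $C^1$-close, which leaves all the local degrees $\varepsilon_x$ unchanged; this is standard (see \cite{milnor}) but is a genuine hypothesis gap between what your lemmas assume and what the proposition asserts.
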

\begin{proof} See \cite[Proposition 1.2.]{laHarpe}

\end{proof}

Now we make another definition of a \textit{degree}. The following definition is taken from the book \cite{Manton}, section 3.2.
\begin{defi}\label{degree2}
	Let $M$, $N$ be closed, oriented and connected manifolds of the same dimension. Let $f:M\longrightarrow N$ be a smooth map. Let $\Omega$ be a normalized volume form on $N$, i.e.\ $\int_{N}\Omega=1$. This volume form might come from a Riemannian metric, which always exists on smooth finite dimensional manifold. Consider the pull back $f^*\Omega$. The \textit{degree} of $f$ is defined as
	\begin{align}
	\text{deg}(f):=\int_{M}f^*\Omega.
	\end{align}
\end{defi}
\begin{beispiel}
	Take the Higgs vacuum $\vp_{\infty}:S^2\longrightarrow S^2$ and the standart volume form $\Omega$ on the target $S^2$. It is a basic calculation to see that
	%The following calculation is basic differential geometry, but taken from (\cite{Cork}, section 2.1.1.). Let the target $S^2$ be embedded in $\mathbb{R}^3$ and choose standart coordinates $(x^1,x^2,x^3)$ on $\mathbb{R}^3$. In these coordinates $\Omega$ takes the form:
	%\begin{align}
	%\Omega=\frac{1}{4\pi}i_{n}(\text{d}x^1\wedge\text{d}x^2\wedge\text{d}x^3)=\frac{1}{4\pi}(x^1\text{d}x^2\wedge\text{d}x^3+x^2\text{d}x^3\wedge\text{d}x^1+x^1\text{d}x^2\wedge\text{d}x^3),
	%\end{align}
	%where $n=x^i\frac{\partial}{\partial x^i}$ is the normal vector field to the sphere and the factor $(4\pi)^{-1}$ is given to normalize the volume form.
	%\begin{align}
	%\Rightarrow\;4\pi\vp_{\infty}^*(\Omega)&=\phi_{\infty}^1\text{d}\phi_{\infty}^2\wedge\text{d}\phi_{\infty}^3+\phi^2\text{d}\phi^3\wedge\text{d}\phi^1+\phi^1\text{d}\phi^2\wedge\text{d}\phi^3\nonumber\\
	%&=\vp_{\infty}\cdot(\text{d}\vp_{\infty}\times\wedge\;\text{d}\vp_{\infty})=\frac{1}{2}\varepsilon^{\bar{a}\bar{b}}\vp_{\infty}\cdot(\partial_{\bar{a}}\vp_{\infty}\times\partial_{\bar{b}}\vp_{\infty}),
	%\end{align}
	%where $\varepsilon^{\bar{a}\bar{b}}$ are the components of the standart volume form on the argument sphere in spherical coordinates.
	\begin{align}
	\text{deg}(\vp_{\infty})=\frac{1}{4\pi}\int_{S^2}\text{d}^2x\,\frac{1}{2}\varepsilon^{\bar{a}\bar{b}}\vp_{\infty}\cdot(\partial_{\bar{a}}\vp_{\infty}\times\partial_{\bar{b}}\vp_{\infty}),
	\end{align}
	which exactly the magnetic charge from the section \ref{gauge}.
\end{beispiel}
The following theorem is taken from \cite{Manton}, section 3.2:
\begin{theorem}
	For smooth maps $f:M\longrightarrow N$, the degree defined in \ref{degree2} is the same as the Brouwer degree (definition \ref{locbdeg}).
\end{theorem}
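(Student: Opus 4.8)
The plan is to compute the analytic degree $\int_M f^*\Omega$ of Definition~\ref{degree2} directly and to show that it equals the integer $\deg_y(f)$ of Definition~\ref{locbdeg} for an arbitrarily chosen regular value $y\in N$. The first and conceptually most important step is to observe that $\int_M f^*\Omega$ does not depend on the choice of normalized volume form $\Omega$. Indeed, since $N$ is connected, closed and oriented of dimension $n$, one has $H^n_{\mathrm{dR}}(N)\cong\real$, so any two normalized top-forms $\Omega,\Omega'$ with $\int_N\Omega=\int_N\Omega'=1$ satisfy $\Omega'-\Omega=\extder\alpha$ for some $(n-1)$-form $\alpha$. Because $M$ is closed, Stokes' theorem gives $\int_M f^*(\extder\alpha)=\int_M\extder(f^*\alpha)=0$, whence $\int_M f^*\Omega'=\int_M f^*\Omega$. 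This freedom is what makes the comparison with the combinatorial Brouwer degree possible.

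Next I would exploit this freedom by choosing $\Omega$ to be concentrated in a small coordinate ball $V$ around the regular value $y$, still normalized so that $\int_V\Omega=1$. By the inverse function theorem and the finiteness of $f^{-1}(\{y\})=\{x_1,\dots,x_k\}$ already established before Definition~\ref{locbdeg}, one may shrink $V$ so that $f^{-1}(V)$ is a disjoint union of open neighborhoods $U_i\ni x_i$, each mapped diffeomorphically onto $V$ by $f$. The integral then localizes:
\begin{align}
\int_M f^*\Omega=\sum_{i=1}^k\int_{U_i}f^*\Omega.
\end{align}
On each $U_i$ the change-of-variables formula for the diffeomorphism $f|_{U_i}$ contributes $\int_V\Omega=1$ multiplied by the sign of the Jacobian determinant, which is precisely $\varepsilon_{x_i}(f)=\pm1$ according to whether $f|_{U_i}$ preserves or reverses orientation. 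Summing gives $\int_M f^*\Omega=\sum_{i=1}^k\varepsilon_{x_i}(f)=\deg_y(f)$, which is the asserted equality.

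The main obstacle is not any single calculation but rather the careful justification of the two reductions. One must verify that a normalized volume form supported in an arbitrarily small neighborhood of $y$ actually exists (a standard bump-form construction using a partition of unity and the orientation of $N$), and that the orientation sign produced by the change-of-variables formula matches the definition of $\varepsilon_{x_i}(f)$ in terms of orientation-preserving versus orientation-reversing local diffeomorphisms. Both points are routine in differential topology but require the hypotheses that $M$ and $N$ are closed, oriented and of equal dimension, and that $y$ is a genuine regular value so that each $f|_{U_i}$ is a diffeomorphism. Once these are in place, Proposition~\ref{locbdegp}(1) guarantees that $\deg_y(f)$ is independent of $y$, so the identity $\int_M f^*\Omega=\deg_y(f)$ holds for every regular value and identifies the two notions.
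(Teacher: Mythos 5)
Your proof is correct and complete: the two steps (independence of the choice of normalized volume form via $H^n_{\mathrm{dR}}(N)\cong\real$ and Stokes' theorem, followed by localization of $\Omega$ near a regular value and the signed change-of-variables count) constitute the standard argument for this equivalence. The paper itself gives no proof but merely cites Manton's book, and the argument found there is essentially the one you have written out, so your proposal matches the intended proof.
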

\begin{proof}
	See \cite[Section 3.3]{Manton}.
\end{proof}

\section{Topological structure of the phase space}

On the path to construct a phase space we have yet to make sure that the symplectic form is finite. For that it helps to understand the global structure of the \textit{proto} phase space $\mathbb{P}^{\text{proto}}$ (defined by the conditions \eqref{fatboundary}).
\begin{prop}\label{vectorbundle} Assuming that $\mathbb{P}^{\text{proto}}$ already has the structure of a smooth manifold, $\mathbb{P}^{\text{proto}}$ fibres over the Fr\'{e}chet manifold $\mathcal{F}_{\{\vp_{\infty}\}}\cong C^{\infty}(S^2,S^2)$, which is the set of Higgs vacua $\vp_{\infty}$. In the $C^{\infty}$ compact-open topology on $C^{\infty}(S^2,S^2)$ (which underlies the manifold structure),
\begin{align}
	\mathcal{F}_{\{\vp_{\infty}\}}=\bigsqcup_{m\in\mathbb{Z}}\mathcal{F}_{\{\vp_{\infty}\}}^{(m)},
\end{align}
where $\mathcal{F}_{\{\vp_{\infty}\}}^{(m)}$ are path-connected components.
\end{prop}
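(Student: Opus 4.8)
The plan is to prove the statement in two stages: first exhibit the projection $p:\mathbb{P}^{\text{proto}}\to\mathcal{F}_{\{\vp_{\infty}\}}$ as a locally trivial fibration, and then analyse the path components of the base. First I would define $p$ to send a configuration $(A,\pi,\phi,\Pi)\in\mathbb{P}^{\text{proto}}$ to the leading coefficient $\vp_{\infty}$ of the asymptotic expansion of $\vp$ prescribed by \eqref{fatboundary}. Since those conditions force $\|\vp_{\infty}\|^2=a^2$ pointwise, rescaling by $1/a$ identifies each $\vp_{\infty}$ with a smooth map $S^2\to S^2$, so that $\mathcal{F}_{\{\vp_{\infty}\}}\cong C^{\infty}(S^2,S^2)$; that this mapping space is a Fréchet manifold in the $C^{\infty}$ compact-open topology is a standard result on manifolds of mappings between compact manifolds (\cite{Neeb}, \cite{wockel}). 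The map $p$ is surjective: given any $\vp_{\infty}$ one produces a configuration by setting $\vp=\vp_{\infty}$, taking all free data $\bar{A}_i,\bar{\pi}^i,\vec{\Pi}$ and all higher orders to vanish, and fixing $\vao_i$ through \eqref{barrbara}. The monopole term $-(ea^2)^{-1}\vp_{\infty}\times\partial_i\vp_{\infty}$ is globally well defined on $S^2$, so no topological obstruction arises, and the fibre over $\vp_{\infty}$ is by construction $\mathbb{P}^{\text{proto}}_{\vp_{\infty}}$.

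To upgrade $p$ to a locally trivial fibration (granting, as assumed, that $\mathbb{P}^{\text{proto}}$ carries a smooth manifold structure) I would use the proper gauge action of Proposition \ref{epsilonorbit}. Over a chart $U$ around a base point $\vp_{\infty}^{0}$, each nearby $\vp_{\infty}$ is joined to $\vp_{\infty}^{0}$ by a smooth path whose tangent vectors $\vec{v}$ are realised by the perpendicular proper gauge generators $\vec{\epsilon}_{(0)}^{\perp}=(ea^2)^{-1}\vp_{\infty}\times\vec{v}$; the canonical lift of that path established in the proof of Proposition \ref{epsilonorbit} furnishes a diffeomorphism between the fibres $\mathbb{P}^{\text{proto}}_{\vp_{\infty}^{0}}$ and $\mathbb{P}^{\text{proto}}_{\vp_{\infty}}$ depending smoothly on $\vp_{\infty}\in U$. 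Collecting these identifications yields a local trivialization $p^{-1}(U)\cong U\times\mathbb{P}^{\text{proto}}_{\vp_{\infty}^{0}}$.

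For the decomposition of the base I would invoke the Brouwer degree of Appendix \ref{appendix1}, setting $\mathcal{F}_{\{\vp_{\infty}\}}^{(m)}:=\{\vp_{\infty}:\deg(\vp_{\infty})=m\}$ via Definition \ref{degree2}. The integral formula $\deg(\vp_{\infty})=\int_{S^2}\vp_{\infty}^{*}\omega$ is continuous in $\vp_{\infty}$ in the $C^{\infty}$ topology and integer-valued by Proposition \ref{locbdegp}, hence locally constant; therefore the $\mathcal{F}^{(m)}$ are mutually disjoint clopen subsets covering $\mathcal{F}_{\{\vp_{\infty}\}}$. That each $\mathcal{F}^{(m)}$ is \emph{path}-connected is the Hopf degree theorem: since $[S^2,S^2]\cong\pi_2(S^2)\cong\mathbb{Z}$ with the degree a complete homotopy invariant, any two $\vp_{\infty},\vp_{\infty}'$ of equal degree are continuously homotopic, and such a homotopy can be smoothed by Whitney approximation to a smooth homotopy, i.e. a genuine path in $C^{\infty}(S^2,S^2)$. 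Non-emptiness of every $\mathcal{F}^{(m)}$ follows from the explicit maps constructed in Proposition \ref{oddvol}.

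The hard part will be the local triviality in the infinite-dimensional setting: one must check that the path-lift along the proper gauge generators depends smoothly on $\vp_{\infty}$ in the Fréchet topology and patches into an honest bundle chart, which is precisely where the Bastiani-calculus input of \cite{Neeb} is needed. The topological half, by contrast, is classical once the degree is recognised as a complete homotopy invariant and continuous homotopies between smooth maps are smoothed.
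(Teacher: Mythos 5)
Your treatment of the base space is correct, though it takes a slightly different route from the paper. You establish the decomposition $\mathcal{F}_{\{\vp_{\infty}\}}=\bigsqcup_m\mathcal{F}^{(m)}_{\{\vp_{\infty}\}}$ by showing each degree class is clopen (continuity of the integral formula for $\deg$ in the $C^{\infty}$ compact-open topology, integer-valuedness) and path-connected (Hopf degree theorem plus Whitney smoothing of a continuous homotopy). The paper instead argues both directions through exponential laws: smooth homotopy $\Leftrightarrow$ smooth path via \cite[Theorem 7.6 (e)]{wockel}, and continuous path in the $C^{\infty}$ compact-open topology $\Rightarrow$ continuous path in the coarser $C^0$ compact-open topology $\Rightarrow$ continuous homotopy via \cite[Theorem 1.6]{brown}. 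Your clopen argument is a clean substitute for the paper's second direction; but note that your final step, passing from a smooth homotopy $H:S^2\times[0,1]\to S^2$ to ``a genuine path in $C^{\infty}(S^2,S^2)$'' \emph{continuous in the $C^{\infty}$ compact-open topology}, is exactly the content of the exponential law and should be cited rather than asserted — it is not automatic that the adjoint map $t\mapsto H(\cdot,t)$ is continuous into the function space with this topology.

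The fibration half has a genuine gap. What the paper means (and proves) by ``fibres over'' is that the projection $P:(\va,\vpi,\vp,\vec{\Pi})\mapsto\vp_{\infty}$ is a surjective \emph{submersion} in the Bastiani sense: it computes $DP_{\vp_{\infty}}(\vp,\vec{h})=P_{\vp_{\infty}}(\vec{h})$, checks joint continuity of the derivative using \cite[Lemma II.4]{LocFunc}, and concludes $P$ is Bastiani $C^1$ with surjective derivative. You never address the differentiability of $p$ or the surjectivity of its derivative at all; instead you aim at the strictly stronger statement of local triviality, and then explicitly concede that the decisive step — smooth dependence of the gauge-orbit path-lift on the base point and the patching of these lifts into honest bundle charts — is ``the hard part'' left unproven. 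So, as written, your proposal establishes surjectivity of $p$ and the component decomposition, but neither the submersion property (the paper's actual claim) nor local triviality (your claim). There is also an ordering problem in your appeal to Proposition \ref{epsilonorbit}: in the paper, its proof invokes the smooth-homotopy arguments from the proof of the present proposition and already speaks of the fibres $\mathbb{P}^{\text{proto}}_{\vp_{\infty}}$ and of lifts along them, so using it here to construct the fibration is logically backwards; your independent Hopf argument could in principle break that cycle, but you would have to restructure the proof to make this explicit. The direct computation of $DP$ in Bastiani calculus is both easier and sufficient, and is what the statement actually requires.
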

\begin{proof}
	We stay short on the mathematical details here, but all the information is available in the references.
	
	By \cite[Theorem 7.6 (b)]{wockel}, $C^{\infty}(S^2,S^2)$ is a Fr\'{e}chet manifold in the $C^{\infty}$ compact-open topology.
	
	By the theorem of Hopf \cite[p.51]{milnor}, $f,g\in C^{\infty}(S^2,S^2)$ have the same degree $m\in\mathbb{Z}$ (see \ref{appendix1}) if and only if $f$ and $g$ are smoothly homotopic. By the exponential law for function spaces \cite[Theorem 7.6 (e)]{wockel}, $f$ and $g$ are connected by a smooth homotopy if and only if $f$ and $g$ are connected by a smooth path in $C^{\infty}(S^2,S^2)$. $\Rightarrow$ $f$ and $g$ lie in the same path-connected component.
	
	Let $f,g\in C^{\infty}(S^2,S^2)$ be connected by a continuous path $\gamma$ in the $C^{\infty}$ c.-o. topology. As the $C$ c.-o. topology is coarser as the $C^{\infty}$ c.-o. topology, $\gamma$ is also a continuous path in the $C$ c.-o. topology. By the exponential law for the $C$ c.-o. topology \cite[Theorem 1.6]{brown}, $f$ and $g$ are connected by a continuous curve if and only if $f$ and $g$ are connected by a continuous homotopy. $\Rightarrow$ $\text{deg}(f)=\text{deg}(g)$ (see \ref{appendix1}).\\
	
	To see that $\mathbb{P}^{\text{proto}}$ fibres over $\mathcal{F}_{\{\vp_{\infty}\}}$ we have to construct a surjective submersion $P:\mathbb{P}^{\text{proto}}\longrightarrow\mathcal{F}_{\{\vp_{\infty}\}}$. Take the projection map $P:(\va,\vpi,\vp,\vec{\Pi})\mapsto\vp_{\infty}$ which can be written as $P=(0,0,P_{\vp_{\infty}},0)$, where
	\begin{align}
	    P_{\vp_{\infty}}: \vp=\text{o}(r^{-\mathbb{N}})+\vp_{\infty}\mapsto\vp_{\infty}.
	\end{align}
	In local coordinates on the manifold of the Higgs field $\{\vp\}$ modelled on the locally convex space $E$ (see \cite{Neeb}) it is easy to compute the derivative in the direction $\vec{h}\in E$, i.e.\ $DP_{\vp_{\infty}}(\vp,\vec{h})=P_{\vp_{\infty}}(\vec{h})$. Using \cite[Lemma II.4]{LocFunc} it is easy to see that $DP_{\vp_{\infty}}$ is continuous in $(\vp,\vec{h})\in E\times E$\footnote{In more detail, use that $\sup_{x\in S^2}\|P_{\vp_{\infty}}(\vec{h})(x)\|\leq\sup_{x\in\mathbb{R}^3}\|h(x)\|$ is constant in $\vp$ and the relation $C^{\infty}(\mathbb{R}^3,\mathbb{R}^3)\times C^{\infty}(\mathbb{R}^3,\mathbb{R}^3)\cong C^{\infty}(\mathbb{R}^3,\mathbb{R}^3\oplus\mathbb{R}^3)$ using the standart topologies.}. Hence $P_{\vp_{\infty}}$ is Bastiani $C^1$ \cite{Neeb}. Clearly $DP_{\vp_{\infty}}$ as well as $P_{\vp_{\infty}}$ are surjective. It follows that $P$ is a surjective submersion.
	
	%It is easy to see that $\mathbb{P}^{\text{proto}}$ is a vector bundle over $\mathcal{F}_{\{\vp_{\infty}\}}$ by defining the topology on $\mathbb{P}^{\text{proto}}$ to be induced by homeomorphisms $\varphi_{\vp_{\infty}}:\mathcal{U}_{\vp_{\infty}}\times V\rightarrow \pi^{-1}(\mathcal{U}_{\vp_{\infty}})\subset\mathbb{P}^{\text{proto}}$, where $\pi$ is the bundle projection, $\mathcal{U}_{\vp_{\infty}}$ is an open neighbourhood around $\vp_{\infty}$ and $V=\{((A_i,\va_i^{\perp}),(\pi^i,\vpi^{i\perp}),(\phi,\vp^{\perp}),\vec{\Pi})\}$ is a locally convex vector space with $V\cong\mathbb{P}_{\vp_{\infty}}:=\pi^{-1}(\{\vp_{\infty}\})$.
\end{proof}

		\bibliography{biblio}{}
		\bibliographystyle{JHEP-fullnames}   
		
	\end{singlespace}
\end{document}